\def\defeq{\stackrel{\triangle}{=}}
\def\bI{{\mathbf I}}                
\def\bH{{\mathbf H}}                
\def\bHeq{{\mathbf{H_{e}}}}     
\def\bx{{\mathbf x}}                
\def\by{{\mathbf y}}                
\def\bn{{\mathbf n}}                
\def\bW{{\mathbf W}}                
\def\bfunc{{\boldsymbol{f}}}        
\def\prob{{\mathbb P}}              
\def\expect{{\mathbb E}}            
\def\balpha{{\boldsymbol{\alpha}}}  
\def\bQ{{\boldsymbol{Q}}}  
\def\b0{{\boldsymbol{0}}}
\title{Diversity of MMSE MIMO Receivers}
\author{Ahmed Hesham Mehana, {\em Student Member, IEEE}, and Aria Nosratinia, {\em Fellow, IEEE}
\thanks{ The authors are with the Department
  of Electrical Engineering, the University of Texas at Dallas,
  Richardson, TX 75083-0688 USA, E-mail:
  ahmed.mehana@student.utdallas.edu; aria@utdallas.edu. }
}
\begin{document}

\maketitle

\begin{abstract}

In most MIMO systems, the family of waterfall error curves, calculated
at different spectral efficiencies, are asymptotically parallel at
high SNR. In other words, most MIMO systems exhibit a single diversity
value for all fixed rates. The MIMO MMSE receiver does not follow this
pattern and exhibits a varying diversity in its family of error
curves. 
This work analyzes this interesting behavior of the MMSE MIMO receiver
and produces the MMSE MIMO diversity at all rates. The diversity of
the quasi-static flat-fading MIMO channel consisting of any arbitrary
number of transmit and receive antennas is fully characterized,
showing that full spatial diversity is possible if and only if the
rate is within a certain bound which is a function of the number of
antennas. For other rates, the available diversity is fully
characterized. At sufficiently low rates, the MMSE receiver has a
diversity similar to the maximum likelihood receiver (maximal
diversity), while at high rates it performs similarly to the
zero-forcing receiver (minimal diversity). Linear receivers are also
studied in the context of the MIMO multiple access channel
(MAC). Then, the quasi-static frequency selective MIMO channel is
analyzed under zero-padding (ZP) and cyclic-prefix (CP) block
transmissions and MMSE reception, and lower and upper bounds on
diversity are derived. For the special case of SIMO under CP, it is
shown that the above-mentioned bounds are tight.

\end{abstract}

\begin{keywords}
MIMO, linear receiver, MMSE, diversity
\end{keywords}


\section{Introduction}

Linear receivers are widely used for their low complexity compared to
maximum likelihood (ML) receivers.  In the context of MIMO systems,
linear receivers such as the minimum mean square error (MMSE) receiver
are adopted in some of the emerging standards, e.g. IEEE 802.11n and
802.16e. Therefore the analysis of MMSE receivers is strongly
motivated by both theoretical and practical considerations.

A significant amount of research has focused on linear receivers,
however, their performance is not fully understood in the MIMO
channel. For instance, the distribution of the output
signal-to-interference-plus-noise ratio (SINR) of the linear MIMO
receiver is still unknown except in asymptotic regimes (large number
of antennas, and high/low
SNR)~\cite{LiCioffi:JIT06,Moustakas:VTC09,Boche:Eura07,Jiang:JIT07}.
The outage and diversity of MMSE receiver have also been a subject of
interest. It has been
observed~\cite{Onggosanusi:ICAS02,Hedayat:JSP07,Kumar:JIT09} that
while the MMSE receiver can extract the full spatial diversity of the
MIMO quasi-static channel at low rates, it does not enjoy this feature
at high rates.

Figure~\ref{Fig5} shows the outage probabilities (for various spectral
efficiencies $R$ bps/Hz) of MMSE and ML receivers
respectively. Clearly, one of the main differences between the two
characteristics is the slope of the error curves, i.e., the {\em
  diversity}. Figure~\ref{Fig5} shows that in a $2\times 2$ MIMO
system the ML receiver achieves diversity 4 at all rates. However, the
MMSE receiver diversity varies with the operating spectral
efficiency. From a system design perspective, obtaining the MMSE
diversity is important in order to understand the broad tradeoffs
involved in the determination of the operating point of the system and
predicting its performance.

\begin{figure*}
\begin{minipage}{3.5in}
\centering
\includegraphics[width=3.5in]{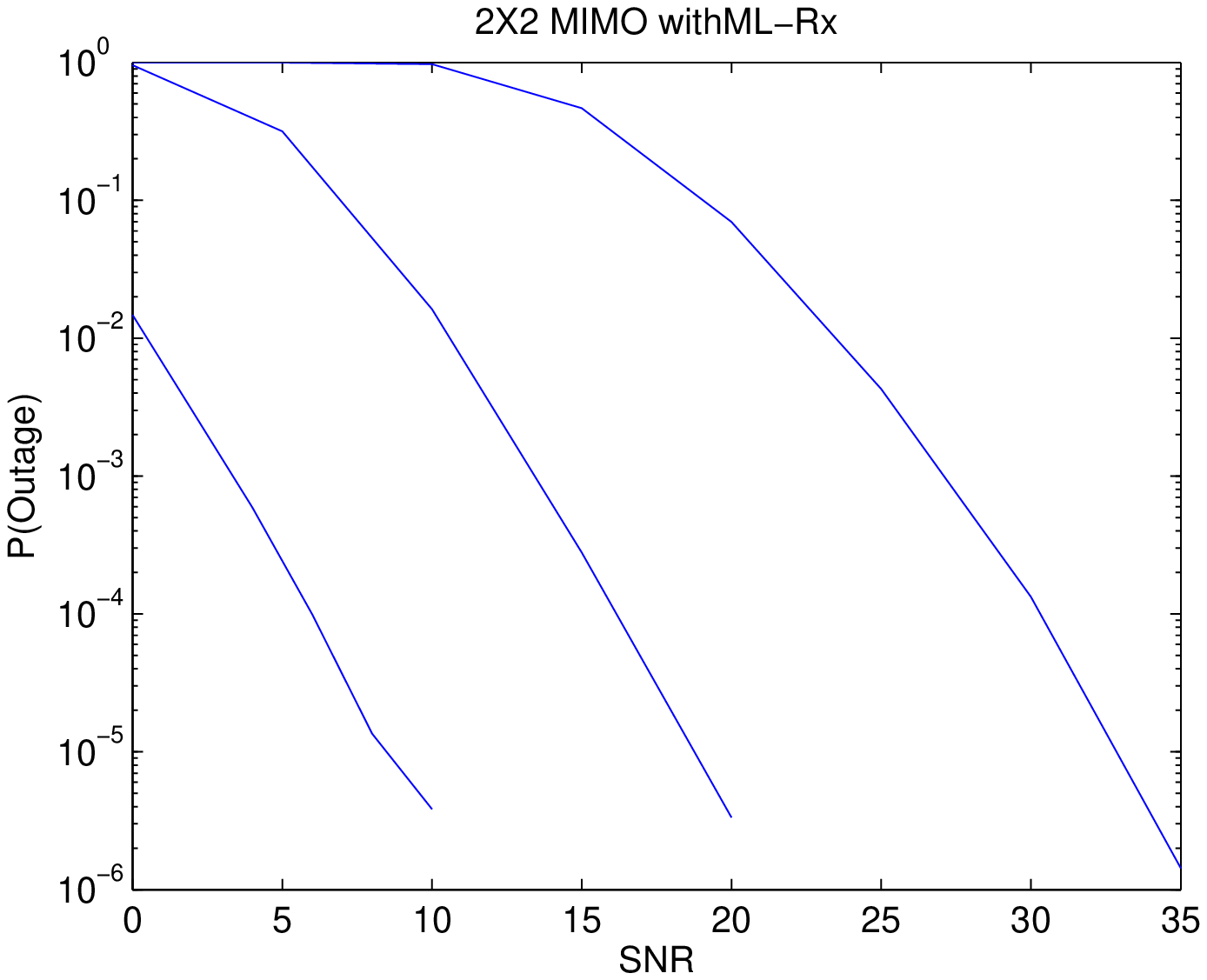}
\end{minipage}
\hfill
\begin{minipage}{3.5in}
\centering
\includegraphics[width=3.5in]{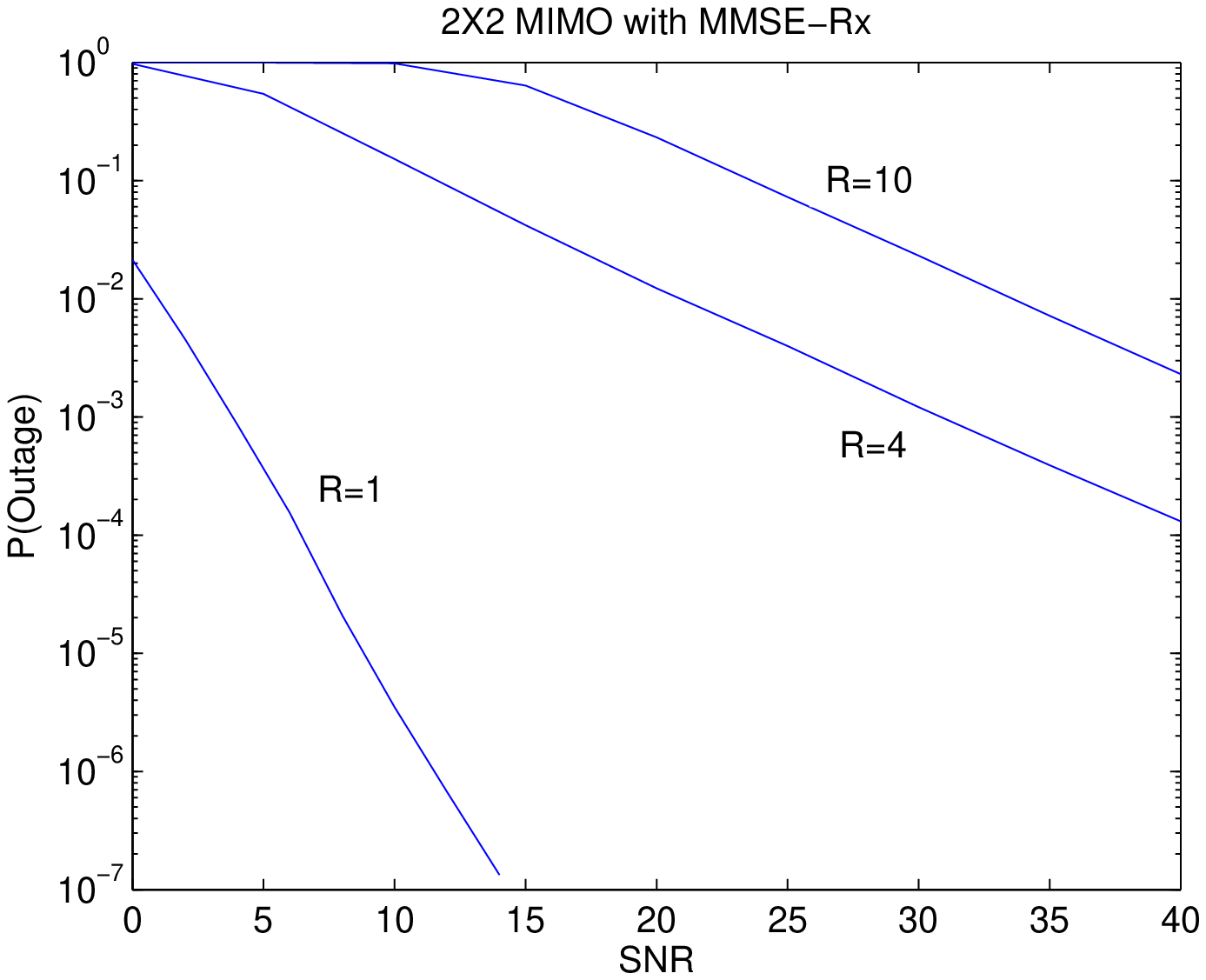}
\end{minipage}
\caption{Outage probability of ML receiver (left) and MMSE (right) with  $M=N=2$ antennas and for rates R= 1, 4, and 10 bps/Hz}
\label{Fig5}
\end{figure*}

In this work we seek answers for the following questions: when can the
MMSE receiver exploit the full diversity in MIMO channel? More
generally, how does the diversity of the MMSE receiver vary with the
system parameters such as spectral efficiency $R$, the number of
antennas, and in case of inter-symbol interference channel (ISI), the
channel memory?

The well-known and powerful framework of diversity-multiplexing
tradeoff (DMT) is not sufficient to answer the above questions,
because the DMT framework cannot distinguish between different
spectral efficiencies that correspond to the same multiplexing
gain. In the MIMO MMSE receiver, rates that correspond to the same
multiplexing gain can produce different diversities.

We approach the problem of MMSE reception in MIMO flat fading channels
through a rate-dependent approximation of the outage probability and
then proceed with bounding the pairwise error probability (PEP) from
both sides using the outage. This leads to a closed-form expression
for the diversity-rate tradeoff which reveals the relationship between
diversity, spectral efficiency, and number of transmit and receive
antennas. The approximation of outage and PEP as functions of rate
requires more delicate handling compared with the DMT analysis, as
certain ratios and terms that simply vanish in the DMT analysis are in
our case relevant and must be carefully handled.

We then analyze the {\em frequency-selective}, quasi-static MIMO
channel. Specifically we consider single carrier (SC) MMSE
equalization under zero-padding (ZP) and cyclic-prefix (CP)
transmission. SC-MMSE provides an attractive alternative to orthogonal
frequency division multiplexing (OFDM) due to its low complexity and
natural avoidance of the peak-to-average power ratio problem. The use
of cyclic prefix and zero padding has been investigated in the
literature, but the explicit tradeoff between the spectral efficiency
and diversity of MIMO SC-MMSE under these two schemes has been unknown
and is the subject of our work. We show that the diversity is a
function of number of antennas, channel memory and spectral
efficiency, and obtain the explicit tradeoff in the special case of
SIMO under CP transmission.

The results of this paper fully characterize the MIMO MMSE diversity
in the fixed rate flat quasi-static regime. We analyze both the cases
$N \ge M$ and $N<M$, showing that in either case it is possible for
the system to be limited to a diversity strictly less than $MN$. More
specifically, the central result of the paper is as follows: with $M$
transmit and $N$ receive antennas (for any $N$ and $M$) the diversity
is $d=\lceil \big( M 2^{-\frac{R}{M}}-(M-N)^+\big)^+\rceil^{2}+ |N-M|\lceil \big( M
2^{-\frac{R}{M}}-(M-N)^+\big)^+\rceil$, where $(\cdot)^+=\max(0,\cdot)$ and $\lceil
\cdot \rceil$ denotes rounding up to the next higher integer.  Our
results confirm and refine the earlier approximate results on the
diversity of MMSE MIMO receivers that were obtained for very high and
very low
rates~\cite{Hedayat:JSP07,Onggosanusi:ICAS02,Kumar:JIT09}. The MIMO MAC
channel is also studied.

Some of the related literature is as follows. The performance of MMSE
receiver in terms of reliability goes back to~\cite{Gao:JIT98} where
outage analysis was performed for MMSE SIMO diversity combiner in a
Rayleigh fading channel with multiple interferers.  In the context of
point-to-point MIMO systems, Gore et al.~\cite{Gore:ASI02} compared
the performance of MMSE D-BLAST with the ordered successive
cancellation V-BLAST.  They show that the former has better throughput
at low- and moderate SNR. Onggosanusi et al.~\cite{Onggosanusi:ICAS02}
studied MMSE and zero-forcing (ZF) MIMO receivers and noticed
their distinct outage performance at high-SNR, specifically for large
number of transmit antennas and low spectral efficiencies $R$, but
provided no analysis.

Hedayat and Nosratinia~\cite{Hedayat:JSP07} considered the outage
probability as a function of fixed rates $R$ under joint and separate
spatial encoding, but for MMSE they obtained results only in
the extremes of very high and very low rates. Kumar et
al.~\cite{Kumar:JIT09} provided a DMT analysis for the system
of~\cite{Hedayat:JSP07} and observed that the DMT analysis does not
predict the diversity of MMSE receivers at lower rates. We note that
all existing analyses are limited to the case where the number of
receive antennas ($N$) is greater than or equal the number of the
transmit antennas ($M$).

This paper is organized as follows. Section~\ref{sec:LinearRx}
describes the system model. Section~\ref{sec:OutageAnalysis} finds the
exponential order of outage. Section~\ref{sec:PEPAnalysis} bounds the
codeword error probabilities using the outage values, and derives the
final result. Section~\ref{sec:MAC} extends the result to the MAC
channel. Section~\ref{sec:freqSel} calculates the diversity of MIMO
MMSE reception in frequency-selective block-transmission
systems. Section~\ref{sec:SimRes} provides simulations that illuminate
our results.


\section{Linear Receivers}
\label{sec:LinearRx}

\begin{figure*}
\centering
\includegraphics[width=5.5in]{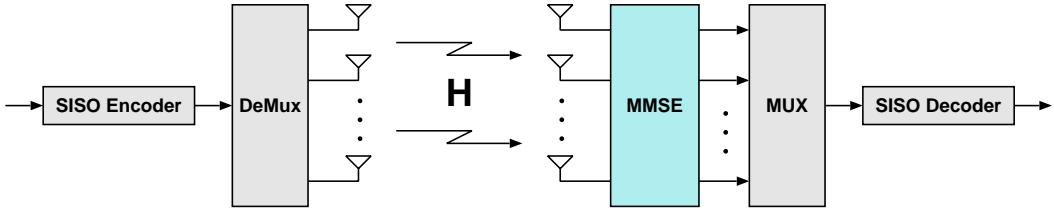}
\caption{MIMO system with linear MMSE receiver}
\label{fig:LinMIMOMMSE}
\end{figure*}

The input-output system model for flat fading MIMO channel with $M$ transmit and $N$
receive antennas is given by
\begin{equation}
 \by = \bH \bx + \bn 
\end{equation}
where $ {\bH \in \mathbb{C}^{N \times M}} $ is the channel
matrix whose entries are independent and identically distributed
complex Gaussian, $ {\bx \in \mathbb{C}^{\ M \times 1}}$ is the
transmitted vector, ${\bn \in \mathbb{C}^{\ N \times 1}} $ is the
Gaussian noise vector. The vectors ${\bx}$ and ${\bn}$ are
assumed independent.
We assume a quasi-static flat fading channel and perfect channel state
information (CSI) at the receiver (CSIR) and no CSI at the transmitter (CSIT), therefore transmit antennas operate with equal power.

We aim to characterize the diversity gain, $ d(R,M,N)$, as a function
of the spectral efficiency $R$ (bits/sec/Hz) and the number of
transmit and receive antennas. This requires a pairwise error
probability (PEP) analysis which is not directly tractable. Instead,
we find the exponential order of outage probability and then
demonstrate that outage and PEP exhibit identical exponential orders.

Following the notation of~\cite{Tajer:WCOM10}, we define the
outage-type quantities
\begin{align}
 P_{out}(R,N,M) &\triangleq  \prob(I(\bx;\by)<R)
\label {eq:OutageProb}\\
 d_{out}(R,N,M) &\triangleq  -\lim_{\rho\rightarrow\infty}\frac{\log P_{out}(R,M,N)}{\log \rho}
\end{align}
where $\rho$ is the per-stream signal-to-noise ratio (SNR).

We say that the two functions  $f(\rho)$ and $g(\rho)$ are {\em exponentially equal}, denoted by $f(p) \doteq g(p)$ when
\begin{equation}
\lim_{\rho \rightarrow \infty}
\frac{\log f(\rho)}{\log(\rho)}=\lim_{\rho \rightarrow \infty}
\frac{\log g(\rho)}{\log(\rho)} \nonumber
\end{equation}

The ordering operators $\dot{\leqslant}$ and $\dot{\geqslant}$ are also defined accordingly. If $f(\rho) \doteq \rho^{d}$, we say that $d$ is the {\it exponential order} of $f(p)$.

\subsection{MMSE Equalizer}
\label{subsec:Equalizer}

The equalizer, denoted by $\bW$, decouples the $M$ transmitted data
streams at the receiver (Figure~\ref{fig:LinMIMOMMSE}). The MMSE equalizer is obtained by minimizing
the mean square error (MSE) defined as $\expect
[||\bx-\bW^{H}\by||^2]$. It is usually
assumed~\cite{Hedayat:JSP07,Kumar:JIT09} that the number of transmit
antennas $M$ is no more than that of receive antennas $N$. In the
following, we start with $N\ge M$ but later generalize it to $N<M$ as well.

For $N \ge M$, using the orthogonality principle, the MMSE
equalizer is given by~\cite{Onggosanusi:ICAS02,Verdu98:book}
\begin{align}
 \bW &= \bH^{H}(\bH\bH^{H}+\rho^{-1}I)^{-1} \nonumber \\
         &= (\bH^{H}\bH+\rho^{-1}I)^{-1}\bH^{H} 
\end{align}

  The corresponding signal-to-interference and noise ratio (SINR) of the
  output stream $k$ of the MMSE detector is
\begin{align}
 \gamma_{k}&=\frac{1}{(\bI+\rho \bH^H\bH)^{-1}_{kk}}-1, \quad 1\leqslant k \leqslant M 
\label{eq:MMSESINR}
\end{align}
where $(\cdot)^H$ denotes matrix Hermitian, $(\cdot)^{-1}_{kk}$
denotes the diagonal element $k$ of the matrix inverse.

For the case $N<M$, it can be shown using a technique\footnote{In~\protect\cite{Gao:JIT98} an MMSE diversity combiner is used at the receiver in the presence of one transmit antenna and $M$ interferers.} very similar to~\cite[Appendix A]{Gao:JIT98} that the SINR expression \eqref{eq:MMSESINR} is again valid.

The square matrix $\mathcal{W}=\bH^H\bH$ is random, non-negative
definite, and obeys the {\em Wishart
  Distribution}~\cite{Telatar:EUR99,James:AMS64}. In this work,
the joint distribution of the eigenvalues of this
equivalent channel matrix opens the door to the development of our
analysis, as is also the case in many other MIMO results.

The equalizer output is
 \begin{equation}
 \by =\bW \bH \bx+\bW \bn. \label{eq:MMSEOUT}
\end{equation}
The signal streams of the transmit antennas may be either separately
or jointly encoded.  Separate encoding is simpler and has been fully
analyzed~\cite{Hedayat:JSP07}, but we mention the central result for
completeness.

\begin{theorem}[\cite{Hedayat:JSP07,Kumar:JIT09}]
\label{Theorem:SeparateEncoding}
In a MIMO system consisting of $M$ transmit and $N$ receive antennas ($N\geqslant M$),
under separate spatial encoding, the MMSE receiver achieves the diversity
\begin{equation}
 d_{out}(R,N,M)=N-M+1
\end{equation}
under either uniform or non-uniform rate assignment. 

Furthermore, it
has been established~\cite{Hedayat:JSP07,Kumar:JIT09} that the zero forcing
equalizer achieves diversity $N-M+1$ under both joint or separate spatial
encoding.
\end{theorem}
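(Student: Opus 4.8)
The plan is to reduce the separate-encoding outage event to the individual per-stream events and analyze each through the distribution of the MMSE SINR in \eqref{eq:MMSESINR}. Under separate spatial encoding with rates $R_1,\dots,R_M$ (summing to $R$), stream $k$ is in outage precisely when $\log_2(1+\gamma_k)<R_k$, i.e.\ $\gamma_k<c_k$ with $c_k\triangleq 2^{R_k}-1$ a fixed positive constant not depending on $\rho$, and the system is in outage when at least one stream is. Since $\prob\big(\bigcup_k\{\gamma_k<c_k\}\big)$ and $\max_k\prob(\gamma_k<c_k)$ differ only by the constant factor $M$, we get $d_{out}=\min_k d_{out,k}$ where $d_{out,k}$ is the exponential order of $\prob(\gamma_k<c_k)$, and it suffices to show $d_{out,k}=N-M+1$ for every $k$; the passage from $d_{out}$ to the true error exponent is the standard outage--PEP equivalence for per-stream decoding, which I take as known from~\cite{Hedayat:JSP07,Kumar:JIT09}.

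Second, I would rewrite the SINR. Applying the Sherman--Morrison--Woodbury identity to \eqref{eq:MMSESINR} gives $\gamma_k=\rho\,\mathbf{h}_k^H(\bI+\rho\,\bA_k)^{-1}\mathbf{h}_k$, with $\mathbf{h}_k$ the $k$-th column of $\bH$ and $\bA_k\triangleq\bH_{\bar k}\bH_{\bar k}^{H}$, where $\bH_{\bar k}$ holds the other $M-1$ columns (so $\operatorname{rank}\bA_k\le M-1$ a.s.). Condition on $\bH_{\bar k}$ and decompose $\mathbf{h}_k=\mathbf{h}_k^{\parallel}+\mathbf{h}_k^{\perp}$ into its components in $\operatorname{range}(\bH_{\bar k})$ and in the orthogonal complement, which has dimension $\ge N-M+1$; on the latter subspace $(\bI+\rho\bA_k)^{-1}$ acts as the identity, so $\gamma_k=\rho\,\mathbf{h}_k^{\parallel H}(\bI+\rho\bA_k)^{-1}\mathbf{h}_k^{\parallel}+\rho\|\mathbf{h}_k^{\perp}\|^2$, a sum of two nonnegative terms in which $\mathbf{h}_k^{\parallel}$ and $\mathbf{h}_k^{\perp}$ are conditionally independent. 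Dropping the first term gives $\gamma_k\ge\rho\|\mathbf{h}_k^{\perp}\|^2$, hence $\{\gamma_k<c_k\}\subseteq\{\|\mathbf{h}_k^{\perp}\|^2<c_k/\rho\}$; conditionally on a generic (full-rank) $\bH_{\bar k}$, $\|\mathbf{h}_k^{\perp}\|^2$ is a sum of $N-M+1$ i.i.d.\ unit-mean exponentials, so $\prob(\|\mathbf{h}_k^{\perp}\|^2<c_k/\rho)\doteq\rho^{-(N-M+1)}$, giving $d_{out,k}\ge N-M+1$. For the matching bound I would intersect $\{\rho\|\mathbf{h}_k^{\perp}\|^2<c_k/2\}$ with $\{\rho\,\mathbf{h}_k^{\parallel H}(\bI+\rho\bA_k)^{-1}\mathbf{h}_k^{\parallel}<c_k/2\}$: on this intersection $\gamma_k<c_k$, the first event has probability $\doteq\rho^{-(N-M+1)}$, and since $\rho\,\mathbf{h}_k^{\parallel H}(\bI+\rho\bA_k)^{-1}\mathbf{h}_k^{\parallel}\to\mathbf{h}_k^{\parallel H}\bA_k^{+}\mathbf{h}_k^{\parallel}<\infty$ (with $\bA_k^{+}$ the Moore--Penrose pseudoinverse) the second event has probability bounded below by a positive constant on a set of $(\bH_{\bar k},\mathbf{h}_k^{\parallel})$ of positive measure; conditional independence then gives $\prob(\gamma_k<c_k)\,\dot{\geqslant}\,\rho^{-(N-M+1)}$, so $d_{out,k}=N-M+1$ exactly. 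Because each $c_k$ is a fixed constant, the conclusion is the same for uniform and non-uniform rate assignments.

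The zero-forcing claim is even more direct: for $N\ge M$ the ZF output SINR is $\gamma_k^{ZF}=\rho/[(\bH^H\bH)^{-1}]_{kk}$, and $1/[(\bH^H\bH)^{-1}]_{kk}$ is exactly chi-square distributed with $2(N-M+1)$ degrees of freedom (the standard property of the inverse Gram matrix of an i.i.d.\ Gaussian matrix), so $\prob(\gamma_k^{ZF}<c_k)=\prob\big(\chi^2_{2(N-M+1)}<c_k/\rho\big)\doteq\rho^{-(N-M+1)}$, and the same union-bound reduction applies; the joint-versus-separate distinction is vacuous for ZF since every ZF stream is already interference-free. The step I expect to be the main obstacle is the upper bound on $d_{out,k}$: making rigorous that the ``parallel'' quadratic form $\rho\,\mathbf{h}_k^{\parallel H}(\bI+\rho\bA_k)^{-1}\mathbf{h}_k^{\parallel}$ cannot improve the exponential order past $\rho^{-(N-M+1)}$. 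Its almost-sure limit $\mathbf{h}_k^{\parallel H}\bA_k^{+}\mathbf{h}_k^{\parallel}$ depends on the small eigenvalues of $\bA_k$ and blows up when $\bH_{\bar k}$ is nearly rank-deficient, so one must first truncate the conditioning to a ``good'' set of $\bH_{\bar k}$ whose complement has probability decaying strictly faster than $\rho^{-(N-M+1)}$ before pushing the conditional-independence argument through; the lower-bound direction, by contrast, is immediate from $\gamma_k\ge\rho\|\mathbf{h}_k^{\perp}\|^2$.
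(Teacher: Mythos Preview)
The paper does not prove this theorem at all; it is quoted verbatim from \cite{Hedayat:JSP07,Kumar:JIT09} as background, so there is no in-paper proof to compare against. Your MMSE argument is essentially the classical one used in those references: the Sherman--Morrison rewrite $\gamma_k=\rho\,\mathbf{h}_k^H(\bI+\rho\bA_k)^{-1}\mathbf{h}_k$, the orthogonal decomposition $\mathbf{h}_k=\mathbf{h}_k^{\parallel}+\mathbf{h}_k^{\perp}$, and the $\chi^2_{2(N-M+1)}$ law of $\|\mathbf{h}_k^{\perp}\|^2$ conditional on $\bH_{\bar k}$ are exactly how the per-stream diversity $N-M+1$ is obtained there. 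Your identification of the delicate step (controlling the parallel quadratic form for the matching upper bound on $d_{out,k}$) is accurate, and the proposed fix via truncating to a good set of $\bH_{\bar k}$ and using conditional independence of $\mathbf{h}_k^{\parallel}$ and $\mathbf{h}_k^{\perp}$ is the right mechanism.

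There is, however, a genuine gap in your treatment of ZF under \emph{joint} encoding. The claim that ``the joint-versus-separate distinction is vacuous for ZF since every ZF stream is already interference-free'' is not correct at the level of outage events. Under joint encoding the outage event is $\big\{\sum_{k=1}^{M}\log(1+\gamma_k^{ZF})<R\big\}$, which forces \emph{every} $\gamma_k^{ZF}$ to be $O(1)$ simultaneously, whereas your union-bound reduction only handles the event that \emph{some} $\gamma_k^{ZF}$ is small. If the $\gamma_k^{ZF}=\rho/[(\bH^H\bH)^{-1}]_{kk}$ were independent, the joint event would have probability $\doteq\rho^{-M(N-M+1)}$, not $\rho^{-(N-M+1)}$. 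The reason the diversity stays at $N-M+1$ is that the quantities $1/[(\bH^H\bH)^{-1}]_{kk}$ are strongly positively correlated through the common near-singularity of $\bH$ (for instance, when $M=2$ both equal $\|\mathbf{h}_k\|^2\sin^2\theta$ with $\theta$ the angle between the two columns, so a single small-angle event makes all of them small). Establishing $P_{out}^{\text{joint}}\,\dot\geqslant\,\rho^{-(N-M+1)}$ therefore requires exhibiting a probability-$\doteq\rho^{-(N-M+1)}$ event on which all $M$ per-stream SINRs are simultaneously $O(1)$; this is what \cite{Hedayat:JSP07,Kumar:JIT09} actually prove, and your proposal skips it.
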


According to Theorem~\ref{Theorem:SeparateEncoding}, a MMSE receiver
operating under separate spatial encoding (e.g. horizontal encoding V-BLAST)
will have no more diversity gain than ZF receiver.

\section{Outage Analysis}
\label{sec:OutageAnalysis}

 We now consider the MMSE diversity where the data stream
is first encoded then multiplexed into $M$ sub-streams, each
transmitted by one antenna. This approach is known to improve the
performance compared with separate coding of the
streams~\cite{Tse05:book}. Outage occurs if the channel fails to
support the target rate \cite{Telatar:EUR99}. After channel
equalization, the $M$ sub-streams $x_k$ are decoupled and thus the
mutual information between the transmitted vector $\bx$ and the
received vector $\by$ given CSIR is~\cite{Onggosanusi:ICAS02}
\begin{equation}
 I(\bx,\by)=\sum^M_{k=1} I(x_k,y_k)
\label {eq:MutualInfo}
\end{equation}
Thus from \eqref{eq:OutageProb} and \eqref{eq:MutualInfo}, $ P_{out}$ is given by
\begin{equation}
P_{out} = \prob\bigg(\sum^M_{k=1}\log(1+\gamma_{k})<R\bigg)  
\label{eq:MMSEoutage}
\end{equation}
Substituting MMSE SINR $(\gamma_{k})$ from \eqref{eq:MMSESINR} in \eqref{eq:MMSEoutage} we get
\begin{equation}
P_{out} = \prob\bigg(\sum^M_{k=1}\log(\bI+\rho \mathcal{W})^{-1}_{kk}>-R\bigg)
\label{eq:UpperBoundSum}
\end{equation}

The dependence on the diagonal elements of the random matrix
$(\bI+\rho \mathcal{W})^{-1}_{kk}$ makes further analysis
intractable. We instead proceed to provide lower and upper bounds on
the outage probability. In Section~\ref{sec:PEPAnalysis} we will show that outage
probability ($P_{out}$) and pairwise error probability (PEP) exhibit
identical exponential error.

\subsection{Outage Upper Bound} 
\label{sec:OutageUB}

\begin{lemma} 
\label{lemma:outageUB}
 For an MMSE MIMO system consisting of $M$ transmit and $N$ receive
 antennas, under quasi-static Rayleigh fading, we have
$P_{out}(R,M,N) \dot{\leqslant} \rho^{-d_{out}(R,M,N)}$ where
\begin{align}
d_{out}(R,M,N)&=\bigg\lceil
 \big( M2^{-\frac{R}{M}}-(M-N)^+ \big)^+ \bigg\rceil^2+\nonumber\\
&\hspace{10 pt}\big|N-M\big|\bigg\lceil
 \big( M2^{-\frac{R}{M}}-(M-N)^+ \big)^+\bigg\rceil. \label{eq:d_out}
\end{align}
where $()^+$ denotes the $\max(0,\cdot)$.
\end{lemma}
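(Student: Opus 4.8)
The plan is to bound the outage probability from above by enlarging the outage event, and then to compute the SNR exponent of the enlarged event using the eigenvalue statistics of the Wishart matrix $\mathcal{W}=\bH^H\bH$. Starting from \eqref{eq:UpperBoundSum}, the key first step is to replace the sum $\sum_k \log(1+\gamma_k)$ involving the intractable diagonal entries $(\bI+\rho\mathcal{W})^{-1}_{kk}$ by something controlled purely by the ordered eigenvalues $\lambda_1\ge\cdots\ge\lambda_m$ of $\mathcal{W}$ (where $m=\min(M,N)$). A natural route is arithmetic–geometric type inequalities: since the MMSE mutual information is upper-bounded by the channel mutual information $\log\det(\bI+\rho\mathcal{W})=\sum_i\log(1+\rho\lambda_i)$, and one can also relate it to individual eigenvalues, the outage event $\{I_{MMSE}<R\}$ can be sandwiched inside an event of the form ``a prescribed number of eigenvalues are small,'' i.e. $\{\lambda_j \dot\leqslant \rho^{-1}$ for $j$ in some index range$\}$. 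The precise bookkeeping is what produces the $\lceil M2^{-R/M}-(M-N)^+\rceil$ term: roughly, the event forces the $k:=\lceil(M2^{-R/M}-(M-N)^+)^+\rceil$ smallest of the relevant eigenvalues to be on the order of $\rho^{-1}$ or below.

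Next I would bring in the joint eigenvalue density of the Wishart matrix~\cite{Telatar:EUR99,James:AMS64}, which near the origin behaves like $\prod_i \lambda_i^{|N-M|}\prod_{i<j}(\lambda_i-\lambda_j)^2$ times an exponential factor. Writing $\lambda_i \doteq \rho^{-\alpha_i}$ and applying Laplace's principle (as in the Zheng–Tse DMT machinery, here adapted to fixed rate), the probability that the $k$ smallest eigenvalues have exponents $\alpha_i>0$ decays like $\rho^{-d}$ with
\begin{equation}
d=\sum_{i=1}^{k}\big(|N-M|+2(k-i)+1\big)\alpha_i \nonumber
\end{equation}
optimized over the feasible $\{\alpha_i\}$; at the boundary of the outage region the optimizer puts $\alpha_i=1$ for those $k$ indices, giving $d = k|N-M| + k^2$, which is exactly \eqref{eq:d_out}. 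The case split $N\ge M$ versus $N<M$ enters only through which eigenvalues are ``active'': when $N<M$ there are already $M-N$ structurally-null directions, which is why the quantity $M2^{-R/M}$ is shifted by $(M-N)^+$ before rounding.

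The main obstacle, and the step requiring the most care, is the first one: rigorously showing that the MMSE outage event is contained in (exponentially equal to an event contained in) a clean eigenvalue event, because the diagonal entries $(\bI+\rho\mathcal{W})^{-1}_{kk}$ do not decouple across eigenvalues. I expect to handle this by a two-sided argument — an easy direction using $I_{MMSE}\le \log\det(\bI+\rho\mathcal{W})$ to lower-bound the required eigenvalue smallness, and a matching direction showing that if enough eigenvalues are small then at least one $\gamma_k$ is forced small enough to drop $I_{MMSE}$ below $R$ (this is where the $2^{-R/M}$ threshold, coming from the worst-case equal split of rate $R$ across $M$ streams, appears). The exponential-equality language $\doteq$ lets me absorb polynomially-bounded constants and the cross terms in the Vandermonde factor, so the combinatorics of rounding is the only delicate accounting; once the eigenvalue event is pinned down, evaluating the exponent is the routine Laplace-integral computation sketched above, and since this is only the upper bound (Lemma~\ref{lemma:outageUB}) I only need the ``$\dot\leqslant$'' direction here, with the matching lower bound deferred.
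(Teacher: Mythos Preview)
Your second half is essentially the paper's argument: once the outage event has been replaced by a clean eigenvalue event, the change of variables $\lambda_i=\rho^{-\alpha_i}$, the Wishart joint density, and the Zheng--Tse Laplace computation give exactly $d=k^2+|N-M|k$ with $k=\big\lceil (M2^{-R/M}-(M-N)^+)^+\big\rceil$. That part is fine.

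The gap is in the first step, and it is a direction error. You need an \emph{upper} bound on $P_{out}=\prob(I_{\text{MMSE}}<R)$, which requires a containment $\{I_{\text{MMSE}}<R\}\subseteq E$ for some eigenvalue event $E$; equivalently, you need a \emph{lower} bound on $I_{\text{MMSE}}$ in terms of the spectrum. The inequality you invoke, $I_{\text{MMSE}}\le \log\det(\bI+\rho\mathcal{W})$, goes the wrong way: it yields $\{\log\det<R\}\subseteq\{I_{\text{MMSE}}<R\}$, a lower bound on $P_{out}$, not an upper bound. Your ``matching direction'' (small eigenvalues force $I_{\text{MMSE}}<R$) is likewise a lower-bound statement. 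Neither gives $P_{out}\dot\leqslant\rho^{-d_{out}}$.

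The correct device, and what the paper actually uses, is Jensen's inequality applied to the concave function $\log$ on the diagonal entries themselves: with $x_k=(\bI+\rho\mathcal{W})^{-1}_{kk}$,
\[
-I_{\text{MMSE}}=\sum_{k=1}^{M}\log x_k \;\le\; M\log\Big(\tfrac{1}{M}\sum_{k=1}^{M}x_k\Big)=M\log\Big(\tfrac{1}{M}\,\mathrm{tr}\,(\bI+\rho\mathcal{W})^{-1}\Big)=M\log\Big(\tfrac{1}{M}\sum_{k=1}^{M}\tfrac{1}{1+\rho\lambda_k}\Big).
\]
This immediately gives
\[
P_{out}\le \prob\Big(\sum_{k=1}^{L}\tfrac{1}{1+\rho\lambda_k}\ge M2^{-R/M}-(M-N)^+\Big),\qquad L=\min(M,N),
\]
which is a genuine eigenvalue event in the right direction, and from here your Laplace computation goes through verbatim. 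So replace the $I_{\text{MMSE}}\le\log\det$ step by this Jensen/trace bound and your proposal becomes the paper's proof.
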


\begin{Proof}

We begin by bounding the sum in \eqref{eq:UpperBoundSum} via Jensen's inequality
\begin{align}
\sum^M_{k=1}\log \big(\bI+\rho \mathcal{W}\big)^{-1}_{kk} &\leq M
\log\big(\sum^M_{k=1}\frac{1}{M}(\bI+\rho \mathcal{W})^{-1}_{kk}
\big)\nonumber\\ &= M \log \big(\frac{1}{M}\mathrm{tr} \big(
(\bI+\rho \mathcal{W})^{-1} \big) \big)\nonumber  \\ &= M \log
\big(\frac{1}{M} \sum^M_{k=1}\frac{1}{1+\rho \lambda_k}
\big) \label{eq:MutualInfoLambdas}
\end{align}
where~\eqref{eq:MutualInfoLambdas} is true because trace is equal to
the sum of eigenvalues.

Notice that for $N<M$ only $N$ eigenvalues are
non-zero. hence~\eqref{eq:MutualInfoLambdas} can be written as
\begin{equation}
M \log \bigg(\frac{1}{M} \sum^L_{k=1}  \frac{1}{1+\rho \lambda_k}+(M-N)^+
\bigg) \label{eq:MutualInfoLambdasMin}
\end{equation}
where $L=\min(M,N)$.

  Substituting \eqref{eq:MutualInfoLambdasMin} in
\eqref{eq:UpperBoundSum}, we have

\begin{equation}
P_{out} \leq \prob \bigg( \sum^L_{k=1}\frac{1}{1+\rho \lambda_k}
\geqslant M2^{-\frac{R}{M}}-(M-N)^+ \bigg)
\label{eq:OutageWithLambdas}
\end{equation}

Define:
 \begin{equation}
\alpha_k \triangleq - \frac{\log \lambda_k}{\log \rho},\quad
\mathrm{for}\quad k=1,...,n\quad,
\label{eq:alpha}
\end{equation}
based on which we can write the exponential equality 
\begin{equation}
\frac{1}{1+\rho \lambda_k} \doteq
 \begin{cases}
 \rho^{\alpha_k-1} &\mbox{$\alpha_k<1$ } \\
 1  &\mbox{$\alpha_k>1$}
\end{cases}
\label{eq:AsymptoticTerm}
\end{equation}

Define $\balpha $ $=[\alpha_1,...,\alpha_n]$ and a new random variable 
 \begin{equation}
M(\boldsymbol {\alpha)} \triangleq \sum_{\alpha_k>1}1
\end{equation}
This definition is based on the observation that the term
$\frac{1}{1+\rho \lambda_k}$ defined in \eqref{eq:AsymptoticTerm} is either zero
or one at high SNR, therefore to characterize $\sum_k\frac{1}{1+\rho
  \lambda_k}$ at high SNR we count the ones. Thus
 \begin{align}
\sum^n_{k=1}\frac{1}{1+\rho \lambda_k} &\doteq
\sum_{\alpha_k>1}1+\sum_{\alpha_k<1}\rho^{\alpha_k-1} \\ &
\doteq
M(\balpha)+\max_{\{\alpha_k:\alpha_k<1\}}{\rho^{\alpha_k-1}} 
\end{align}
$M(\balpha)$ inherits its randomness from
 $\lambda_1,\ldots,\lambda_n$. The bound
 in~\eqref{eq:OutageWithLambdas} is evaluated by computing the
 probability of $\{\balpha \in \mathcal{A}\}$, where $ \mathcal{A} =
 \{\alpha:
 M(\balpha)+\max_{\{\alpha_k:\alpha_k<1\}}{\rho^{\alpha_k-1}}>M2^{-\frac{R}{M}}-(M-N)^+
 \} $ denotes the outage event based on the approximation
 in~\eqref{eq:OutageWithLambdas}. In order to evaluate the probability
 of this event we need the joint distribution of the eigenvalues, or equivalently the distribution of $\balpha$. The distribution follows Wishart distribution and was initially discovered by~\cite{James:AMS64} . The distribution of $\balpha$ can be easily evaluated as follows~\cite{Zheng:JIT03}. 

Let $\bR$ be an $m \times n$ ($m \geqslant n$) random matrix whose entries are $\mathcal{CN}(0,1)$. The joint PDF of the ordered random variables $\balpha$ (defined in \eqref{eq:alpha} for the eigenvalues of $\bR^{H}\bR$) is given by
 \begin{align}
\prob(\balpha) =K^{-1}_{m,n}&(\log \rho)^{n}
\prod^n_{i=1} \rho^{-(m-n+1)\alpha_i}  \times \nonumber\\
&\prod_{i<j}
|\rho^{-\alpha_i}-\rho^{-\alpha_j}|^2 \exp \bigg[ -\sum_{i=1}^n
  \rho^{-\alpha_i} \bigg]
\label{eq:WishartJointDist}
\end{align}
where $K^{-1}_{m,n}$ is a normalizing factor.

Using the distribution of $\balpha$ for the defined matrix ${\mathbf R}$, the asymptotic outage bound is 
\begin{align}
&P_{out}  \dot{\leqslant} \int \limits_{\mathcal{A}} \prob(\balpha) d\balpha
\nonumber
 \\ &= K^{-1}_{m,n}(\log \rho)^{n} \int \limits_{\mathcal
  {A}} \prod^n_{i=1} \rho^{-(m-n+1)\alpha_i}  \prod_{i<j} |\rho^{-\alpha_i}-\rho^{-\alpha_j}|^2 \times\nonumber\\
&\hspace{80pt} \exp \bigg[
  -\sum_{i=1}^n \rho^{-\alpha_i} \bigg] \;\text{d} \balpha
\end{align}

The simplification of the integral follows from~\cite{Zheng:JIT03}. The term outside the integral has no
effect on the exponent. The term $
|\rho^{-\alpha_i}-\rho^{-\alpha_j}|$ is dominated by
$\rho^{-\alpha_i}$ at high SNR. We now divide the integration range into
$\mathcal{A'} =\mathcal{A} \cap \real^{n}_{+} $ and its complement. If
$\balpha \notin \mathcal{A'}$, the exponential term will dominate the
other terms and will drive the integral to zero. If $\balpha \in
\mathcal{A'}$, the exponential term is approximately 1 at high SNR and
will disappear. Therefore
\begin{align}
P_{out} &\;\dot{\leqslant}\; \int \limits_{\mathcal {A'}} \prod^n_{i=1}
\rho^{-(m-n+1)\alpha_i} \prod_{i<j}
|\rho^{-\alpha_i}-\rho^{-\alpha_j}|^2 \;\text{d}\balpha \nonumber \\ &\doteq
\int \limits_{\mathcal {A'}} \prod^n_{i=1} \rho^{-(2i-1+m-n)\alpha_i}
\;\text{d}\balpha \label{eq:OutageIntegral}
\end{align}
where
\begin {align}
 \mathcal {A'} &= \{M(\balpha)>M2^{-\frac{R}{M}}-(M-N)^+ \} \nonumber \\
              &= \{ \alpha_1>1,...,\alpha_{S}>1,\alpha_{S+1}>0,...\alpha_L>0 \} 
\end {align}
and $S=\big\lceil \big(M2^{-\frac{R}{M}}-(M-N)^+\big)^+\big\rceil$. The integration region
$\mathcal{A'}$ has boundaries that are parallel to nonnegative orthant
$\mathbb{R}_+^n$, therefore the integration over multiple variables
in~\eqref{eq:OutageIntegral} can be separated:
\begin {align}
P_{out} \quad &\dot{\leqslant}\quad \prod^n_{i=1}\int \limits_{\mathcal {A'}}  \rho^{-(2i-1+m-n)\alpha_i}
\;\text{d}\balpha \\  &= \quad \rho^{-\sum_{i=1}^{S} (2i-1+m-n)} \nonumber \\
      &= \quad \rho^{-(S^2+(m-n)S)},  \quad \mbox{for $m\geqslant n$} \\
 &= \quad \rho^{-(S^2+|m-n|S)}, \quad \mbox{for general $m, n$} \label{eq:DivRes}\\
&= \quad \rho^{-d_{out}} \nonumber
\end {align}

which establishes the proof of Lemma~\ref{lemma:outageUB}.
\end{Proof}

\subsection{Outage Lower Bound} 
\label{sec:LB}

\begin{lemma} 
\label{lemma:outageLB}
 For an MMSE MIMO system consisting of $M$ transmit and $N$ receive
 antennas (and $L=\min\{M,N\}$), operating under quasi-static Rayleigh
 fading, we have $P_{out}(R,M,N)
 \dot{\geqslant} \rho^{-d_{out}(R,M,N)}$ where
\begin{align}
d_{out}(R,M,N)&=\bigg\lceil
 \big( M2^{-\frac{R}{M}}-(M-N)^+ \big)^+ \bigg\rceil^2+\nonumber\\
&\hspace{10 pt}\big|N-M\big|\bigg\lceil
 \big( M2^{-\frac{R}{M}}-(M-N)^+ \big)^+\bigg\rceil. \nonumber
\end{align}
\end{lemma}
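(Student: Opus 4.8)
\emph{Proof plan.} The plan is to establish the matching lower bound by exhibiting, for each $\rho$, an event $\mathcal{E}_\rho$ contained in the outage event whose probability has exponential order exactly $d_{out}$, so that $P_{out}\ge\prob(\mathcal{E}_\rho)\;\dot{\geqslant}\;\rho^{-d_{out}}$; combined with Lemma~\ref{lemma:outageUB} this pins down $P_{out}\doteq\rho^{-d_{out}}$. One cannot simply reverse the proof of Lemma~\ref{lemma:outageUB}, since the Jensen step there bounds $I(\bx;\by)$ from \emph{below}: a genuine sub-event of $\{I(\bx;\by)<R\}$ is needed. A partial result is immediate from Hadamard's inequality $\prod_k(\bI+\rho\mathcal{W})^{-1}_{kk}\ge\det(\bI+\rho\mathcal{W})^{-1}$, which gives $I(\bx;\by)=-\sum_k\log(\bI+\rho\mathcal{W})^{-1}_{kk}\le\log\det(\bI+\rho\mathcal{W})$, i.e.\ the MMSE mutual information never exceeds the channel capacity; hence $P_{out}\ge P^{\mathrm{ML}}_{out}\doteq\rho^{-MN}$, which already matches $d_{out}$ in the low-rate regime where $S\defeq\lceil(M2^{-R/M}-(M-N)^+)^+\rceil$ equals $\min(M,N)$ and $d_{out}=MN$. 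For $S<\min(M,N)$ the eigenvectors of $\mathcal{W}$ must enter the picture.

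Write $\mathcal{W}=V\,\mathrm{diag}(\lambda_1,\dots,\lambda_M)V^H$ with eigenvalues in increasing order (the last $M-N$ of them zero when $N<M$); for complex Gaussian $\bH$ the unitary $V$ is Haar-distributed and independent of the eigenvalues. Then $(\bI+\rho\mathcal{W})^{-1}_{kk}=\sum_i|V_{ki}|^2\tfrac{1}{1+\rho\lambda_i}$, so $1+\gamma_k=\big(\sum_i|V_{ki}|^2\tfrac{1}{1+\rho\lambda_i}\big)^{-1}$. Let $\mathcal{S}$ index the $S$ smallest nonzero eigenvalues together with the $(M-N)^+$ zero ones, so $|\mathcal{S}|=S+(M-N)^+$, fix a small $\epsilon>0$, and set
\[
\mathcal{E}_\rho=\big\{\lambda_i\le\epsilon/\rho\ \text{for the }S\text{ smallest nonzero eigenvalues}\big\}\cap\{V\in\mathcal{V}\},
\]
where $\mathcal{V}$ is a fixed positive--Haar--measure neighbourhood of a unitary that spreads energy uniformly ($|V_{ki}|^2\equiv1/M$, e.g.\ a scaled DFT matrix), to be fixed below. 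By independence of $V$ and the eigenvalues, $\prob(\mathcal{E}_\rho)=\prob(\text{eigenvalue part})\cdot\prob(\mathcal{V})$; feeding the Wishart density \eqref{eq:WishartJointDist} into the computation of \eqref{eq:OutageIntegral}--\eqref{eq:DivRes}, but over the region $\{\alpha_i\ge1+\tfrac{\log(1/\epsilon)}{\log\rho},\ i=1,\dots,S\}$, the eigenvalue factor has exponential order $\sum_{i=1}^S(2i-1+|M-N|)=S^2+|M-N|S=d_{out}$ (the $\epsilon$-slack affecting only the constant prefactor), while $\prob(\mathcal{V})>0$ is a constant; hence $\prob(\mathcal{E}_\rho)\doteq\rho^{-d_{out}}$.

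It remains to check $\mathcal{E}_\rho\subseteq\{I(\bx;\by)<R\}$. On $\mathcal{E}_\rho$ the weight $\tfrac{1}{1+\rho\lambda_i}$ is $1$ for each zero eigenvalue and at least $\tfrac{1}{1+\epsilon}$ for each of the $S$ tiny ones, so $(\bI+\rho\mathcal{W})^{-1}_{kk}\ge\tfrac{1}{1+\epsilon}\sum_{i\in\mathcal{S}}|V_{ki}|^2$ and therefore
\[
I(\bx;\by)=\sum_k\log(1+\gamma_k)\le M\log(1+\epsilon)-\sum_{k=1}^M\log\!\Big(\sum_{i\in\mathcal{S}}|V_{ki}|^2\Big).
\]
Since $\sum_k\sum_{i\in\mathcal{S}}|V_{ki}|^2=|\mathcal{S}|$ and each inner sum is in $[0,1]$, AM--GM gives $\prod_k\sum_{i\in\mathcal{S}}|V_{ki}|^2\le(|\mathcal{S}|/M)^M$, with equality for any unitary each of whose rows carries mass $|\mathcal{S}|/M$ on $\mathcal{S}$ (the scaled DFT being one such). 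Taking $\mathcal{V}$ a small enough neighbourhood of such a matrix, $\sum_{i\in\mathcal{S}}|V_{ki}|^2\ge(1-\delta)|\mathcal{S}|/M$ for all $k$ on $\mathcal{V}$, whence, writing $g\defeq M\log\tfrac{|\mathcal{S}|}{M2^{-R/M}}$,
\[
I(\bx;\by)\le R-g+M\log\tfrac{1+\epsilon}{1-\delta}.
\]
Because $|\mathcal{S}|=\lceil M2^{-R/M}-(M-N)^+\rceil+(M-N)^+\ge M2^{-R/M}$ we have $g\ge0$, and $g>0$ for every $R$ except the isolated values at which $M2^{-R/M}-(M-N)^+$ is a nonnegative integer; at all other rates, choosing $\epsilon,\delta$ small makes the right side $<R$, so $I(\bx;\by)<R$ throughout $\mathcal{E}_\rho$ and $P_{out}\;\dot{\geqslant}\;\rho^{-d_{out}}$.

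The crux, and the main obstacle, is the eigenvector accounting: one must justify that the eigenvector sub-block carrying the smallest and the zero eigenvalues is a uniformly random partial isometry independent of the eigenvalue process, and that $\max_V\prod_k\sum_{i\in\mathcal{S}}|V_{ki}|^2=(|\mathcal{S}|/M)^M$ with an explicit maximiser, so that the gap $|\mathcal{S}|>M2^{-R/M}$ is precisely what absorbs the $\epsilon,\delta$ losses. The residual difficulty is the measure-zero set of boundary rates at which $M2^{-R/M}-(M-N)^+\in\mathbb{Z}_{\ge0}$ and $g=0$: there $\mathcal{E}_\rho$ only marginally fails to enforce $I(\bx;\by)<R$, and one must either bring in the sub-leading eigenvalue behaviour or regard the exponent as established off this negligible set.
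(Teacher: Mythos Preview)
Your proof is correct and reaches the same conclusion as the paper, including the same caveat at the isolated rates where $M2^{-R/M}-(M-N)^+$ is a nonnegative integer. (One slip: with an increasing ordering the zero eigenvalues come first when $N<M$, not last; this does not affect the argument.)

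The paper's route is different in execution but rests on the same structural idea. Rather than directly exhibiting a sub-event of outage, the paper first applies Jensen's inequality to $\log(1/S_k)$ and then the crude bound $\tfrac{1}{M}\sum_k S_k^{-1}\le(\min_k S_k)^{-1}$, reducing the outage probability to $\prob(S_{\bar{k}}>2^{-R/M})$ with $\bar{k}=\arg\min_k S_k$; it then conditions on a single eigenvector column, $\{|u_{\ell\bar{k}}|^2\ge a/M\text{ for all }\ell\}$ with $a=1-\epsilon$, arriving at $\prob\big(\sum_\ell\tfrac{1}{1+\rho\lambda_\ell}>\tfrac{M}{a}2^{-R/M}-(M-N)^+\big)$ and choosing $a$ close enough to~$1$ that the ceiling is unchanged. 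Your construction is more direct: no Jensen step, an explicit product event (Haar neighbourhood of a DFT-type unitary)~$\times$~(small eigenvalues), and a clean use of the Haar/eigenvalue independence. The AM--GM maximiser also explains \emph{why} the method stalls precisely at the boundary rates, which is the paper's left-/right-continuity remark in different language. The paper's approach has the virtue of mirroring the upper-bound proof and of conditioning on only one column of $U$, at the cost of a more delicate conditioning step (since $\bar{k}$ is itself random), which the paper handles by appeal to \cite{Kumar:JIT09}.
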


 \begin{Proof}
The lower bound is also based on Jensen's inequality. Recall
\begin{align}
P_{out} &= \prob\bigg(\sum^M_{k=1}\log(1+\gamma_{k})<R\bigg)  \nonumber\\
      &= \prob\bigg( \sum^M_{k=1}\log \frac{1}{(\bI+\rho \mathcal{W})^{-1}_{kk}}< R\bigg) \nonumber\\
 &\geqslant \prob\bigg( M \log \frac{1}{M}\sum^M_{k=1}\frac{1}{(\bI+\rho \mathcal{W})^{-1}_{kk}}< R\bigg) \label{eq:OutJenLower}
\end{align}
Let the eigen decomposition of $\bH^H\bH$ be given by
$\bH^H\bH=\bU^H\Lambda\bU$ where $\bU$ is unitary and $\Lambda$ is a
diagonal matrix that has the eigenvalues of the Wishart matrix
$\mathcal{W}$ on its diagonal. Let the vector $\bu_k$ be the column
$k$ of the matrix $\bU$ and $u_{\ell k}$ be the element $\ell$ of this
column, we have
\begin{align}
(\bI+\rho \mathcal{W})^{-1}_{kk} &= \bu^H_k (\bI+\rho \Lambda)^{-1} \bu_k\nonumber\\
&= \sum_{\ell=1}^{M} \frac{|u_{\ell k}|^2}{1+\rho\lambda_{\ell}}\nonumber\\ &\triangleq  S_k. \label{eq:EqKunmar}
\end{align}
Let $\bar{k}=\arg\min_k S_k$.  Using~\eqref{eq:EqKunmar}, we can bound
the sum in~\eqref{eq:OutJenLower}
\begin{align}
\frac{1}{M} \sum_{k=1}^{M} \frac{1}{(\bI+\rho \mathcal{W})^{-1}_{kk}}  &= \frac{1}{M} \sum_{k=1}^{M}\frac{1}{S_k} \nonumber\\
  &\leqslant \;  \frac{1}{\min_k S_k} \label{eq:EqKunmar2} \\
&= \; \frac{1}{ S_{\bar{k}}}
\end{align}
thus the outage bound in~\eqref{eq:OutJenLower} can be further bounded
using~\eqref{eq:EqKunmar2}
\begin{align}
P_{out}  &\geqslant \prob\bigg( M \log \frac{1}{M}\sum^M_{k=1}\frac{1}{(\bI+\rho \mathcal{W})^{-1}_{kk}}< R\bigg) \nonumber\\
&\geqslant \prob\bigg( M \log \frac{1}{ S_{\bar{k}}}< R\bigg) \nonumber\\
&=  \prob\bigg( S_{\bar{k}} >  2^{-\frac{R}{M}} \bigg)   \label{eq:OutJenLower2}
\end{align}

We now bound~\eqref{eq:OutJenLower2} by conditioning on the event
$\mathcal{B}\triangleq \big\{ |u_{\ell \bar{k}}|^2 \geqslant
\frac{a}{M} \big\}$ where $a$ is a positive real number that is
slightly smaller than one, i.e. $a=1-\epsilon$, and $\epsilon$ is a small
positive number. We then have
\begin{align}
\prob\bigg( S_{\bar{k}} &> 2^{-\frac{R}{M}} \bigg) \geqslant
\prob\bigg( S_{\bar{k}} > 2^{-\frac{R}{M}} \big|\mathcal{B} \bigg)
\prob(\mathcal{B})\nonumber\\
 &=\prob\bigg( \sum_{\ell=1}^{M} \frac{|u_{\ell \bar{k}}|^2}{1+\rho\lambda_{\ell}} > 2^{-\frac{R}{M}} \bigg|\mathcal{B} \bigg)
\prob(\mathcal{B}) \nonumber\\
 &\geqslant\prob\bigg( \frac{1}{M}\sum_{\ell=1}^{M} \frac{a}{1+\rho\lambda_{\ell}} > 2^{-\frac{R}{M}}  \bigg) \prob(\mathcal{B}) \nonumber\\
&\doteq \prob\bigg( \frac{1}{M}\sum_{\ell=1}^{M} \frac{a}{1+\rho\lambda_{\ell}} > 2^{-\frac{R}{M}}  \bigg) \label{eq:OutJenLower4}\\
&= \prob\bigg(\sum_{\ell=1}^{M} \frac{1}{1+\rho\lambda_{\ell}} > \frac{M}{a}2^{-\frac{R}{M}}  \bigg) \nonumber\\
&= \prob \bigg(  \frac{1}{M} \sum^L_{\ell=1}\frac{1}{1+\rho \lambda_{\ell}} > \frac{M}{a}2^{-\frac{R}{M}}-(M-N)^+ \bigg) \label{eq:Bou}
\end{align}
where~\eqref{eq:OutJenLower4} follows because $\prob(\mathcal{B})$ is
finite and independent of $\rho$; this can be proved similarly
to~\cite[Appendix A]{Kumar:JIT09}. To make the upcoming expressions
compact, we introduce a new variabe $\kappa\triangleq
\frac{M}{a}2^{-\frac{R}{M}}-(M-N)^+$
\begin{align}
\prob \bigg(  \frac{1}{M} \sum^L_{\ell=1}\frac{1}{1+\rho \lambda_{\ell}} > \kappa \bigg) \label{eq:Bou2}
\end{align}

Whenever $M2^{-\frac{R}{M}}$ is non-integer, the constant $a$ can be
chosen such that $\big\lceil \big( M2^{-\frac{R}{M}}-(M-N)^+ \big)^+ \big\rceil = \big\lceil \big(\frac{M}{a}2^{-\frac{R}{M}}-(M-N)^+\big)^+ \big\rceil$. We note this is satisfied for all
rates, with the exception of an isolated set of points. As long as
$M2^{\frac{R}{M}} \notin {\mathbb N}$ we have:
\begin{align}
P_{out } &\geqslant  \prob \bigg(
\sum^L_{\ell =1}\frac{1}{1+\rho \lambda_{\ell}} > \kappa
\bigg) \nonumber\\
 &\doteq \prob \bigg(
\sum^L_{k=1}\frac{1}{1+\rho \lambda_{\ell}} > \lceil\kappa \rceil
\bigg) 
\label{eq:OutageLowerBound}
\end{align}
The remaining steps follow similarly to the proof of
Lemma~\ref{lemma:outageUB}. Thus $P_{out}
\;\dot{\geqslant}\;\rho^{-d_{out}}$ with $d_{out}$ is given by
Lemma~\ref{lemma:outageLB}.

On the set of isolated points $M2^{-\frac{R}{M}} \in {\mathbb N}$, the
right hand side of Eq.~\eqref{eq:OutageLowerBound} obeys a slightly
weaker upper bound by replacing $\kappa $ with $\kappa +1$. We can combine the cases where $M2^{-\frac{R}{M}}$ is integer and non-integer to write the upper bound compactly as follows:
\begin{align*}
d_{out}(R,M,N)& \le\bigg\lfloor
 \big( M2^{-\frac{R}{M}}+1-(M-N)^+ \big)^+ \bigg\rfloor^2+\nonumber\\
&\hspace{10 pt}\big|N-M\big|\bigg\lfloor
 \big( M2^{-\frac{R}{M}}+1-(M-N)^+ \big)^+\bigg\rfloor. \nonumber
\end{align*}
Inspection shows that this bound is tight against the lower bound
everywhere except its discontinuity points. In other words, the upper bound is left-continuous while the  lower
bound was right-continuous at
the discontinuity points.

 \end{Proof}

\section{PEP Analysis}
\label{sec:PEPAnalysis}

Recalling that the diversity is roughly defined as the slope of PEP at
high SNR, we now proceed to bound the PEP tightly from both sides
using the outage results already obtained.

\subsection{PEP Upper Bound} 
\label{sub:PEPUB}

We start by a lower bound that is inspired by~\cite[Lemma
  5]{Zheng:JIT03} but requires a more careful treatment since we are
analyzing rate, not the DMT (see the Introduction).

\begin{lemma} \label{lemma:PEPUB}
For a quasi-static fading MIMO channel with MMSE receiver we have
$d_{out}(R,M,N) \geqslant d(R,M,N)$.
\end{lemma}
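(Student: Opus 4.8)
The plan is to establish the equivalent inequality $P_e(R,M,N)\ \dot{\geqslant}\ P_{out}(R,M,N)$, i.e.\ that the error probability decays no faster than the outage probability; since the diversity $d(R,M,N)$ is the exponential order of the error probability (which, by the union bound, coincides with that of the dominant pairwise term), and since $P_{out}(R,M,N)\doteq\rho^{-d_{out}(R,M,N)}$ by Lemmas~\ref{lemma:outageUB}--\ref{lemma:outageLB}, this yields $d(R,M,N)\le d_{out}(R,M,N)$, which is the claim. The device is the standard converse ``outage lower-bounds error'' of \cite[Lemma~5]{Zheng:JIT03}: condition the error event on a bad channel. Writing $\mathcal{O}$ for the outage set of $\bH$ (those realizations for which the decoupled-stream mutual information $\sum_{k=1}^{M}\log(1+\gamma_k)$ of \eqref{eq:MutualInfo} falls below the target rate), we have $P_e\ \ge\ \prob(\mathrm{error}\mid\mathcal{O})\,\prob(\mathcal{O})$, so it suffices to keep $\prob(\mathrm{error}\mid\mathcal{O})$ bounded away from $0$ as $\rho\to\infty$.

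This is the one place where the fixed-rate bookkeeping departs from the DMT calculus: on much of $\mathcal{O}$ the quantity $\sum_k\log(1+\gamma_k)$ can sit arbitrarily close to $R$ from below, so Fano's inequality does not by itself bound $\prob(\mathrm{error}\mid\mathcal{O})$ uniformly away from zero. I would therefore condition instead on the slightly smaller event $\mathcal{O}_\epsilon\triangleq\{\bH:\ \sum_k\log(1+\gamma_k)<R-\epsilon\}$ for a small fixed $\epsilon>0$. On $\mathcal{O}_\epsilon$ the effective parallel channel presented by the MMSE front end has mutual information (with the i.i.d.\ Gaussian inputs of the model) below $R-\epsilon$, so Fano's inequality applied over the $n$ uses of the realized channel to a code of rate $R$ gives $\prob(\mathrm{error}\mid\bH)\ \ge\ 1-\tfrac{R-\epsilon}{R}-\tfrac{1}{nR}=\tfrac{\epsilon}{R}-\tfrac{1}{nR}$, hence $\prob(\mathrm{error}\mid\bH)\ge\tfrac{\epsilon}{2R}=:c(\epsilon)>0$ once the block length exceeds $2/\epsilon$ --- a threshold depending only on $\epsilon$, not on $\rho$. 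Thus $P_e\ \dot{\geqslant}\ c(\epsilon)\,\prob(\mathcal{O}_\epsilon)=c(\epsilon)\,P_{out}(R-\epsilon,M,N)\ \doteq\ \rho^{-d_{out}(R-\epsilon,M,N)}$.

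It then remains to remove the slack $\epsilon$, and here the explicit forms in Lemmas~\ref{lemma:outageUB}--\ref{lemma:outageLB} do the work: $d_{out}(\cdot,M,N)$ is piecewise constant in the rate, since the ceiling $\big\lceil\big(M2^{-R/M}-(M-N)^+\big)^+\big\rceil$ changes only at the finitely many isolated rates where $M2^{-R/M}-(M-N)^+$ is a nonnegative integer. Hence for every $R$ outside that discrete set there is $\epsilon_0(R)>0$ with $d_{out}(R-\epsilon,M,N)=d_{out}(R,M,N)$ for all $\epsilon<\epsilon_0$, so letting $\epsilon\downarrow 0$ gives $P_e\ \dot{\geqslant}\ \rho^{-d_{out}(R,M,N)}$ and therefore $d_{out}(R,M,N)\ge d(R,M,N)$; on the remaining discrete set of rates one obtains only the correspondingly weaker bound, exactly as in the $M2^{-R/M}\in\mathbb{N}$ case treated in the proof of Lemma~\ref{lemma:outageLB}. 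The main obstacle is precisely this last step: in a DMT analysis the rate scales like $r\log\rho$ and an additive rate perturbation is asymptotically invisible, whereas at a fixed rate the perturbation $\epsilon$ is genuinely present, and its harmlessness must be extracted from the step structure of $d_{out}$ together with the uniform-in-$\rho$ lower bound on the conditional error probability.
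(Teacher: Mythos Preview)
Your proposal is correct and follows essentially the same approach as the paper: condition on a slightly shrunk outage event $\{\sum_k\log(1+\gamma_k)<R-\epsilon\}$, apply Fano's inequality to get a conditional error probability bounded away from zero, and then remove the $\epsilon$-slack using the left-continuity (piecewise-constant step structure) of $d_{out}(R,M,N)$, with the same caveat at the isolated rates where $M2^{-R/M}-(M-N)^+$ is an integer. The only cosmetic difference is that the paper conditions on the composite map $\bfunc=\bW\bH\bx$ rather than on $\bH$ directly, and disposes of the $\frac{\mathcal{H}(\prob(E))}{Rl}$ term via an exponential-equality argument from~\cite{Tajer:WCOM10} rather than your explicit block-length threshold $n>2/\epsilon$; both devices accomplish the same thing.
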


\begin{Proof} 
\label{proof lemma:PEPUB}

Denote $E$ for an error event, and let $x \in \mathcal{C}$ be the
transmitted codeword from a codebook $\mathcal {C}$ of size $2^{Rl}$
where $R$ and $l$ are code rate and code length respectively. Define $\bfunc=\bW
\bH \bx$ that accounts for the combined effect of channel and equalizer. The transmit messages
are assumed equi-probable so the entropy
$\mathcal{H}=\log{|\mathcal{C}|}=Rl$. Applying the Fano inequality~\cite{Cover06:book}
\begin{equation}
\prob(E|\bfunc=f) \geqslant
\frac{Rl-I(\bx;\by|\bfunc=f)}{Rl} -
\frac{\mathcal{H}(\prob(E)|\bfunc=f)}{Rl} \label{eq:Fano}
\end{equation}

By defining $\mathcal{D}_{\delta}$ for any $\delta>0$
as
$
\mathcal{D}_{\delta} \triangleq
\{f:I(\bx;\by|\boldsymbol{f}=f)<l(R-\delta) \},
$
and noting that $\mathcal{H}(\prob(E)|f \in \mathcal{D}_{\delta})
\leqslant \mathcal{H}(\prob(E))$ from~\eqref{eq:Fano}, we get 
\begin{align}
\prob(E|\bfunc\in \mathcal{D}_{\delta}) &\geqslant
\frac{Rl-I(\bx;\by|\bfunc\in
  \mathcal{D}_{\delta})}{Rl} - \frac{\mathcal{H}(\prob(E))}{Rl}
\nonumber \\ &\geqslant \frac{\delta}{R}-
\frac{\mathcal{H}(\prob(E))}{Rl}. \label{eq:PEPFano}
\end{align}
Also by using the definition of $P_{out}$ we have
\begin{equation}
\prob( \bfunc\in \mathcal{D}_{\delta}) =
\prob \big(I(\bx;\by)<l(R-\delta) \big) \doteq
\rho^{-d_{out}(R-\delta,M,N)} \label{eq:PEPResult}
\end{equation}
For small enough values of $\delta>0$, we have
$d_{out}(R,M,N)=d_{out}(R-\delta,M,N)$ since $d_{out}(R,M,N)$ is
left-continuous with respect to $R$. Hence, by invoking \eqref{eq:PEPFano} and \eqref{eq:PEPResult}, the error probability is given by
\begin{align}
\prob_{\text{err}}(R,M,N) &= \prob(E|f \in \mathcal{D}_{\delta})
\prob(f \in \mathcal{D}_{\delta})+ \nonumber \\ &\quad \prob(E|f\notin
\mathcal{D}_{\delta}) \prob(f\notin
\mathcal{D}_{\delta}) \nonumber\\
&\geqslant \prob(E|f \in \mathcal{D}_{\delta}) \prob(f \in
  \mathcal{D}_{\delta}) \nonumber\\
 & \dot{\geqslant} \bigg(
\frac{\delta}{R}- \frac{\mathcal{H}(\prob(E))}{Rl} \bigg)
\rho^{-d_{out}} \nonumber \\ & \doteq \rho^{-d_{out}} 
\end{align}
where we have used  $\big(
\frac{\delta}{R}- \frac{\mathcal{H}(\prob(E))}{Rl} \big) \doteq 1$,
which was derived in~\cite{Tajer:WCOM10}. 
This establishes the proof of the PEP upper bound.
\end{Proof}

\subsection{PEP Lower Bound} 
\label{sec:PEPLB}

We begin by writing the error probability in terms of error event $E$
and outage event $O$
\begin{equation}
\prob_{\text{err}}(R,M,N) = \prob(E|O) \cdot P_{out} + \prob(E,\bar{O}) \nonumber \\
\end{equation}

In Section \ref{sec:OutageUB} we have shown that, based on the event
 $ \big \{
\sum^L_{k=1}\frac{1}{1+\rho \lambda_k} \geqslant M2^{-\frac{R}{M}}-(M-N)^+
\big\}$, the outage probability is upper bounded by $P_{out}
\dot{\leqslant} \rho^{-d_{out}}$.  Hence, the error probability can be
bounded as
\begin{align}
\prob_{\text{err}}(R,M,N) &\dot{\leqslant} \prob(E|
O) \, \rho^{-d_{out}} + \prob(E,\bar{O}) \nonumber \\ &
\leqslant \rho^{-d_{out}} + \prob(E,\bar{O}) \label{eq:Perr}
\end{align}
We intend to show that $\rho^{-d_{out}} \;
\dot{\geqslant} \hspace{0.05in} \prob(E,\bar{O})$, and thus
$\prob_{\text{err}}(R,M,N) \; \dot{\leqslant} \;
\rho^{-d_{out}}$ which produces the following lemma.

\begin{lemma} \label{lemma:PEPLB}
For a quasi-static  fading MIMO channel with MMSE receiver we have
$d_{out}(R,M,N) \leqslant d(R,M,N)$.
\end{lemma}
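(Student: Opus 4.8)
The plan is to prove the announced inequality $\prob(E,\bar O)\ \dot{\leqslant}\ \rho^{-d_{out}}$, after which the lemma follows from \eqref{eq:Perr} together with $\prob(E\mid O)\,P_{out}\ \dot{\leqslant}\ \rho^{-d_{out}}$. I would fix the coding scheme to be an i.i.d.\ complex Gaussian random codebook of rate $R$ and block length $l=l(\rho)$, taken to grow (logarithmically) in $\rho$ — the same device that makes the prefactor $\big(\tfrac{\delta}{R}-\tfrac{\mathcal H(\prob(E))}{Rl}\big)\doteq 1$ in the proof of Lemma~\ref{lemma:PEPUB} — and let the decoder operate on the $M$ decoupled equalizer outputs of \eqref{eq:MMSEOUT}. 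Because the codeword symbols are Gaussian, the residual interference-plus-noise on stream $k$ is exactly Gaussian with signal-to-interference-plus-noise ratio $\gamma_k$, so the pairwise error probability averaged over the random codebook is of exponential order $\prod_{k=1}^{M}(1+\gamma_k)^{-l}=2^{-l\,I(\bx;\by)}$ with $I(\bx;\by)=\sum_k\log(1+\gamma_k)$ as in \eqref{eq:MutualInfo}. A union bound over the $2^{Rl}$ codewords then yields, after conditioning on $\bH$, the bound $\prob(E\mid\bH)\ \dot{\leqslant}\ \min\!\big\{1,\ 2^{-l(I(\bx;\by)-R)}\big\}=2^{-l(I(\bx;\by)-R)^{+}}$.

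Taking the expectation over $\bH$ and restricting to $\bar O=\{I(\bx;\by)\ge R\}$ gives $\prob(E,\bar O)\ \dot{\leqslant}\ \expect_{\bH}\!\big[2^{-l(I(\bx;\by)-R)}\,\mathbf 1_{\{I(\bx;\by)\ge R\}}\big]$. Integrating by parts against the law of $I(\bx;\by)$ turns this into an integral of exponential order $l\!\int_0^\infty 2^{-lt}\,\prob\big(I(\bx;\by)<R+t\big)\,\mathrm{d}t=l\!\int_0^\infty 2^{-lt}\,P_{out}(R+t,M,N)\,\mathrm{d}t$, so the whole problem reduces to the \emph{rate-dependent} outage behaviour already established in Lemmas~\ref{lemma:outageUB}--\ref{lemma:outageLB}, namely $P_{out}(R+t,M,N)\doteq\rho^{-d_{out}(R+t,M,N)}$. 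Since $d_{out}(\cdot,M,N)$ is built from $\big\lceil(M2^{-R/M}-(M-N)^+)^+\big\rceil$, it is locally constant in the rate variable away from the isolated values with $M2^{-R/M}\in\mathbb N$; writing $t_0>0$ for the distance from $R$ to the nearest such breakpoint, I would split the $t$-integral at $t_0$. On $[0,t_0]$ the exponent is constant and the factor $2^{-lt}$ integrates to an $O(1/l)$ term, producing a contribution $\dot{\leqslant}\rho^{-d_{out}(R,M,N)}$. On $[t_0,\infty)$ one only knows $d_{out}(R+t,M,N)\ge 0$, but the factor $2^{-lt}\le 2^{-lt_0}$ dominates; choosing $l=l(\rho)$ to grow at least like $\big(d_{out}(R,M,N)/t_0\big)\log_2\rho$ makes $2^{-lt_0}\ \dot{\leqslant}\ \rho^{-d_{out}(R,M,N)}$ as well. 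Hence $\prob(E,\bar O)\ \dot{\leqslant}\ \rho^{-d_{out}}$ and the lemma follows; the isolated breakpoint rates are dealt with by the one-sided (left-/right-continuous) bounds already recorded in the proofs of Lemmas~\ref{lemma:outageUB}--\ref{lemma:outageLB} and by continuity of the final diversity expression, exactly as there.

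An equivalent, more explicit route is to substitute the worst-case PEP bound directly into the Wishart density \eqref{eq:WishartJointDist} and run the same Laplace / dominant-exponent argument as in Lemma~\ref{lemma:outageUB}: the PEP contributes an extra factor of exponential order $\rho^{-l\sum_i(1-\alpha_i)^{+}}$, which — for $l$ growing — suppresses precisely those configurations of $\balpha$ that keep the effective channel out of outage, so the minimizing $\balpha$ over the integration region is again the one that produced $d_{out}$ in \eqref{eq:DivRes}.

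The step I expect to be the real obstacle is the passage through $l\!\int_0^\infty 2^{-lt}P_{out}(R+t,M,N)\,\mathrm{d}t$. Unlike the DMT setting — where the margin to the capacity boundary is automatically $\Theta(\log\rho)$ and the outage exponent $d^{*}(r)$ is a continuous function of the multiplexing gain — here $R$ is fixed, $2^{Rl}$ and $2^{-lR}$ are comparable, and $d_{out}(\cdot)$ has genuine jump discontinuities coming from the ceiling operation. One therefore has to control $P_{out}(\cdot,M,N)$ \emph{uniformly over an interval of rates} rather than at a single point, and has to calibrate the growth rate of $l(\rho)$ so that it is fast enough to beat the $2^{-lt_0}$ tail yet compatible with the finite-blocklength slack absorbed in $\doteq$. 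Disposing cleanly of the measure-zero set of rates with $M2^{-R/M}\in\mathbb N$ is the remaining loose end, and is exactly the kind of ``ratio that vanishes in DMT but is relevant here'' flagged in the introduction.
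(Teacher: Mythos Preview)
Your approach is valid and arrives at the lemma, but it takes a genuinely different route from the paper. The paper works at the \emph{symbol} level: it writes the post-MMSE residual noise variance $\sigma^2_{\tilde{\bn}}(k)$ explicitly via \eqref{eq:mu}--\eqref{eq:variance}, bounds it by $\sum_{k}\rho^{2}\lambda_{k}/(1+\rho\lambda_{k})^{2}$, and obtains a $Q$-function PEP bound $\prob[E\mid\bH=H]\ \dot{\leqslant}\ 2^{Rl}\exp\!\big(-(\sum_{k}\rho\lambda_{k}/(1+\rho\lambda_{k})^{2})^{-1}\big)$. The key observation is that on the non-outage set the exponent behaves like $\rho^{1-\max_{k}\alpha_{k}}$ with positive power, so the conditional PEP decays \emph{super-polynomially} in $\rho$; this immediately gives $\prob(E,\bar O)\ \dot{\leqslant}\ \rho^{-MN}\leqslant\rho^{-d_{out}}$ and, as the paper notes, holds for \emph{any fixed} block length $l$. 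Your random-coding route replaces this eigenvalue computation by the information-theoretic bound $\prob(E\mid\bH)\ \dot{\leqslant}\ 2^{-l(I(\bx;\by)-R)^{+}}$ and then integrates against the already-established outage law $P_{out}(R+t)\doteq\rho^{-d_{out}(R+t)}$; this is closer to the Zheng--Tse template and reuses Lemmas~\ref{lemma:outageUB}--\ref{lemma:outageLB} more directly, but it only gives polynomial-in-$\rho$ decay for fixed $l$, forcing $l(\rho)$ to grow and creating the uniformity-in-$t$ and breakpoint issues you correctly flag. In short, the paper's argument buys the stronger conclusion $\prob(E,\bar O)\ \dot{\leqslant}\ \rho^{-MN}$ and sidesteps the rate-interval uniformity problem entirely; yours buys modularity (no symbol-level variance calculation, no case split on $M(\balpha)$) at the cost of a growing block length and a more delicate tail estimate.
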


\begin{Proof} 
\label{proof lemma:PEPLB}

We begin by giving a sketch of the proof then we proceed with the
details. The first part of the proof consists of developing a bound on
PEP conditioned on $H$, namely $P[s_k\rightarrow s_j|\bH=H]$. To do
this we obtain an upper bound of the variance of the SINR which is
expressed in terms of the eigenvalues of the Wishart matrix
$\mathcal{W}$, resulting in $ \prob[E | \bH=H] \leqslant 4\exp({-\big(
  \sum^{L}_{k=1} \frac{\rho \lambda_k}{(1+\rho\lambda_k)^2}
  \big)^{-1}}) \, .  $ The PEP is used to derive a conditional union
bound on error. We then divide the channel events into two sets based
on the exponential order of the eigenvalues: the set where
$M(\balpha)=0$ and otherwise.  We apply Bayes theorem on the union
bound using these two sets. The calculation of the terms of the
Bayesian gives $\prob(E,\bar{O}) \dot{\leqslant} \rho^{-MN}\leqslant
\rho^{-d_{out}}$ as desired.

We now proceed in detail. We want to compute the probability that the
transmitted symbol $x(k)=s_l$ is erroneously detected as $x(k)=s_j$.

Recalling the equalizer output given by~\eqref{eq:MMSEOUT}, define the noise-plus-interference signal
\begin{equation}
  \tilde{\bn} =
\by-\sqrt{\rho}\bx =
\sqrt{\rho}(\bW \bH-\bI)\bx+\bW \bn
\end{equation}
 
Using the eigen-decomposition of $H$ and noting that $E(\bn)=0$
  and $E(\bn\bn^{H})=I$, we have 
\begin{align}
\mu_{\tilde{\bn}} &\triangleq E(\tilde{\bn})=\sqrt{\rho}(\bW\bH-\bI) =
-\rho^{\frac{1}{2}} (\mathcal{W}+\rho^{-1} \bI)^{-1}
\bx \label{eq:mu}\\ \boldsymbol{R_{\tilde{\bn}}} &\triangleq
E(\tilde{\bn} \tilde{\bn}^H) = (\mathcal{W}+\rho^{-1} \bI)^{-1}
\label{eq:R}
\end{align}
Thus the variance of the noise sample $\tilde{n} (k)$ is given by
\begin{align}
\sigma^2_{\tilde{\bn}}(k) &=
\boldsymbol{R_{\tilde{\bn}}}(k,k) -
|\mu_{\tilde{\bn}}(k)|^2 \nonumber \\ &=
(\mathcal{W}+\rho^{-1} \bI)^{-1}_{kk} -\rho^{-1}
(\mathcal{W}+\rho^{-1} \bI)^{-2 }_{kk} \label{eq:variance}
\end{align}

where $|\mu_{\tilde{\bn}}(k)|^2$ is the $k^{th}$ diagonal
of the matrix $E(\tilde{\bn})E(\tilde{\bn}^H)$
and $k$ counts from 1 to $M$.

By defining $e_{jl} \triangleq \frac{s_j-s_l}{|s_j-s_l|}$, the
probability of erroneous detection for channel realization is given by
\begin{equation}
\begin{split}
\prob[s_l \rightarrow & s_j | \bH=H] \\ &= \prob \bigg[ \frac{\rho}{4}
  |s_j-s_l|^2 \leqslant |e^*_{jl}(y(k)-\sqrt{\rho}s_l)|^2
  \bigg |\bH=H \bigg] \\ &\leqslant \prob \bigg[ \frac{\rho}{4}
  |s_j-s_l|^2 \leqslant |\tilde{n}_k|^2 \bigg | \bH=H \bigg] 
\end{split}
\end{equation}
where the inequality holds since $|e^*_{jl}(y(k)-\sqrt{\rho}s_l)| \leqslant |e^*_{jl}||(y(k)-\sqrt{\rho}s_l)|=|(y(k)-\sqrt{\rho}s_j)|=|\tilde{n}(k)|$. 

Denoting the real and imaginary parts of $\tilde{n}(k)$ by $\tilde{n}_r(k)\sim \mathcal{N}(\mu_r(k),\sigma^2_r(k))$ and  $\tilde{n}_i(k)\sim \mathcal{N}(\mu_i(k),\sigma^2_i(k))$ respectively, we then have
 \begin{align}
 &\big\{ \frac{\rho}{4}|s_j-s_l|^2 \leqslant |\tilde{n}(k)|^2 \big\} \nonumber \\
 &\subset \big\{ \frac{\rho}{16}|s_j-s_l|^2 \leqslant |\tilde{n}_r(k)|^2 \big\} \cup  \big\{ \frac{\rho}{16}|s_j-s_l|^2 \leqslant |\tilde{n}_i(k)|^2 \big\}
 \end{align}

Applying the property of the Gaussian tail function $Q(x) \leqslant e^{(-x^2/2)}$ for the pairwise error probability, we obtain
\begin{align}
&\prob[s_k \rightarrow  s_j | \bH=H] \nonumber\\
  &\leqslant e^{\bigg(
  -\frac{(\frac{\sqrt{\rho}}{4}|s_j-s_l|-\mu_r(k))^2}{\sigma_{r}^2(k)}
  \bigg)} + e^{\bigg(
  -\frac{(\frac{\sqrt{\rho}}{4}|s_j-s_l|+\mu_r(k))^2}{\sigma_{r}^2(k)}
  \bigg)} \nonumber
\\ &+  e^{\bigg(
  -\frac{(\frac{\sqrt{\rho}}{4}|s_j-s_l|-\mu_i(k))^2}{\sigma_{i}^2(k)}
  \bigg)} + e^{\bigg(
  -\frac{(\frac{\sqrt{\rho}}{4}|s_j-s_l|+\mu_i(k))^2}{\sigma_{i}^2(k)}
  \bigg)}  \nonumber  \\
 & \leqslant e^{\bigg(
  -\frac{(\frac{\sqrt{\rho}}{4}|s_j-s_l|-\mu_r(k))^2}{\sigma_{\tilde{\bn}}^2(k)}
  \bigg)} + e^{\bigg(
  -\frac{(\frac{\sqrt{\rho}}{4}|s_j-s_l|+\mu_r(k))^2}{\sigma_{\tilde{\bn}}^2(k)}
  \bigg)} 
\nonumber\\ &+  e^{\bigg(
  -\frac{(\frac{\sqrt{\rho}}{4}|s_j-s_l|-\mu_i(k))^2}{\sigma_{\tilde{\bn}}^2(k)}
  \bigg)} + e^{\bigg(
  -\frac{(\frac{\sqrt{\rho}}{4}|s_j-s_l|+\mu_i(k))^2}{\sigma_{\tilde{\bn}}^2(k)}
  \bigg)} 
 \label{eq:PEP}
\end{align}
where the last step holds as $\sigma^2_{\bn}(k)=\sigma^2_r(k)+\sigma^2_i(k) \geqslant \sigma^2_r(k), \sigma^2_i(k)$.

Now we proceed by showing that $\mu_i(k) \dot{\leqslant} \rho^{\frac{1}{2}}$. Consider the eigen decomposition of
\begin{align}
[\mathcal{W} + \rho^{-1} \bI]^{-1} &= \mathbf{U}^H [\Lambda+\rho^{-1} \bI]^{-1} \mathbf{U} \nonumber \\
&= \mathbf{U}^H \big[ \diag\big\{ \frac{1}{\lambda_k+\rho^{-1}}  \big\} \big] \mathbf{U}
\end{align}
where $\mathbf{U}$ is unitary matrix, and $\Lambda$ is the eigen decomposition of $\mathcal{W}$.
Note that $\lambda_k+\rho^{-1} \dot{\geqslant} \rho^{-1}$ or $\frac{1}{\lambda_k+\rho^{-1}} \dot{\leqslant} \rho$. Therefore, all elements of the matrix $\pm \mathbf{U}^H [\Lambda+\rho^{-1} \bI]^{-1} \mathbf{U}$, being linear combination of $\{ \frac{1}{\lambda_k+\rho^{-1}} \}$, cannot grow faster than $O(\rho)$, and thus the elements of  $\pm \rho^{\frac{1}{2}} [\mathcal{W}+\rho^{-1} \bI]^{-1} $ cannot grow faster than $O(\rho^{\frac{1}{2}})$, i.e. $\pm \mu_{\tilde{\bn}}(k) \dot{\leqslant{\rho^{\frac{1}{2}}}}$ and therefore $ \rho^{\frac{1}{2}} \pm\mu_{\tilde{\bn}}(k) \doteq \rho^{\frac{1}{2}}$. The same result holds for $\mu_r(k)$ and $\mu_i(k)$.

As a result, for any $s_j$ and $s_l$, $\frac{\sqrt{\rho}}{4}|s_j-s_l| \pm \mu_r(k) \doteq \rho^{\frac{1}{2}} \pm \mu_r(k) \doteq \rho^{\frac{1}{2}}$ and similarly $\frac{\sqrt{\rho}}{4}|s_j-s_l| \pm \mu_i(k) \doteq \rho^{\frac{1}{2}}$.
Thus from \eqref{eq:PEP}, we have 
\begin{equation}
\prob[s_k \rightarrow  s_j | \bH=H] \quad \dot{\leqslant}\quad 4 e^{-\frac{\rho}{\sigma^2_{\tilde{\bn}}(k)}} \label{eq:PEPAsym}
\end{equation}

 Now we bound the variance in \eqref{eq:variance} and apply it in \eqref{eq:PEPAsym}
\begin{align}
\sigma^2_{\tilde{\bn}}(k) &\leqslant  \sum^{L}_{k=1} \bigg[
(\mathcal{W}+\rho^{-1} \bI)^{-1}_{kk} -\rho^{-1}
(\mathcal{W}+\rho^{-1} \bI)^{-2 }_{kk} \bigg]\nonumber \\ &=
\sum^{L}_{k=1} \bigg[\frac{\rho}{1+\rho\lambda_k} -
\frac{\rho}{(1+\rho\lambda_k)^2}\bigg] = \sum^{L}_{k=1}
\frac{\rho^2 \lambda_k}{(1+\rho\lambda_k)^2} \label{eq:varBound}
\end{align}

Denoting the error event $E$ and using \eqref{eq:varBound}, the probability of erroneous detection in \eqref{eq:PEPAsym} is bounded as 
\begin{align}
 \prob[E | \bH=H]
  \leqslant 4e^{-\big( \sum^{L}_{k=1} \frac{\rho
    \lambda_k}{(1+\rho\lambda_k)^2} \big)^{-1}}
\end{align}

Applying the union bound, we get
\begin{equation}
\prob(E|\bH=H) \dot{\leqslant} 2^{Rl} e^{-\big( \sum^{L}_{k=1} \frac{\rho
    \lambda_k}{(1+\rho\lambda_k)^2} \big)^{-1}} \label{eq:PEPBound}
\end{equation}

Based on \eqref{eq:PEPBound}, we can evaluate $P(E,\bar{O })$ in \eqref{eq:Perr} as
follows. Recalling the exponential inequality
 \begin{align}
\sum^n_{k=1}\frac{1}{1+\rho \lambda_k} &\doteq
\sum_{\alpha_k>1}1+\sum_{\alpha_k<1}\rho^{\alpha_k-1} \\ &
\doteq
M(\balpha)+\max_{\{\alpha_k:\alpha_k<1\}}{\rho^{\alpha_k-1}} \nonumber \\
&\doteq M(\balpha) \label{eq:M}
\end{align}

Consider the two regions: $\{\balpha: M(\balpha)=0 \}$ and $\{\balpha:
M(\balpha) \geqslant 1 \}$. At high SNR the event $\bar{O}$ is
equivalent to $\{ \balpha : M(\balpha) \dot{\leqslant} \lceil
M2^{-\frac{R}{M}}-(M-N)^+\rceil \}$.

In the first region $\{M(\balpha)=0 \}$, at any rate $R \geqslant 0$
we have $\{\balpha : \lceil M2^{-\frac{R}{M}}-(M-N)^+\rceil
\dot{\geqslant} M(\balpha) =0 $ so there is no outage.

In the second region $\{M(\balpha) \geqslant 1 \}$ the exponent order
of the outage probability depends on the rate. We investigate these
two regions separately.

In the region  $\{\balpha: M(\balpha)=0 \}$, we have $\max_k \alpha_k<1$ since all $\alpha_k 's<1$. From \eqref{eq:PEPBound} and \eqref{eq:M} we conclude that 
\begin{align}
\prob(E,\bar{O}|M(\balpha)=0) &\dot{\leqslant} 2^{Rl} e^{-\rho^{\big(\max_k\alpha_k-1\big)^{-1}}}  \nonumber \\
&=  2^{Rl} e^{-\rho^{\big(1-\max_k\alpha_k\big)}}
\end{align}

Since exponential function dominates all polynomials and $1-\max_k\alpha_k>0$, we get
\begin{equation}
\lim_{\rho \rightarrow \infty}\frac{e^{-\rho^{\big(1-\max_k\alpha_k\big)}}}{\rho^{-MN}}=0 \nonumber
\end{equation}

which in turn yields
\begin{align}
\prob(E,\bar{O}|M(\balpha)=0) &\dot{\leqslant} 2^{Rl} e^{-\rho^{\big(1-\max_k\alpha_k\big)}} \nonumber \\
&\dot{\leqslant} \rho^{-MN} \label{eq:Region1}
\end{align}

We next show that the same result holds for the other region $\{\balpha: M(\balpha) \geqslant 1 \}$.

 Following the same line of argument as we did for \eqref{eq:Region1} but for $M(\balpha) \geqslant 1$, we have 
\begin{align}
P(E,\bar{O}|M(\balpha)\geqslant 1) &\;\dot{\leqslant} \;2^{Rl} e^{-\big( \sum^{L}_{k=1} \frac{\rho\lambda_k}{(1+\rho\lambda_k)^2} \big)^{-1}}\nonumber\\
 &\leqslant e^{2^{Rl}} e^{- \big(\sum_k\frac{1}{1+\rho\lambda} - \sum_k\frac{\rho\lambda_k}{(1:\rho\lambda_k)^2} \big)^{-1} } \nonumber \\
&= e^{2^{Rl}} \underbrace{ e^{- \big(\sum_k\frac{1}{1+\rho\lambda_k} \big)^{-1} } }_{\dot\leqslant 1 \text{ since } M(\balpha) \geqslant 1} \times \nonumber\\
&\qquad e^{\bigg[ -\frac{\sum_k\frac{1}{(\rho\lambda_k+1)^2}}{\big(\sum_k\frac{1}{1+\rho\lambda_k }\big) \big( \sum_k\frac{\rho\lambda_k}{(1+\rho\lambda_k)^2}\big)  } \bigg] }  \nonumber \\
&\;\dot{\leqslant}\; e^{2^{Rl}} e^{\big[- \frac{LM(\balpha)}{LM(\balpha)\rho^{-\min_k|1-\alpha_k|}}\big]} \label{eq:Regiopn2Approx} \\
&\doteq e^{-\rho^{\min_k|1-\alpha_k|}}  \nonumber\\
&\;\dot{\leqslant}\; e^{\rho^{1-\max_k \alpha_k}} \;\dot{\leqslant} \;\rho^{-MN} \label{eq:Region2}
\end{align}
where \eqref{eq:Regiopn2Approx} is direct application of~\eqref{eq:M} for $M(\balpha) \geqslant 1$, and~\eqref{eq:Region2} follows from the fact that $|1-\alpha_k| \geqslant 1$.
Note that \eqref{eq:Region2} is true for any code length $l$.
Invoking the results of \eqref{eq:Region1} and \eqref{eq:Region2}, we can now evaluate $\prob(E,\bar{\mathcal{C}})$ as follows
 \begin{align}
\prob(E,\bar{O}) &=\int\limits_{M(\balpha)=0} \prob(E,\bar{O}|M(\balpha)=0) \prob(\balpha) \text{d}\balpha \nonumber \\
&+ \int\limits_{M(\balpha)\geqslant 1}\prob(E,\bar{O}|M(\balpha) \geqslant 1) \prob(\balpha) \text{d}\balpha \label{nonumber}\\
&\;\dot{\leqslant}\; \rho^{-MN}  \int\limits_{M(\balpha)=0} \prob(\balpha) \text{d}\balpha + \rho^{-MN} \int\limits_{M(\balpha)\geqslant 1}  \prob(\balpha) \text{d}\balpha \\
&\doteq \rho^{-MN}
\end{align}

Therefore,  $\prob(E,\bar{O}) \dot{\leqslant} \rho^{-MN} $ for all regions of $\balpha$.
Finally, \eqref{eq:Perr} becomes
\begin{align}
\prob_{\text{err}}(R,M,N) &\;\dot{\leqslant}\; \prob(E|
O)\;\rho^{-d_{out}} + \prob(E,\bar{O}) \nonumber \\ 
&\leqslant \rho^{-d_{out}} + \prob(E,\bar{O}) \nonumber \\
&\doteq \rho^{-d_{out}} + \rho^{-MN} \nonumber \\
&\doteq  \rho^{-d_{out}} \nonumber \\
&= P_{out}(R,M,N)\label{eq:60}
\end{align}
which establishes the lemma.
\end{Proof}

From Lemma~\ref{lemma:PEPUB} and Lemma~\ref{lemma:PEPLB}, we thus get

\begin{theorem} \label{theorem:MMSE_MIMO}
For MMSE MIMO Receiver under quasi-static channel and joint spatial
encoding, the pairwise error probability (PEP) and the outage
probability $P_{out}$ are exponentially equal and the diversity gain
is $d(R,M,N)=d_{out}(R,M,N)$, where $d_{out}(R,M,N)$ is given in \eqref{eq:d_out}.
\end{theorem}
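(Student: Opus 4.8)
The plan is to obtain Theorem~\ref{theorem:MMSE_MIMO} as the synthesis of the four lemmas already in hand; essentially no new analytic work is needed, only a careful splicing together of the bounds. The first step is to nail down the exponential order of the outage probability exactly: Lemma~\ref{lemma:outageUB} gives $P_{out}(R,M,N)\dot{\leqslant}\rho^{-d_{out}(R,M,N)}$ and Lemma~\ref{lemma:outageLB} gives the matching $P_{out}(R,M,N)\dot{\geqslant}\rho^{-d_{out}(R,M,N)}$, with $d_{out}$ the explicit expression of~\eqref{eq:d_out}. Since the two exponents coincide, $P_{out}(R,M,N)\doteq\rho^{-d_{out}(R,M,N)}$.

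The second step is to transfer this order to the codeword/pairwise error probability by a two-sided squeeze. The inequality $d(R,M,N)\geqslant d_{out}(R,M,N)$ is Lemma~\ref{lemma:PEPLB}: one writes $\prob_{\text{err}}=\prob(E|O)P_{out}+\prob(E,\bar O)$, bounds the outage term by $\rho^{-d_{out}}$ using the outage analysis, and shows the no-outage contribution is $\prob(E,\bar O)\dot{\leqslant}\rho^{-MN}\leqslant\rho^{-d_{out}}$ via the conditional Chernoff/union bound on $\prob(E\mid\bH=H)$ and the eigenvalue-order partition into $\{M(\balpha)=0\}$ and $\{M(\balpha)\geqslant1\}$ (the bound $d_{out}\leqslant MN$ being immediate from~\eqref{eq:d_out}). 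The reverse inequality $d(R,M,N)\leqslant d_{out}(R,M,N)$ is Lemma~\ref{lemma:PEPUB}, obtained from Fano's inequality on the event $\mathcal{D}_{\delta}$ together with $\prob(\bfunc\in\mathcal{D}_{\delta})\doteq\rho^{-d_{out}(R-\delta,M,N)}$ and the left-continuity of $d_{out}$ in $R$, which permits replacing $d_{out}(R-\delta,M,N)$ by $d_{out}(R,M,N)$ for $\delta$ small enough. Combining the two inequalities gives $d(R,M,N)=d_{out}(R,M,N)$, and together with $P_{out}\doteq\rho^{-d_{out}}$ this says PEP and outage are exponentially equal; inserting~\eqref{eq:d_out} yields the stated closed form.

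The one point requiring care in the assembly is consistency on the measure-zero set of rates with $M2^{-R/M}\in\mathbb{N}$, where the raw outage lower bound of Lemma~\ref{lemma:outageLB} is a shade weaker than the upper bound (the $\lceil\cdot\rceil$ versus $\lfloor\cdot+1\rfloor$ mismatch). The resolution is to commit throughout to the left-continuous version of $d_{out}$ --- which is precisely the expression~\eqref{eq:d_out} with $\lceil\cdot\rceil$ --- and to observe that this is exactly the version for which the Fano step of Lemma~\ref{lemma:PEPUB} was proved, so no circularity arises and the identity $d=d_{out}$ persists at those points as well. I do not anticipate any genuine obstacle at the level of the theorem: all the substantive difficulty (the Jensen-based outage bounds, the Wishart eigenvalue integral following~\cite{Zheng:JIT03}, and the tail bound on $\prob(E\mid H)$) has already been discharged in the lemmas, and what remains is bookkeeping.
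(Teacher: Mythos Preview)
Your proposal is correct and mirrors the paper's approach exactly: the theorem is stated immediately after Lemma~\ref{lemma:PEPLB} with the one-line justification ``From Lemma~\ref{lemma:PEPUB} and Lemma~\ref{lemma:PEPLB}, we thus get,'' so the proof is precisely the splicing of the four lemmas you describe, with the outage lemmas pinning down $P_{out}\doteq\rho^{-d_{out}}$ and the PEP lemmas sandwiching $d$ between $d_{out}$ from both sides. Your handling of the integer-rate discontinuity via the left-continuous convention is also consistent with how the paper treats it in the proofs of Lemmas~\ref{lemma:outageLB} and~\ref{lemma:PEPUB}.
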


\section{Multiple-Access Channel (MAC)} 
\label{sec:MAC}

We now extend the result to the MAC channel. Consider a MIMO MAC channel with $K$ users, $M$ transmit antennas per
user, $N$ receive antennas (there is no condition on $M, N$ and $k$). Assume flat fading MIMO
channel, the  system model is given by
\begin{equation}
 \by = \sum_{i=1}^{K} \bH_i \bx_i + \bn 
     = \bHeq \bX+ \bn \label{eq:MACmodel}
\end{equation}
where $ {\bH_i \in \mathbb{C}^{N \times M}} $ is the user $i$ channel
matrix whose entries are independent and identically distributed
complex Gaussian, $\bHeq=[\bH_1 \bH_2 \dots \bH_K]$ is the overall
equivalent channel matrix, $ {\bx_i \in \mathbb{C}^{\ M \times 1}}$ is
the transmitted vector of user $i$, $\bX=[\bx_1^T \bx_2^T \dots
  \bx_K^T]^T$ is the overall transmitted vector, and ${\bn \in
  \mathbb{C}^{\ N \times 1}} $ is the Gaussian noise vector. The
vectors ${\bX}$ and ${\bn}$ are assumed independent.  We keep the same
assumptions about the channel. That is we assume a quasi-static flat
fading channel and perfect CSIR and no CSIT. We have the following
theorem

\begin{theorem}
\label{Th:MAC_div}
In a MIMO MAC system with MMSE receiver consisting of $K$ users, $M$
transmit antennas per user and $N$ receive antennas, the lower and
upper bounds on the per user diversity are respectively given by $
d^{\text{MAC}}_{L}(R)$ and $ d^{\text{MAC}}_{U}(R)$,
\begin{align}
 d^{\text{MAC}}_{L}(R)&=\bigg\lceil \big(M 2^{-R/M}-(M-N)^+\big)^+\bigg\rceil^{2}+\nonumber\\
&\hspace{20pt} \big|N-KM\big|\bigg\lceil \big(M
 2^{-R/M}-(M-N)^+\big)^+\bigg\rceil \label{eq:MAC_DRT_Th}\\ 
d^{\text{MAC}}_{U}(R)&=\bigg\lceil \big( KM 2^{-R/{KM}}-(M-N)^+\big)^+\bigg\rceil^{2}+\nonumber\\
&\hspace{20pt} \big|N-KM\big|\bigg\lceil \big(KM
 2^{-R/{KM}}-(M-N)^+\big)^+\bigg\rceil. \label{eq:MAC_DRT_Th2}
\end{align}
\end{theorem}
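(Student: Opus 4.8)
\emph{Proof plan.} The idea is to reduce \eqref{eq:MACmodel} to the point-to-point setting already analyzed. The stacked matrix $\bHeq\in\mathbb{C}^{N\times KM}$ has i.i.d.\ $\mathcal{CN}(0,1)$ entries, so the MMSE SINR \eqref{eq:MMSESINR} holds verbatim with $\mathcal{W}=\bHeq^{H}\bHeq$ now a $KM\times KM$ Wishart matrix, and each user owns an $M$-element subset $\mathcal{S}_i$ of the $KM$ decoupled streams, so that the user-$i$ outage event is $\{\sum_{k\in\mathcal{S}_i}\log(1+\gamma_k)<R\}$. The whole proof is then a parametrized replay of Lemmas~\ref{lemma:outageUB}--\ref{lemma:PEPLB} in which two distinct dimension counts coexist: $M$, which governs a user's own rate split, and $KM$, which governs the shared eigenvalue spectrum.

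For the \textbf{upper bound} $d^{\text{MAC}}_{U}$ I would lower-bound the per-user outage by a full-system outage: since the other users' streams only add non-negative terms, a user's rate never exceeds $\sum_{k=1}^{KM}\log(1+\gamma_k)=I(\bX;\by)$, so the event that the whole equivalent $N\times KM$ system is in outage at rate $R$ is contained in the event that this user is in outage. Invoking Theorem~\ref{theorem:MMSE_MIMO} on that equivalent channel — i.e.\ \eqref{eq:d_out} with $M\to KM$ — yields the $2^{-R/(KM)}$ scaling and the surplus factor $|N-KM|$ of \eqref{eq:MAC_DRT_Th2}; writing the null-eigenvalue offset as $(M-N)^+$ rather than $(KM-N)^+$ only weakens the bound and keeps it in the same form as the lower bound.

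For the \textbf{lower bound} $d^{\text{MAC}}_{L}$ I would replay the proof of Lemma~\ref{lemma:outageUB} restricted to one user. Jensen's inequality on the $M$-term sum, $\sum_{k\in\mathcal{S}_i}\log(1+\gamma_k)=-\sum_{k\in\mathcal{S}_i}\log(\bI+\rho\mathcal{W})^{-1}_{kk}$, confines the per-user outage to an event of the form $\{\sum_{k\in\mathcal{S}_i}(\bI+\rho\mathcal{W})^{-1}_{kk}>M2^{-R/M}\}$; expanding each diagonal entry in the eigenbasis of $\mathcal{W}$ and bounding the (unit-row-norm) eigenvector weights reduces this to an event on $\sum_{\ell}(1+\rho\lambda_\ell)^{-1}$, where $\{\lambda_\ell\}$ are the $KM$ eigenvalues of $\mathcal{W}$, of which $(KM-N)^{+}$ are null and the rest Wishart-distributed. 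Passing to the exponents $\balpha$ and integrating the joint density \eqref{eq:WishartJointDist} with parameters $n=\min(N,KM)$, $m=\max(N,KM)$ exactly as in \eqref{eq:OutageIntegral} gives the exponent $S^{2}+|N-KM|\,S$ with $S=\big\lceil\big(M2^{-R/M}-(M-N)^+\big)^+\big\rceil$, which is \eqref{eq:MAC_DRT_Th}; the isolated integer rates and the left/right-continuity caveats are handled exactly as in Lemma~\ref{lemma:outageLB}. Finally, the PEP sandwich of Lemmas~\ref{lemma:PEPUB}--\ref{lemma:PEPLB} transfers without change: the Fano bound and the conditional-Chernoff/union-bound estimate use only the SINR form \eqref{eq:MMSESINR} and the Wishart eigenvalue law, and $\prob(E,\bar O)\ \dot{\leqslant}\ \rho^{-KMN}\ \leqslant\ \rho^{-d^{\text{MAC}}_{L}}$ still makes the non-outage contribution negligible, so PEP and outage remain exponentially equal and the per-user diversity is squeezed between \eqref{eq:MAC_DRT_Th} and \eqref{eq:MAC_DRT_Th2}.

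The \textbf{main obstacle} is keeping the two dimension scales separate in the lower-bound integral: the per-user rate constraint brings in $M2^{-R/M}$ and an ``$M$-sized'' null-mode count, but the eigenvalues being integrated belong to the $KM\times KM$ matrix $\bHeq^{H}\bHeq$, so the surplus-dimension weight in the integrand is $|N-KM|$, not $|N-M|$. Getting this interaction right (and the post-ceiling null-eigenvalue bookkeeping — whether the relevant offset is $(M-N)^+$ or $(KM-N)^+$) is the delicate step, and is precisely why $d^{\text{MAC}}_{L}$ and $d^{\text{MAC}}_{U}$ separate once $K>1$, whereas for $K=1$ both collapse to \eqref{eq:d_out}.
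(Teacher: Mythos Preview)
Your proposal is correct and mirrors the paper's approach: for $d^{\text{MAC}}_{U}$ you lower-bound the per-user outage by the full $N\times KM$ system outage and invoke the single-user result with $M\to KM$, and for $d^{\text{MAC}}_{L}$ you apply Jensen to the user's $M$ diagonal entries and then pass to the eigenvalue sum of $\bHeq^{H}\bHeq$. The only cosmetic difference is that the paper skips your eigenvector-weight argument and instead reaches the eigenvalue sum by padding $\sum_{k\in\mathcal{S}_i}(\bI+\rho\mathcal{W})^{-1}_{kk}$ with the remaining $(K-1)M$ nonnegative diagonal terms, turning it directly into $\mathrm{tr}\,(\bI+\rho\mathcal{W})^{-1}=\sum_{\ell}(1+\rho\lambda_\ell)^{-1}$.
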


From~\eqref{eq:MAC_DRT_Th} it is straightforward to verify the single
user case. The machinery of the proof is mostly similar to the single
user case. However, the outage upper and lower bounds are obtained in
a different manner that is pointed out in the following analysis for
$N\geqslant M$. The case $ N<M$ can be similarly obtained.

\subsection{MAC Outage Upper Bound}
\label{subsec:MAC_DRT_UB}

The user $i$ outage probability can be written as
\begin{equation}
P^i_{out} = \prob\bigg(\sum^{iM}_{k=(i-1)M+1}\log(1+\gamma^i_k)<R\bigg). 
\label{eq:MAC_MMSEoutage}
\end{equation}
where $\gamma^i_k$ is the SINR of the stream $k$ of user $i$.
Specializing this to MMSE receiver we get
\begin{equation}
P^i_{out} = \prob\bigg(\sum^{iM}_{k=(i-1)M+1}\log(\bI+\rho \bHeq^{H}\bHeq)^{-1}_{kk}>-R\bigg).
\label{eq:MACUpperBoundSum}
\end{equation}

Using Jensen's Inequality the outage probability can be bounded as
\begin{align}
P^{i}_{out}&\leqslant \prob\big(
\log\big(\!\!\!\!\!\!\! \sum^{iM}_{k=(i-1)M+1} \!\! \frac{1}{M}(\bI+\rho \bHeq^H\bHeq)^{-1}_{kk}
\big)>\frac{-R}{M}\big) \nonumber \\ 
&\leqslant \prob\big(
\log\big(\sum^{KM}_{k=1}\frac{1}{M} (\bI+\rho \bHeq^H\bHeq)^{-1}_{kk}
\big)>\frac{-R}{M}\big)  \label{eq:MACDRTUpper}  \\
&= \prob\big( \sum^{KM}_{k=1} \frac{1}{1+\rho \lambda_{k}} > M2^{-\frac{R}{M}}\big) \label{eq:MACDRTEigen} 
\end{align}
where~\eqref{eq:MACDRTUpper} is true since the summation in the
left-hand side of the inequality adds more positive terms (recall that
$(\bI+\rho \bHeq^H\bHeq)$ is a positive definite
matrix~\cite{Telatar:EUR99}). 
Following similar steps that were used to obtain~\eqref{eq:DivRes} we can easily show that $P^i_{out} \;\dot{\leqslant}\; \rho^{-d^{\text{MAC}}_L}$, where $d^{\text{MAC}}_L$ is given by~\eqref{eq:MAC_DRT_Th}.

\subsection{MAC Outage Lower Bound}
\label{subsec:MAC_DRT_LB}

The outage probability can be lower bounded as follows
\begin{align}
P^i_{out} &= \prob\bigg(\sum^{iM}_{k=(i-1)M+1}\log(\bI+\rho
\bHeq^{H}\bHeq)^{-1}_{kk}>-R\bigg) \nonumber \\ &\geqslant
\prob\bigg(\sum^{KM}_{k=1} \log(\bI+\rho \bHeq^H \bHeq) ^{-1}_{kk}
>-R\bigg) \label{eq:MACDMTLBTrivial} \\ 
 &\;\dot{\geqslant}\; \prob\bigg(\sum^{KM}_{k=1}(\bI+\rho \bHeq^H\bHeq)^{-1}_{kk}>\frac{KM}{a}2^{\frac{-R}{M}} \bigg)\label{eq:JensLowBou}
\end{align}
where \eqref{eq:MACDMTLBTrivial} is a trivial bound based on
dedicating all $KM$ antennas to {\it one} user, and
\eqref{eq:JensLowBou} uses the same technique as in
Section~\ref{sec:LB}, and $a$ is a positive number slightly less than
one.
Following similar steps that were used to
obtain~\eqref{eq:DivRes} we can easily show that $P^i_{out}
\;\dot{\geqslant}\; \rho^{-d^{\text{MAC}}_U}$, where
$d^{\text{MAC}}_U$ is given by~\eqref{eq:MAC_DRT_Th2}.

\section{Frequency-Selective Channel} 
\label{sec:freqSel}

Broadband wireless systems usually operate in frequency-selective
channels where, in addition to the spatial diversity obtained in MIMO
broadband systems, frequency diversity can be achieved. Broadband
systems usually employ orthogonal frequency division multiplexing
(OFDM) or single carrier (SC)
transmission~\cite{AlDhahir:CommLet01}. Specifically, SC was shown to
be attractive for broadband wireless channels due to its lower
complexity, lower peak-to-average power ratio and reduced sensitivity
to carrier frequency errors compared to
OFDM~\cite{Sari:CommMag95,AlDhahir:CommLet01}.

In this section, we investigate the diversity achieved by SC-MMSE
receivers for two block transmission schemes, namely cyclic prefix
(CP) and zero-padding (ZP) schemes. The CP and ZP are commonly used
for guard intervals in block quasi-static channels. Although CP was
initially proposed for both single carrier and multi-carrier systems,
ZP was lately shown to be an attractive alternative for both
systems~\cite{Muquet:JC02,Zhendao:WCNC02}.

\subsection{System Model} 
\label{sec:freqSelModel}

We consider a general MIMO system in a rich scattering quasi-static
environment. The equivalent baseband channel is given by multipath
model with $\nu$ paths referred to as the ISI channel in the
sequel. The $(\nu+1)$-tap channel impulse response between the
transmit antenna $m$ and receive antenna $n$ is denoted by the vector
$\bh_{mn}=[h_{mn,0},h_{mn,1},\dots, h_{mn,\nu}]$. We assume a
block-fading model where $\bh_{mn}$ remains unchanged during a
transmission block.
Assuming $M$ transmit and $N$ receive antennas, the received vector $\by_{k}$ at time instant $k$ is given by~\cite{Scaglione:JSP02,Tajer:WCOM10}
\begin{equation}
\by_{k}=\sum^{\nu}_{i=0}\bH_i\bx_{k-i}+\bn_k\label{eq:FreqSelModel}
\end{equation}
where $\bH_i$ is the $M\times N$ channel matrix that has $h_{mn,i}$ as
its $(m,n)$ element, $\bx_{k-i}$ is $M\times 1$ transmitted vector at
time index $k-i$, $\by_k$ is the $N\times 1$ received vector and
$\bn_k$ is the $N\times 1$ Gaussian noise vector at time index $k$.

Consider a transmission of $L_d+L_e$ spatial vectors each of size
$M\times 1$, where $L_d$ is an integer representing the number of
transmissions over the quasi-static channel and $L_e$ is the length of
data extension to avoid inter-block interference, in the form of
either zero-padding or cyclic prefix. The receiver discards the $L_e$
vectors in the case of cyclic-prefix
transmission~\cite{Scaglione:JSP02}. Stacking the transmitted vector
in an $M(L_d+L_e) \times 1$ vector, we can write the stacked
$M(L_d+L_e)\times 1$ transmitted as follows
\begin{align}
\bar{\bx}_k&=[\bx^T_{k(L_d+L_e)},\dots, \bx^T_{k(L_d+L_e)+L_d+L_e-1} ]\nonumber
\end{align}

We can then rewrite~\eqref{eq:FreqSelModel} as
\begin{equation}
\bar{\by}_{cp}=\bar{\bH}\,\bar{\bx}+\bar{\bn} \label{eq:FreqSelModelMod}
\end{equation}
where $\bar{\by}_{cp}$ is the $NL_d\times 1$ received vector, $\bar{\bx}$ is the $M(L_d+L_e)\times 1$ transmitted vector, $\bar{\bn}$ is the white Gaussian noise vector $\in \mathbb{C}^{\ NL_d \times 1}$ and $\bar{\bH}$ is the channel matrix given by
\begin{equation}
\bar{\bH}=
\begin{bmatrix} \bH_0 &\bH_1 &\cdots &\bH_{\nu} &0 &\cdots &0 \\
                0  &\bH_0 &\bH_1 &\cdots &\bH_{\nu} &\cdots &0 \\
                \vdots  &\ddots &\ddots &\ddots &\ddots &\vdots \\
                0  &\cdots &\cdots &\bH_0  &\bH_1  &\cdots &\bH_{\nu}
\end{bmatrix}.\label{eq:ChMat}
\end{equation}

The linear data extension operation maps the data vector $\hat{\bx}$ to the transmitted vector $\bar{\bx}$ and is shown by
\begin{equation}
 \bar\bx = \mathbf{U_{cp}}\hat{\bx}
\end{equation}
where $\mathbf{U_{cp}}$ is given by
\begin{equation}
 \mathbf{U_{cp}}=
 \left[
  \begin{array}{ c c }
     \bI_{ML_d} \quad\quad\quad \\
     \bI_{ML_e} \quad\quad \b0_{ML_e \times (L_d-L_e)M}
  \end{array} \right] \label{eq:CP_matrix}
\end{equation}
The system model in~\eqref{eq:FreqSelModelMod} can now be written in
terms of the unpadded data vector $\hat{\bx}$ and an equivalent
channel matrix $\bHeq$ as follows
\begin{equation}
\bar\by_{cp}={\bHeq}\,\hat{\bx}+\bar\bn \label{eq:Modifmodel}
\end{equation}
where in a CP system, $\bHeq=\bar\bH\mathbf{U_{cp}}$ is a $NL_d\times
ML_d$ block circulant matrix constructed by block circulations of the
matrix $[\bH_0,\bH_1,\dots, \bH_{\nu},0,\dots ,0]^T$. 

For the zero-padding transmission, we can
rewrite~\eqref{eq:FreqSelModel} as
\begin{equation}
\bar{\by}_{zp}=\bHeq\,\hat{\bx}+\bar{\bn} \label{eq:FreqSelModelModZP}
\end{equation}
where $\bar{\by}_{zp}$ is the $N(L_d+L_e)\times 1$ received vector,
$\bar{\bx}$ is the $ML_d\times 1$ transmitted vector, $\bar{\bn}$ is
the white Gaussian noise vector $\in \mathbb{C}^{ N(L_d+L_e) \times
  1}$ and $\bar{\bH}$ is the channel matrix given by
\begin{equation}
\bHeq=
\begin{bmatrix} \bH_{0}   &0        &\cdots  &0  \\
                \vdots    &\bH_{1}  &\ddots &\vdots \\
                \bH_{\nu}      &\vdots   &\ddots &\bH_{0}  \\
                0         &\bH_{\nu}    &\ddots &\vdots  \\
                \vdots      &\vdots  &\vdots &\bH_{\nu} 
\end{bmatrix}.\label{eq:ChMatZp}
\end{equation}

Assuming perfect channel state information at the receiver (CSIR) and
that the channel remains unchanged during the transmission of
$L_d+L_e$ vectors, the MMSE equalizer $\bW$ is applied to decouple the
received streams (after removing the $L_e$ extension vectors in
case of cyclic-prefix transmission). The MMSE equalizer is given by
\begin{equation}
 \bW = (\rho^{-1} \bI+\bHeq^{H}\bHeq)^{-1}\bHeq^{H} \label{eq:EqMod}
\end{equation}
and the unbiased decision-point SINRs of the equalizers output  for detecting the $k^{th}$ transmitted stream are
\begin{equation}
 \gamma_k =\frac{1}{(\bI+\rho \bHeq^H\bHeq)^{-1}_{kk}}-1 \quad \quad k=1,\dots,ML_d \label{eq:SINR}.
\end{equation}

In the following sections we analyze the outage diversity for the ZP
and CP systems. The PEP analysis follows in a direct manner as in the flat fading case so we omit it.
\subsection{The Zero Padding MMSE Receiver} 
\label{sec:ZP}

It is known that in a point-to-point {\em single-antenna} ISI channel,
linear receivers can achieve full multipath diversity under
zero-padding
transmission~\cite{Tepedel:ICASSP03,Grokop:JIT09,Zhendao:WCNC02}. In
this section we investigate the similar question for MIMO systems
whose receivers use linear MMSE operations in both the spatial and
temporal dimensions. We provide lower and upper bounds on
diversity. The bounds are not always tight, but the diversity is fully
characterized for SIMO systems.

We begin by analyzing the tradeoff between the spectral
efficiency $R$ and the diversity of MMSE receiver in the {\em single-antenna}
ISI channel $d^{ISI}_{MMSE}$ under ZP transmission. Tajer {\it et
  al}~\cite{Tajer:WCOM10} shows that $d^{ISI}_{MMSE}$ varies with $R$
under CP transmission and MMSE equalization, in particular, for a
quasi-static single-antenna ISI channel with $\nu +1$ taps, the
diversity of the SC-MMSE receiver under CP transmission is
$d^{CP}_{MMSE}=1+\min(\nu,\lfloor 2^{-R}L_d\rfloor )$, where $L_d$ is the
transmission data block length. We show that the same is not true for
ZP transmission.

\begin{lemma} \label{lemma:Div_CP_ZP_ISI}
For a quasi-static single-antenna ISI channel with $\nu +1$ taps, the
diversity of the SC-MMSE receiver under ZP transmission is
$d^{ZP}_{MMSE}=\nu+1$ irrespective of $R$.
\end{lemma}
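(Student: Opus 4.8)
The plan is to pin the outage exponent of the ZP system at $\nu+1$ for \emph{every} fixed $R$ and then invoke the PEP sandwich of Section~\ref{sec:PEPAnalysis} unchanged (that argument used only that $\bHeq^H\bHeq$ is nonnegative definite with a full set of positive eigenvalues, which holds here), giving $d^{ZP}_{MMSE}=d_{out}=\nu+1$. Write $\bh=[h_0,\dots,h_\nu]$ for the i.i.d.\ $\mathcal{CN}(0,1)$ taps of the scalar channel. Under ZP the equivalent matrix of~\eqref{eq:ChMatZp} is the tall banded Toeplitz convolution matrix of $\bh$, so $\bHeq^H\bHeq$ is $L_d\times L_d$, every column of $\bHeq$ has squared norm $\|\bh\|^2$, and the SINRs in~\eqref{eq:SINR} satisfy the two elementary bounds
\[
\rho\,\lambda_{\min}(\bHeq^H\bHeq)\ \le\ \gamma_k\ \le\ \rho\,\|\bh\|^2,\qquad k=1,\dots,L_d .
\]
The right inequality is the matched-filter bound, since $\gamma_k=\rho\,\bh_k^H(\bI+\rho\sum_{j\ne k}\bh_j\bh_j^H)^{-1}\bh_k\le\rho\|\bh_k\|^2$ with $\bh_k$ the $k$-th column; the left follows from $(\bI+\rho\bHeq^H\bHeq)^{-1}_{kk}\le\lambda_{\max}\!\big((\bI+\rho\bHeq^H\bHeq)^{-1}\big)=1/(1+\rho\lambda_{\min})$.

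For the easy (converse) direction $d_{out}\le\nu+1$, I substitute the matched-filter bound into $P_{out}=\prob\big(\sum_{k=1}^{L_d}\log(1+\gamma_k)<R\big)$ to get $P_{out}\ \ge\ \prob\big(L_d\log(1+\rho\|\bh\|^2)<R\big)=\prob\big(\|\bh\|^2<(2^{R/L_d}-1)/\rho\big)$; since $\|\bh\|^2$ is $\mathrm{Gamma}(\nu+1,1)$ we have $\prob(\|\bh\|^2<s)\doteq s^{\nu+1}$, hence $P_{out}\ \dot{\geqslant}\ \rho^{-(\nu+1)}$.

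For the substantive direction $d_{out}\ge\nu+1$, everything rests on one structural fact that separates ZP from CP: the Toeplitz convolution matrix of \emph{any} nonzero sequence has full column rank. Indeed $\bHeq\bu=\b0$ reads $U(z)V(z)\equiv 0$ in $\mathbb{C}[z]$ for the generating polynomials of $\bh$ and $\bu$, forcing $\bu=\b0$. Consequently, writing $\bHeq^{(\bu)}$ for the ZP equivalent matrix built from taps $\bu$, the map $\bu\mapsto\sigma_{\min}\!\big(\bHeq^{(\bu)}\big)$ is continuous and strictly positive on the compact unit sphere of $\mathbb{C}^{\nu+1}$, so it is bounded below by a constant $c_0=c_0(\nu,L_d)>0$ that does \emph{not} depend on $R$. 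By homogeneity $\lambda_{\min}(\bHeq^H\bHeq)=\|\bh\|^2\,\sigma_{\min}^2\!\big(\bHeq^{(\bh/\|\bh\|)}\big)\ge c_0^2\|\bh\|^2$, and feeding this into the left SINR bound gives $P_{out}\le\prob\big(L_d\log(1+\rho c_0^2\|\bh\|^2)<R\big)=\prob\big(\|\bh\|^2<(2^{R/L_d}-1)/(c_0^2\rho)\big)\ \dot{\leqslant}\ \rho^{-(\nu+1)}$. Combining the two directions yields $d_{out}=\nu+1$ for all $R$, and the rate-independence of $c_0$ is precisely why the diversity does not move with $R$.

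The step I expect to require the most care is the \emph{uniform} bound $c_0>0$: full column rank for each fixed $\bu\ne\b0$ is immediate, but extracting a single constant needs compactness, and it is worth stressing that the analogous claim fails for CP, where $\bHeq$ is circulant, $\lambda_{\min}(\bHeq^H\bHeq)=\min_m|H(e^{j2\pi m/L_d})|^2$, and a single near-null DFT bin can make $\lambda_{\min}$ tiny while $\|\bh\|$ stays bounded away from zero — the number of near-null bins an outage event can absorb scales like $2^{-R}L_d$, which is exactly the mechanism behind the rate-dependent $1+\min(\nu,\lfloor 2^{-R}L_d\rfloor)$ of~\cite{Tajer:WCOM10} that the full-rank Toeplitz structure of ZP eliminates.
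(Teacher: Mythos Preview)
Your argument is correct and takes a route that is conceptually close to, but technically cleaner than, the paper's. Both proofs hinge on the same structural fact that the tall banded Toeplitz $\bHeq$ of a nonzero tap sequence has full column rank, and both exploit it to compare the MMSE SINRs against $\|\bh\|^2$, after which the $\mathrm{Gamma}(\nu+1,1)$ tail does the rest. The difference is in how that comparison is extracted. The paper first relaxes MMSE to ZF, applies Jensen to land on $\mathrm{tr}\big((\bHeq^H\bHeq)^{-1}\big)$, and then \emph{cites} a result of Tepedelenlioglu to the effect that $\mathrm{tr}\big((\bHeq^H\bHeq)^{-1}\big)\le L_d\|\bW_{ZF}\|^2\le L_d/(C\|\bh\|^2)$ for a constant $C$ independent of $\bh$. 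You instead bound each $\gamma_k$ directly by $\rho\lambda_{\min}(\bHeq^H\bHeq)$ and supply the uniform constant yourself via the polynomial no-zero-divisors argument plus compactness of the unit sphere, obtaining $\lambda_{\min}\ge c_0^2\|\bh\|^2$. Your approach is self-contained (no external lemma) and also makes explicit why the analogous step fails under CP, which is a nice touch; the paper's approach has the advantage of leaning on an existing equalizer-family construction but is otherwise doing the same thing one level of indirection removed. You also include the matched-filter converse $d_{out}\le\nu+1$ explicitly, which the paper leaves implicit. One minor point: your outage expression omits the $1/L_d$ normalization used in~\eqref{eq:ZP_SINR}, but since $R$ is fixed this does not affect the exponent.
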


\begin{proof}
See Appendix~\ref{appendix:CP_ZP_ISI_APP}.
\end{proof}


We proceed with lower and upper bounds on diversity for MIMO ISI channel.

\subsubsection{Diversity Upper Bound}
\label{sec:ZPLB}
Applying the MMSE equalizer given by~\eqref{eq:EqMod} to the received
vector in~\eqref{eq:Modifmodel}, the effective mutual information
between $\hat{\bx}$ and $\bW\bar{\by}$ is equal to the sum of mutual
information of their components~\cite{Onggosanusi:ICAS02}
\begin{equation}
 I(\hat{\bx},\bW\bar{\by})=\frac{1}{L_d}\sum^{ML_d}_{k=1} I(x_k,y_k).
\nonumber
\end{equation}
 Thus the outage probability is given by
\begin{align}
P_{out} &= \prob\bigg(\frac{1}{L_d}\sum^{ML_d}_{k=1}\log(1+\gamma_{k})<R\bigg)  
\label{eq:ZP_SINR}\\
&= \prob\bigg(\frac{1}{L_d}\sum^{ML_d}_{k=1}\log\frac{1}{(\bI+\rho \bHeq^{H}\bHeq)^{-1}_{kk}}<R\bigg)
\label{eq:ZPMMSE2}\\
&\geqslant \prob\bigg(M\log \frac{1}{ML_d}\sum^{ML_d}_{k=1}\frac{1}{(\bI+\rho \bHeq^{H}\bHeq)^{-1}_{kk}}<R\bigg)
\label{eq:ZPMMSE3}
\end{align}
where we have used Jensen's inequality as in Section~\ref{sec:LB}.  Let
the eigen decomposition of $\bHeq^H\bHeq$ be given by
$\bHeq^H\bHeq=\bU^H\Lambda\bU$ where $\bU$ is unitary and $\Lambda$ is
a diagonal matrix that has the eigenvalues of the matrix
$\bHeq^H\bHeq$ on its diagonal. Let the eigenvalues of $\bHeq^H\bHeq$
be given by $\{\lambda_{\ell}\}$ with $\lambda_1\geqslant
\lambda_2\cdots\geqslant \lambda_{ML_d}$. Let the vector $\bu_k$ be
the column $k$ of the matrix $\bU$, we have
\begin{align}
(\bI+\rho \bHeq^H\bHeq)^{-1}_{kk} &= \bu^H_k (\bI+\rho \Lambda)^{-1} \bu_k\nonumber\\
&= \sum_{\ell=1}^{ML_d} \frac{|u_{\ell k}|^2}{1+\rho\lambda_{\ell}}\nonumber\\ &\triangleq  S_k. \nonumber
\end{align}
Let $\bar{k}=\arg\min_k S_k$.  we can bound the sum
in~\eqref{eq:ZPMMSE3}
\begin{align}
\frac{1}{ML_d} \sum_{k=1}^{ML_d} \frac{1}{(\bI+\rho \bHeq^H\bHeq)^{-1}_{kk}}  &= \frac{1}{ML_d} \sum_{k=1}^{ML_d}\frac{1}{S_k} \nonumber\\
  &\leqslant \;  \frac{1}{\min_k S_k} \nonumber \\
&= \; \frac{1}{ S_{\bar{k}}}
\end{align}
thus the outage bound in~\eqref{eq:ZPMMSE3} can be further bounded
\begin{align}
P_{out}  &\geqslant \prob\bigg( M \log \frac{1}{ML_d}\sum^M_{k=1}\frac{1}{(\bI+\rho \bHeq^H\bHeq)^{-1}_{kk}}< R\bigg) \nonumber\\
&\geqslant \prob\bigg( M \log \frac{1}{ S_{\bar{k}}}< R\bigg) \nonumber\\
&=  \prob\bigg( S_{\bar{k}} >  2^{-\frac{R}{M}} \bigg)  \label{eq:OutJenLowerZP}
\end{align}

We now bound~\eqref{eq:OutJenLowerZP} by conditioning on the event
\begin{align}
\mathcal{B}\triangleq &\bigg\{ 
|u_{\ell \bar{k}}|^2 \geqslant
\frac{a}{M},\; \ell=ML_d-M+1,\cdots,ML_d \bigg\}
\end{align}

where $a$ is a positive real number that is slightly smaller than one
$a=1-\epsilon_1$, and $\epsilon_1$ is a small positive number. We then
have
\begin{align}
P_{out}&=\prob\bigg( S_{\bar{k}} > 2^{-\frac{R}{M}} \bigg) \nonumber\\
 &\geqslant
\prob\bigg( S_{\bar{k}} > 2^{-\frac{R}{M}} \big|\mathcal{B} \bigg)
\prob(\mathcal{B})\nonumber\\
 &=\prob\bigg( \sum_{\ell=1}^{ML_d} \frac{|u_{\ell \bar{k}}|^2}{1+\rho\lambda_{\ell}} > 2^{-\frac{R}{M}} \bigg|\mathcal{B} \bigg)
\prob(\mathcal{B}) \nonumber\\
 &\geqslant \prob\bigg( \sum_{\ell=ML_d-M+1}^{ML_d} \frac{|u_{\ell \bar{k}}|^2}{1+\rho\lambda_{\ell}} > 2^{-\frac{R}{M}} \bigg|\mathcal{B} \bigg)
\prob(\mathcal{B}) \label{eq:Ellim}\\
 &\geqslant\prob\bigg( \frac{1}{M}\sum_{\ell=ML_d-M+1}^{ML_d} \frac{a}{1+\rho\lambda_{\ell}} > 2^{-\frac{R}{M}}  \bigg) \prob(\mathcal{B}) \nonumber\\
&\doteq \prob\bigg( \frac{1}{M}\sum_{\ell=ML_d-M+1}^{ML_d} \frac{a}{1+\rho\lambda_{\ell}} > 2^{-\frac{R}{M}}  \bigg) \label{eq:AsymptSim}\\
&= \prob\bigg(\sum_{\ell=ML_d-M+1}^{ML_d} \frac{1}{1+\rho\lambda_{\ell}} > \frac{M}{a}2^{-\frac{R}{M}}  \bigg) \label{eq:OutLOwBou}
\end{align}
where~\eqref{eq:Ellim} follows by removing some of the elements of the
sum corresponding to the largest eigenvalues. The steps used to obtain
Eq.~\eqref{eq:AsymptSim} are similar to the steps used in
Section~\ref{sec:LB}. 

 Note that $\bHeq^{H}\bHeq$ is not a Wishart matrix, hence the
 analysis of Section~\ref{sec:LinearRx} does not directly apply
 here. The block diagonal elements of $\bHeq^{H}\bHeq$ are similar and
 are given by
\begin{equation}
\mathbf{D}=\sum^{\nu}_{i=0}\bH_i^{H}\bH_i.\label{eq:BlockDiag}
\end{equation}

The matrix $\bHeq^{H}\bHeq$ is Toeplitz and Hermitian. Moreover, the
matrix $\mathbf{D}$ given by~\eqref{eq:BlockDiag} is a Wishart
matrix\footnote{\hspace{2pt} Let $\mathcal{W}(n,\mathcal{\sum})$
  denote a Wishart distribution with degree of freedom $n$ and
  covariance (also called scale) matrix $\mathcal{\sum}$. Any of the
  diagonal block matrices $\mathbf{D}_j$ given by~\eqref{eq:BlockDiag}
  follows a Wishart distribution since if $\mathbf{B}_1\in
  \mathcal{W}(n_1,\mathcal{\sum})$ and $\mathbf{B}_2\in
  \mathcal{W}(n_2,\mathcal{\sum})$ then
  $\mathbf{B}_1+\mathbf{B}_2\in\mathcal{W}(n_1+n_2,\mathcal{\sum})$.}.

Observe that the probability in~\eqref{eq:OutLOwBou} depends on the
$M$ smallest eigenvalues. We now bound these eigenvalues with the
eigenvalues of the matrix $\bD$ via the Sturmian separation
theorem~\cite[P.1077]{Gradshtey:book}.
\begin{theorem}\label{Th:Sturmian}
{\em (Sturmian Separation Theorem)} Let $\{\mathbf{A}_r,
r=1,2,\ldots\}$ be a sequence of symmetric $r\times r$ matrices such
that each $\mathbf{A}_r$ is a submatrix of $\mathbf{A}_{r+1}$. Then if
$\{\lambda_k{(\mathbf{A}_r)} \; , \; k=1,\ldots, r \}$ denote the
ordered eigenvalues of each matrix $\mathbf{A}_r$ in descending order, we have
\begin{equation}
\lambda_{k+1}(\mathbf{A}_{i+1}) \leq \lambda_k(\mathbf{A}_i)  \leq \lambda_k(\mathbf{A}_{i+1}).\nonumber
\end{equation}
\end{theorem}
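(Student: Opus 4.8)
The plan is to derive the interlacing inequality directly from the Courant--Fischer (min--max) variational characterization of the eigenvalues of a Hermitian matrix. It suffices to prove the one-step case, since the statement only relates $\mathbf{A}_i$ and $\mathbf{A}_{i+1}$: here $\mathbf{A}_i$ is an $i\times i$ principal submatrix of the $(i+1)\times(i+1)$ Hermitian matrix $\mathbf{A}_{i+1}$ (in the application of interest the $\mathbf{A}_r$ are nested leading principal blocks of one fixed Toeplitz Gram matrix, so ``submatrix'' should be read as ``principal submatrix''). Write $A=\mathbf{A}_{i+1}$ and $B=\mathbf{A}_i$, and assume without loss of generality that $B$ is the leading $i\times i$ block of $A$. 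Identify $\mathbb{C}^{i}$ with the subspace $\{x\in\mathbb{C}^{i+1}: x_{i+1}=0\}$; then for every $x=(\hat x,0)^{T}$ with $\hat x\in\mathbb{C}^{i}$ we have $x^{H}Ax=\hat x^{H}B\hat x$ and $\|x\|=\|\hat x\|$, which is the only algebraic fact the argument uses.

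For the upper inequality $\lambda_k(B)\le\lambda_k(A)$, I would use the ``max--min'' form (eigenvalues labelled in descending order as in the theorem statement)
\begin{equation}
\lambda_k(A)=\max_{\dim\mathcal{S}=k}\ \min_{\substack{x\in\mathcal{S}\\ \|x\|=1}} x^{H}Ax,
\nonumber
\end{equation}
the maximum ranging over $k$-dimensional subspaces of $\mathbb{C}^{i+1}$, together with the analogous expression for $\lambda_k(B)$ over subspaces of $\mathbb{C}^{i}$. Let $\mathcal{S}^{\star}\subseteq\mathbb{C}^{i}$ attain $\lambda_k(B)$; its embedding $\widetilde{\mathcal{S}}^{\star}\subseteq\mathbb{C}^{i+1}$ is again $k$-dimensional and $\min_{x\in\widetilde{\mathcal{S}}^{\star},\,\|x\|=1}x^{H}Ax=\lambda_k(B)$, so $\widetilde{\mathcal{S}}^{\star}$ is an admissible competitor in the max defining $\lambda_k(A)$, whence $\lambda_k(A)\ge\lambda_k(B)$.

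For the lower inequality $\lambda_{k+1}(A)\le\lambda_k(B)$, I would use the dual ``min--max'' form
\begin{equation}
\lambda_{k+1}(A)=\min_{\dim\mathcal{T}=i-k+1}\ \max_{\substack{x\in\mathcal{T}\\ \|x\|=1}} x^{H}Ax,
\nonumber
\end{equation}
and correspondingly $\lambda_k(B)=\min_{\dim\mathcal{T}=i-k+1}\max_{x\in\mathcal{T},\,\|x\|=1}x^{H}Bx$ over subspaces of $\mathbb{C}^{i}$, the two dimensions matching because $(i+1)-(k+1)+1=i-k+1$. Taking $\mathcal{T}^{\star}\subseteq\mathbb{C}^{i}$ of dimension $i-k+1$ that attains $\lambda_k(B)$ and embedding it into $\mathbb{C}^{i+1}$, the same identification gives $\max_{x\in\widetilde{\mathcal{T}}^{\star},\,\|x\|=1}x^{H}Ax=\lambda_k(B)$, and since $\widetilde{\mathcal{T}}^{\star}$ is admissible in the min defining $\lambda_{k+1}(A)$ we obtain $\lambda_{k+1}(A)\le\lambda_k(B)$. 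Chaining the two bounds gives $\lambda_{k+1}(\mathbf{A}_{i+1})\le\lambda_k(\mathbf{A}_i)\le\lambda_k(\mathbf{A}_{i+1})$.

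I do not expect a substantive obstacle here: this is the classical Cauchy interlacing theorem and the argument above is complete up to routine verification. The only points that need care are keeping the Courant--Fischer index conventions consistent with the descending-order normalization used in the statement, and the reading of ``submatrix'' as ``principal submatrix'' --- which is automatic in our setting, since each $\mathbf{A}_r$ is a leading principal block of a single fixed Hermitian matrix, and without which the separation property fails. Comparison of non-consecutive indices, should it be needed, follows by iterating the one-step inequality.
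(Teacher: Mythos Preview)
Your argument via the Courant--Fischer min--max principle is correct and is the standard proof of Cauchy interlacing; the index bookkeeping and the use of the embedding $\mathbb{C}^i\hookrightarrow\mathbb{C}^{i+1}$ are both sound, and your caveat that ``submatrix'' must mean ``principal submatrix'' is exactly right. The paper itself does not prove this theorem: it quotes the statement from a reference (Gradshteyn--Ryzhik) and merely applies it, so there is no proof in the paper to compare against. Your write-up therefore supplies what the paper omits, and does so in the conventional way.
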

\vspace{0.1in}

For our purposes, we consider a special case of the Sturmian Theorem
by constructing a set of matrices $\mathbf{A}_M, \mathbf{A}_{M+1},
\ldots, \mathbf{A}_{L_dM}$ starting by the largest one
$\mathbf{A}_{L_dM} \defeq \bHeq^H\bHeq$ and making all other matrices
$\mathbf{A}_i$ to be (successively embedded) $i\times i$ principal
submatrices of $\bHeq^H\bHeq$, such that the smallest matrix is $\mathbf{A}_M =
\mathbf{D}_{L_d}$. Then we repeatedly apply the first inequality in the Sturmian to get:
\begin{alignat*}{2}
\lambda_{ML_d}(\mathbf{A}_{ML_d}) &\leq \lambda_{ML_d-1}(\mathbf{A}_{ML_d-1}) &\leq \cdots &\leq \lambda_M(\mathbf{A}_{M})\nonumber\\
\lambda_{ML_d-1}(\mathbf{A}_{ML_d}) &\leq \lambda_{ML_d-2}(\mathbf{A}_{ML_d-1}) &\leq \cdots &\leq \lambda_{M-1}(\mathbf{A}_{M})\nonumber\\
&\vdots &\vdots  \nonumber\\
\lambda_{ML_d-M+1}(\mathbf{A}_{ML_d}) &\leq \lambda_{ML_d-M}(\mathbf{A}_{ML_d-1})&\leq \cdots &\leq \lambda_{1}(\mathbf{A}_{M})\label{eq:SturmianLB}
\end{alignat*}
This implies that the smallest $M$ eigenvalues of $\bHeq^H\bHeq$ are
bounded above by the $M$ eigenvalues of $\mathbf{D}$, respectively. Hence:
\begin{equation}
 P_{out}\dot{\geqslant}\hspace{4pt} \prob\big(\sum^{M}_{k=1}\frac{1}{1+\rho\lambda_k(\mathbf{D})} >\frac{M}{a}2^{-\frac{R}{M}}
 \big) \label{eq:ZP_MMSE_eigen_bound2}.
\end{equation}
$\mathbf{D}$ is a sum of $(\nu+1)$ central Wishart matrices each
with $N$ degrees of freedom and with identity covariance matrix,
i.e. $\mathbf{D}\in \mathcal{W}((\nu+1) N,I)$. Therefore the
analysis of Section~\ref{sec:LinearRx} applies here and we have the
following lemma.

\begin{lemma}
\label{Lemma:ZPLB}
In a MIMO quasi-static frequency-selective system (with channel memory
$\nu$) consisting of $M$ transmit and $N$ receive antennas, the MMSE
receiver diversity under joint spatial encoding and zero-padding transmission
is upper bounded as
\begin{align}
d^{ZP}\leqslant&\bigg\lfloor
 \big( M2^{-\frac{R}{M}}+1-(M-N)^+\big)^+ \bigg\rfloor^2\nonumber\\
&+\big|(\nu+1)N-M \big| \bigg\lfloor
 \big(M2^{-\frac{R}{M}}+1-(M-N)^+\big)^+ \bigg\rfloor \label{eq:DivZpUB2}
\end{align}

\end{lemma}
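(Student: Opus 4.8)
The plan is to start from~\eqref{eq:ZP_MMSE_eigen_bound2}, which has already reduced a diversity \emph{upper} bound to a \emph{lower} bound on an outage-type probability involving only the matrix $\mathbf{D}$, and then to replay the eigenvalue computation of Section~\ref{sec:LB} almost verbatim. Two facts make this transcription legitimate and are already in hand: (i) $\mathbf{D}=\sum_{i=0}^{\nu}\bH_i^{H}\bH_i$ is a \emph{central} Wishart matrix with identity scale and $(\nu+1)N$ degrees of freedom (additivity of independent Wishart matrices with a common scale), so its ordered eigenvalues follow the joint law~\eqref{eq:WishartJointDist} with the parameter pair $(n,m)=(\min\{M,(\nu+1)N\},\max\{M,(\nu+1)N\})$; and (ii) the Sturmian separation theorem (Theorem~\ref{Th:Sturmian}) has already dominated the $M$ smallest eigenvalues of the (non-Wishart, Toeplitz) matrix $\bHeq^{H}\bHeq$ by the $M$ eigenvalues of $\mathbf{D}$, which is exactly the content of~\eqref{eq:ZP_MMSE_eigen_bound2}. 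Hence it remains only to compute the exponential order of $\prob\big(\sum_{k=1}^{M}(1+\rho\lambda_k(\mathbf{D}))^{-1}>\tfrac{M}{a}2^{-R/M}\big)$.

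From here I would follow the steps of the proof of Lemma~\ref{lemma:outageLB}, with $\mathbf{D}$ in the role of $\mathcal{W}$ and $(\nu+1)N$ in the role of $N$. Introduce the exponents $\beta_k\triangleq-\log\lambda_k(\mathbf{D})/\log\rho$ and the count $M(\boldsymbol{\beta})\triangleq\#\{k:\beta_k>1\}$; apply~\eqref{eq:AsymptoticTerm} so that at high SNR the sum collapses to $M(\boldsymbol{\beta})$ plus the additive constant contributed by the identically-zero eigenvalues of $\mathbf{D}$; then invoke the $a$-trick exactly as in~\eqref{eq:OutageLowerBound} to absorb $1/a$ into the ceiling at the price of an isolated set of rates, which turns the outage event into $\{M(\boldsymbol{\beta})\geqslant S\}$ for the integer $S$ that appears in~\eqref{eq:DivZpUB2}. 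Integrating~\eqref{eq:WishartJointDist} over the orthant-aligned region $\{\boldsymbol{\beta}:\text{at least }S\text{ of the }\beta_k\text{ exceed }1\}$, the multivariate integral separates coordinate-wise and its dominant exponent is $\sum_{j=1}^{S}\big(2j-1+|(\nu+1)N-M|\big)=S^{2}+|(\nu+1)N-M|\,S$. Finally, combining the generic case with the isolated rates $M2^{-R/M}\in\mathbb{N}$ by replacing $\lceil\,\cdot\,\rceil$ with $\lfloor\,\cdot+1\rfloor$ — the same passage used at the end of the proof of Lemma~\ref{lemma:outageLB} — yields $P_{out}\dot{\geqslant}\rho^{-d}$ with $d$ equal to the right-hand side of~\eqref{eq:DivZpUB2}. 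Since the PEP analysis for the frequency-selective channel is the same as in the flat-fading case (so $d^{ZP}=d^{ZP}_{out}$), this gives $d^{ZP}\leqslant d$, as claimed.

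I do not anticipate a real obstacle, because the one structural difficulty — that $\bHeq^{H}\bHeq$ is Toeplitz rather than Wishart — has already been removed by the Sturmian step. What needs care is pure bookkeeping: tracking the substitution $N\mapsto(\nu+1)N$ consistently, so that the degrees of freedom of $\mathbf{D}$ enter both the exponent multiplier $|(\nu+1)N-M|$ and, through the rank of $\mathbf{D}$, the shift inside $S$, and then checking that the resulting exponent is bounded by the right-hand side of~\eqref{eq:DivZpUB2}; and confirming that the $a$-trick and the left/right-continuity discussion at $M2^{-R/M}\in\mathbb{N}$ carry over unchanged, which they do since they involve only the scalar threshold and not the random matrix.
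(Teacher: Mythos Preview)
Your proposal is correct and is essentially identical to the paper's own argument: the text leading up to the lemma already derives~\eqref{eq:ZP_MMSE_eigen_bound2} via Jensen, the $S_{\bar{k}}$ bound, conditioning on $\mathcal{B}$, and the Sturmian step, and the paper then simply states that ``the analysis of Section~\ref{sec:LinearRx} applies here'' to the Wishart matrix $\mathbf{D}\in\mathcal{W}((\nu+1)N,I)$, which is exactly the transcription you spell out (exponents $\beta_k$, the count $M(\boldsymbol{\beta})$, the $a$-trick, the orthant integral, and the $\lceil\cdot\rceil\to\lfloor\cdot+1\rfloor$ passage at the discontinuity points). Your bookkeeping caveat about the substitution $N\mapsto(\nu+1)N$ in both the multiplier $|(\nu+1)N-M|$ and the rank-induced shift is the right place to be careful, and it matches the paper's level of detail.
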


\subsubsection{Diversity Lower Bound}
\label{sec:ZPUB}

We can upper bound the outage probability as follows.
\begin{align}
&P_{out} =
\prob\bigg(\frac{1}{L_d}\sum^{ML_d}_{k=1}\log(1+\gamma_{k})<R\bigg)
\nonumber\\ &= \prob\bigg(\frac{1}{L_d}\sum^{ML_d}_{k=1}\log(\bI+\rho
\bHeq^{H}\bHeq)^{-1}_{kk}>-R\bigg) \nonumber\\ &\leqslant \prob\bigg(
M \log \frac{1}{ML_d}\sum^{ML_d}_{k=1}(\bI+\rho
\bHeq^{H}\bHeq)^{-1}_{kk}>-R \bigg)\label{eq:ZP_Jens}
\\ &\leqslant \prob\bigg(
M \log \frac{1}{M}\sum^{ML_d}_{k=1}(\bI+\rho
\bHeq^{H}\bHeq)^{-1}_{kk}>-R \bigg)\nonumber\\
&= \hspace{4pt} \prob\bigg(
\sum^{ML_d}_{k=1}\frac{1}{1+\rho\lambda_k(\bHeq^{H}\bHeq)}
>M2^{-\frac{R}{M}} \bigg) \nonumber\\ &\leqslant \hspace{4pt}
\prob\bigg(\sum^{M}_{k=1}\frac{1}{1+\rho\lambda_k(\bHeq^{H}\bHeq)}+L_dM-M
>M2^{-\frac{R}{M}} \bigg) \label{eq:ZP_MMSE_eigen2}\\ &= \hspace{4pt}
\prob\bigg(\sum^{M}_{k=1}\frac{1}{1+\rho\lambda_k(\bHeq^{H}\bHeq)}
>M2^{-\frac{R}{M}}-(ML_d-M) \bigg) \label{eq:ZP_MMSE_eigen3}
\end{align}
where~\eqref{eq:ZP_Jens} follows from Jensen's inequality and \eqref{eq:ZP_MMSE_eigen2} follows from setting the smallest $L_dM-M$ eigenvalues to zero.

Now we repeatedly use the second inequality in the Sturmian theorem to get
\begin{alignat*}{2}
\lambda_{M}(\mathbf{A}_{M}) &\leq \cdots
&\leq\lambda_{M}(\mathbf{A}_{ML_d-1}) &\leq
\lambda_M(\mathbf{A}_{ML_d})\nonumber\\ 
\lambda_{M-1}(\mathbf{A}_{M})
&\leq \cdots &\leq \lambda_{M-1}(\mathbf{A}_{ML_d-1}) &\leq
\lambda_{M-1}(\mathbf{A}_{ML_d})\nonumber\\ 
&\vdots &\vdots
\nonumber\\ \lambda_{1}(\mathbf{A}_{M}) &\leq \cdots &\leq
\lambda_{1}(\mathbf{A}_{ML_d-1})&\leq
\lambda_{1}(\mathbf{A}_{ML_d})\label{eq:SturmianLB}
\end{alignat*} 
with $\mathbf{A}_{ML_d} \defeq\bHeq^{H}\bHeq$ and
$\mathbf{A}_{M}\defeq\mathbf{D}$, similar to the earlier
case. Therefore the largest $M$ eigenvalues of $\bHeq^{H}\bHeq$ are
bounded below by the $M$ eigenvalues of $\mathbf{D}$,
respectively. Therefore
\begin{equation}
 P_{out}\dot{\leqslant}\hspace{4pt} \prob\bigg(M \log
 \frac{1}{M}\sum^{M}_{k=1}\frac{1}{1+\rho\lambda_k(\mathbf{D})} > Q
 \bigg) \label{eq:ZP_MMSE_eigen_LBbound2}.
\end{equation}
where $Q=\max\big(0,M2^{-\frac{R}{M}}-(ML_d-M)\big)$.
Recall that $\mathbf{D}$ is a Wishart matrix, therefore the
analysis of Section~\ref{sec:LinearRx} follows and we obtain the
following lemma.
 
\begin{lemma}
\label{Lemma:ZPUB}
In a MIMO quasi-static frequency-selective system (with channel memory
$\nu$) consisting of $M$ transmit and $N$ receive antennas, the MMSE
receiver diversity is lower bounded as
\begin{equation}
d^{ZP}\geqslant\big\lceil
 Q\big\rceil^2+|(\nu+1)N-M|\big\lceil
 Q\big\rceil \label{eq:DivZpLB2}
\end{equation}
 under joint spatial encoding and zero-padding transmission.
 $Q=\max\big(0,M2^{-\frac{R}{M}}-(ML_d-M)\big)$.
\end{lemma}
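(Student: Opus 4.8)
The plan is to observe that the chain of bounds ending in \eqref{eq:ZP_MMSE_eigen3}, together with the Sturmian separation theorem (Theorem~\ref{Th:Sturmian}), has already reduced the claim to an outage bound of exactly the form handled in Lemma~\ref{lemma:outageUB}. Concretely, after the Jensen step \eqref{eq:ZP_Jens}, the zeroing of the $ML_d-M$ smallest eigenvalues of $\bHeq^H\bHeq$ in \eqref{eq:ZP_MMSE_eigen2}, and the Sturmian lower bound on the $M$ largest eigenvalues of $\bHeq^H\bHeq$ by the eigenvalues of $\mathbf{D}$ of \eqref{eq:BlockDiag}, one reaches $P_{out}\,\dot{\leqslant}\,\prob\big(\sum_{k=1}^{M}\frac{1}{1+\rho\lambda_k(\mathbf{D})}>Q\big)$, the threshold being clipped at $0$ because this event is almost sure when $M2^{-R/M}-(ML_d-M)\le 0$. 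The structural fact that makes everything work is that $\mathbf{D}=\sum_{i=0}^{\nu}\bH_i^H\bH_i$ is a sum of $\nu+1$ independent $M\times M$ central Wishart matrices with $N$ degrees of freedom and identity scale, hence is itself central Wishart with $(\nu+1)N$ degrees of freedom; equivalently, $\{\lambda_k(\mathbf{D})\}$ has the same law as the eigenvalues of $\tilde{\bH}^H\tilde{\bH}$ for an i.i.d.\ $\mathcal{CN}(0,1)$ matrix $\tilde{\bH}\in\mathbb{C}^{(\nu+1)N\times M}$.

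Given this identification, the remainder is a verbatim rerun of the proof of Lemma~\ref{lemma:outageUB} under the substitutions: effective receive dimension $N\mapsto(\nu+1)N$, transmit dimension $M\mapsto M$, and effective threshold $M2^{-R/M}-(M-N)^+\mapsto Q$. I would set $\alpha_k\triangleq-\log\lambda_k(\mathbf{D})/\log\rho$, invoke the joint density \eqref{eq:WishartJointDist} with $(m,n)=(\max\{(\nu+1)N,M\},\min\{(\nu+1)N,M\})$, use $\frac{1}{1+\rho\lambda_k}\doteq\rho^{\alpha_k-1}$ for $\alpha_k<1$ (and $\doteq 1$ for $\alpha_k>1$), and write $M(\balpha)$ for the number of $\alpha_k$ exceeding $1$. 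The outage event then collapses at high SNR to $\{M(\balpha)\ge\lceil Q\rceil\}$, i.e.\ to the product region $\mathcal{A}'=\{\alpha_1>1,\dots,\alpha_S>1,\alpha_{S+1}>0,\dots\}$ with $S=\lceil Q\rceil$; on $\mathcal{A}'$ the factor $\exp[-\sum_i\rho^{-\alpha_i}]$ is $\doteq 1$, while on the part of the outage region outside the nonnegative orthant it drives the integral to zero, exactly as in \eqref{eq:OutageIntegral}. Since the boundaries of $\mathcal{A}'$ are parallel to the orthant the integral separates, and integrating $\prod_i\rho^{-(2i-1+|(\nu+1)N-M|)\alpha_i}$ over $\mathcal{A}'$ yields $\rho^{-\sum_{i=1}^{S}(2i-1+|(\nu+1)N-M|)}=\rho^{-(S^2+|(\nu+1)N-M|S)}=\rho^{-(\lceil Q\rceil^2+|(\nu+1)N-M|\lceil Q\rceil)}$. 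Hence $d_{out}\ge\lceil Q\rceil^2+|(\nu+1)N-M|\lceil Q\rceil$, and since the PEP analysis of Section~\ref{sec:PEPAnalysis} carries over with $\mathbf{D}$ in place of $\mathcal{W}$ (the variance bound \eqref{eq:varBound}, the conditional union bound, and the Bayesian split over $\{M(\balpha)=0\}$ and $\{M(\balpha)\ge 1\}$ use nothing beyond the Wishart structure), $\prob_{\text{err}}\doteq P_{out}$ and the stated lower bound on $d^{ZP}$ follows.

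I do not expect a genuine obstacle, since the heavy lifting (Jensen, the eigenvalue zeroing, and Sturmian) is already in place and what remains is to certify the reduction to a Wishart outage problem. The one point needing a little care is the rank bookkeeping for $\mathbf{D}$: when $(\nu+1)N<M$ only $(\nu+1)N$ of its eigenvalues are nonzero, so $M-(\nu+1)N$ of the summands equal $1$ identically and the effective threshold on the nontrivial part of the sum becomes $Q-(M-(\nu+1)N)^+$, mirroring the $(M-N)^+$ correction in Lemma~\ref{lemma:outageUB}; in the regime $(\nu+1)N\ge M$ of primary interest this vanishes and $S=\lceil Q\rceil$ exactly as stated. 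It is also worth checking, as in the isolated-point discussion following Lemma~\ref{lemma:outageLB}, that $\lceil\cdot\rceil$ is the right rounding at integer $Q$: because $\sum_k\frac{1}{1+\rho\lambda_k(\mathbf{D})}\doteq M(\balpha)+\max_{\alpha_k<1}\rho^{\alpha_k-1}$ and the second term vanishes, the event $\{\sum_k\frac{1}{1+\rho\lambda_k(\mathbf{D})}>Q\}$ still forces $M(\balpha)\ge\lceil Q\rceil$, so the exponent is unaffected.
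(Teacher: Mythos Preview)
Your proposal is correct and follows essentially the same approach as the paper: after the Jensen step, the zeroing of the $ML_d-M$ smallest eigenvalues, and the Sturmian bound relating the $M$ largest eigenvalues of $\bHeq^H\bHeq$ to those of $\mathbf{D}$, the paper also reduces to a Wishart outage problem (since $\mathbf{D}\sim\mathcal{W}((\nu+1)N,I)$) and then invokes the analysis of Section~\ref{sec:LinearRx} verbatim with the effective receive dimension $(\nu+1)N$ and threshold $Q$. Your additional bookkeeping for the rank-deficient case $(\nu+1)N<M$ and for integer $Q$ is a useful elaboration that the paper leaves implicit, but the core argument is the same.
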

  \begin{remark}
Notice that both lower and upper bounds differ only in the second term
of $Q$, i.e. ($ML_D-M$). The diversity lower bound for $L_d=1$ is
tight against the upper bound, but for $L_d>1$ the lower
bound \eqref{eq:DivZpLB2} is trivial.
\end{remark}

\subsection{The Cyclic Prefix MMSE Receiver} 
\label{sec:CP}

\begin{figure*}
\centering
\includegraphics[width=6.5in]{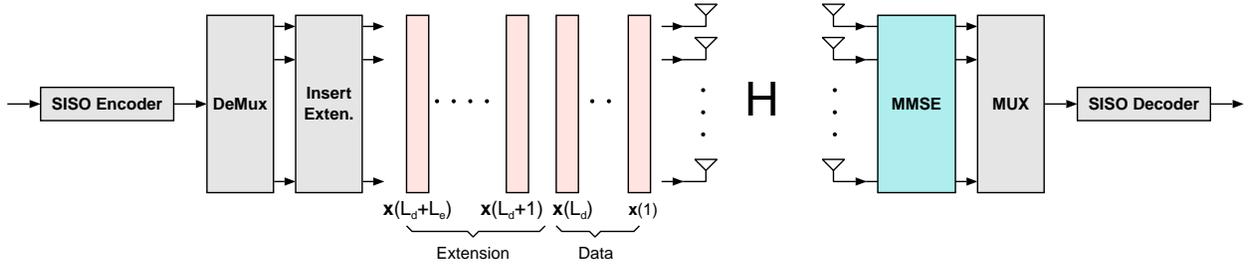}
\caption{Single-carrier block transmission in a frequency-selective channel. In the case of CP, the extension is removed at the receiver prior to equalization.}
\label{fig:freqsel}
\end{figure*}

For the {\em single-antenna} ISI channel under CP transmission, the
explicit tradeoff between spectral efficiency and diversity was
found~\cite{Tajer:WCOM10} to be $d^{CP}_{MMSE}=1+\min(\nu,\lfloor
2^{-R}L_d\rfloor)$. In this section, we extend the analysis to the MIMO
case.  The system model is shown in Figure~\ref{fig:freqsel}. We start
with the general $M\times N$ MIMO system.


The system model is again given by~\eqref{eq:Modifmodel} where
$\bHeq=\bar\bH\mathbf{U_{cp}}$  and $\hat{\bx}$ is
  generated by taking the IDFT of the information vector
  $\bx$~\cite{Stamoulis:TSP02}, i.e.
\begin{equation}
\hat{\bx}=\bQ_{Tx}^H\bx
\end{equation}
where $\bQ_{Tx}$ is the augmented DFT matrix given by
$\bQ_{Tx}=\bQ\otimes\bI_M$, where
$\bI_M$ is the identity matrix,
$\bQ$  is the normalized DFT matrix, and $\otimes$ is the Kroenecker product.

The $NL_d\times ML_d$ block-circulant matrix $\bHeq$ has eigen
decomposition $\bHeq=\bQ_{Rx}^H{\bf{\Lambda}}\bQ_{Tx}$, where $\bQ_{Rx}=\bQ\otimes\bI_N$. Both $\bQ_{Tx}$ and $\bQ_{Rx}$ are unitary matrices.
The block diagonal matrix ${\bf \Lambda}$ is given by

\begin{equation}
{\bf \Lambda}=\begin{pmatrix}
{\bf B}_1&{}&{}&{0}\\ 
{}&{\bf B}_2&{}&{}\\ 
{}&{}&{\ddots}&{}\\ 
{0}&{}&{}&{\bf B}_{L_d}
\end{pmatrix}
\end{equation}
where the matrix ${\bf B}_k$ is given by~\cite{Kaveh:Spring10}
\begin{equation}
{\bf B}_k=\sum_{i=0}^{\nu}\bH_i e^{-j\frac{2\pi i(k-1)}{L_d}}\quad {\text{for }} k=1,\dots,L_d \label{eq:LambdaEq}
\end{equation}
and $\bH_i$ is the instantaneous MIMO channel (cf. Section~\ref{sec:freqSelModel}).

Analogous to the proof of~\cite{Tajer:WCOM10}, we first consider the
case where the transmission data-block length is equal to the number of
channel taps, i.e. $L_d=\nu+1$. In this case, the entries
of ${\mathbf B}_k{'}s$ are i.i.d. normal complex Gaussian. 

\subsubsection{Outage upper bound}

The outage probability of the MMSE receiver  is given by
\begin{align}
P_{out}&=\prob \bigg( \frac{1}{L_d} \sum_{k=1}^{ML_d} \log
(\frac{1}{(\bI+\rho\bHeq^H\bHeq)^{-1}_{kk}}) < R \bigg)
\label{eq:CPMMSEOuEq}\\&=\prob \bigg( \frac{1}{L_d} \sum_{k=1}^{ML_d} \log
((\bI+\rho\bHeq^H\bHeq)^{-1}_{kk}) > - R \bigg)
\nonumber\\
 &\leqslant \prob \bigg( M\log \sum_{k=1}^{ML_d}
\frac{1}{ML_d}(\bI+\rho\bHeq^H\bHeq)^{-1}_{kk} > -R
\bigg) \label{eq:CPJen}\\ &=  \prob \bigg( M\log
\sum_{k=1}^{ML_d} \frac{1}{ML_d}(\bI+\rho{\bf \Lambda}^H{\bf \Lambda})^{-1}_{kk} > -R
\bigg) \label{eq:CPDiag}\\ &=  \prob \bigg(
\sum_{k=1}^{ML_d} (\bI+\rho{\bf \Lambda}^H{\bf \Lambda})^{-1}_{kk} > ML_d2^{-\frac{R}{M}}
\bigg) \nonumber\\
&=  \prob \bigg(
\sum_{i=1}^{L_d} \text{tr}(\bI+\rho {\bf B}_i^H{\bf B}_i)^{-1} > ML_d2^{-\frac{R}{M}}
\bigg) \nonumber\\
&=  \prob \bigg(
\sum_{i=1}^{L_d} \sum_{k=1}^{M}\frac{1}{(1+\rho \lambda_{k,i})} > ML_d2^{-\frac{R}{M}}
\bigg) \label{eq:CPEig}
\end{align}
Where~\eqref{eq:CPJen} follows from Jensen's inequality,
\eqref{eq:CPDiag} follows from the eigen decomposition of $\bHeq$, and
$\lambda_{k,i}$ is $k$-th eigenvalue of the $i$-th Wishart matrix
${\bf B}_i^H{\bf B}_i$.

Recall from Section~\ref{sec:OutageAnalysis} that the eigenvalues of
a Wishart matrix have the asymptotic property
 \begin{equation}
\sum^M_{k=1}\frac{1}{1+\rho \lambda_k} \doteq
\sum_{\alpha_k>1}1+\sum_{\alpha_k<1}\rho^{\alpha_k-1} \label{eq:CP_MIMO}
\end{equation}
based on which we established in Lemmas~\ref{lemma:outageUB} and \ref{lemma:outageLB} the following
\begin{equation}
\prob \big( \sum^M_{k=1}\frac{1}{1+\rho \lambda_k}\geqslant s \big) \doteq \rho^{-(s^2+|N-M|s)} \label{eq:CP_MIMO2}
\end{equation}
where $\alpha_k$ is defined in~\eqref{eq:alpha} and $s,M$, and $N$ are arbitrary integers.
Define 
\[
\theta_i \defeq \sum_{\alpha_{k,i}>1}1
\]
$\theta_i$ are i.i.d. discrete random variables with the following
asymptotic distribution (cf.~Section~\ref{sec:OutageAnalysis},
Equations~\eqref{eq:OutageIntegral}-\eqref{eq:DivRes})
\begin{equation}
\prob\big( \theta_i = n_i \big) \doteq \rho^{-(n_i^2+|N-M|n_i)} \quad \text{for  } n_i=1,\dots,M \label{eq:CP_DRV}
\end{equation}
Using~\eqref{eq:CP_MIMO2}, the outage probability in~\eqref{eq:CPEig} can be evaluated as
\begin{align}
P_{out} &\dot{\leqslant}  \prob \bigg(
\sum_{i=1}^{L_d} \sum_{k=1}^{M}\frac{1}{(1+\rho \lambda_{k,i})} > ML_d2^{-\frac{R}{M}}
\bigg) \nonumber\\
&\dot{=}\prob \big( \sum_{i=1}^{L_d} \theta_i \geqslant \Omega \big) \label{eq:CP_prob}
\end{align}
where $\Omega=\lceil ML_d2^{-\frac{R}{M}}\rceil$.
Evaluating the probability in~\eqref{eq:CP_prob} in a
combinatorial manner, we get
\begin{align}
&\prob \big( \sum_{i=1}^{L_d} \theta_i \geqslant \Omega \big)\doteq\prob
\big( \sum_{i=1}^{L_d} \theta_i = \Omega
\big) \nonumber\\
&\dot{=}\sum_{n_1,n_2,\dots,n_{L_d}} \rho^{-(n_1^2+|N-M|n_1)} \dots \rho^{-(n_p^2+|N-M|n_{L_d})}\label{eq:CP_Combinator}\\
&\dot{=} \max_{n_1,n_2,\dots,n_{L_d}} \rho^{-(n_1^2+|N-M|n_1)} \dots \rho^{-(n_{L_d}^2+|N-M|n_{L_d})}\label{eq:CP_max}
\end{align}
where $n_i\in[0,M]$ for ($i=1,2,\dots,L_d$) is the value of the $i$-th
discrete random variable $\theta_i$, and~\eqref{eq:CP_max} is true
since the summation in~\eqref{eq:CP_Combinator} is dominated by the
maximum element.

Let the set $\{n_k^*, k=1,\dots,L_d\}$ be the set of indices
of the optimal solution of~\eqref{eq:CP_max}. The set $\{n_k^*\}$ is
obtained by solving the following optimization problem
\begin{equation}
\min_{n_1,n_2,\dots,n_{L_d}} \sum_{k=1}^{L_d}(n_k^2+|N-M|n_k)\nonumber
\end{equation}
\begin{align}
\text{subject to}\quad \quad   \sum_{k=1}^{L_d} n_k =\Omega \nonumber\\
            0 \leqslant n_k \leqslant M \nonumber
\end{align}
or equivalently,
\begin{align}
\min_{n_1,n_2,\dots,n_{L_d}} \quad &\sum_{k=1}^{L_d} n_k^2 \label{eq:ZP_Optim}\\
\text{subject to }  &\sum_{k=1}^{L_d} n_k =\Omega \nonumber\\
&n_k \ge 0 \nonumber
\end{align}

The problem in~\eqref{eq:ZP_Optim} is a quadratic integer-programming
(QIP) problem (see e.g.~\cite{Bradely77:book} ). Integer programming
problems are in general NP-hard. However, due to the simple structure
of the objective function in~\eqref{eq:ZP_Optim}, we can efficiently
solve it, thus obtain a closed form expression
for $\{n_k^*\}$ and hence~\eqref{eq:CP_max}. 

\begin{lemma}
\label{Lemma:CP_Optimization}
For the QIP given by~\eqref{eq:ZP_Optim}, the optimum solution is given by:
\begin{align}
&n_i^*=u \quad \text{for } 1\leqslant i \leqslant t \nonumber\\
&n_j^*=u+1 \quad \text{for } t+1\leqslant j \leqslant L_d \nonumber
\end{align}
where $u=\lfloor \frac{\Omega}{L_d} \rfloor$ and $t=L_d(u+1)-\Omega$.

\begin{proof}
See Appendix~\ref{appendix:QIP}
\end{proof}
\end{lemma}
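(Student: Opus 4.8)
The plan is to prove Lemma~\ref{Lemma:CP_Optimization} by a classical local-interchange (``smoothing'') argument that exploits the strict convexity of $x\mapsto x^2$ over the integers. The feasible set of \eqref{eq:ZP_Optim} is finite, since $\sum_k n_k=\Omega$ with all $n_k\ge 0$ forces each $n_k\in\{0,1,\dots,\Omega\}$, so a minimizer exists; I would fix one optimal integer vector $(n_1,\dots,n_{L_d})$ and reason about its shape.

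The first step is to show that the entries of an optimal vector cannot differ by more than one. Suppose, for contradiction, that $n_i\ge n_j+2$ for some indices $i,j$. Form a new feasible vector by the transfer $n_i\mapsto n_i-1$ and $n_j\mapsto n_j+1$: the sum constraint is preserved and both modified entries remain nonnegative because $n_i-1\ge n_j+1\ge 1$. The objective changes by
\[
(n_i-1)^2+(n_j+1)^2-n_i^2-n_j^2 = -2(n_i-n_j-1) \le -2 < 0,
\]
which contradicts optimality. Hence at the optimum $\max_k n_k-\min_k n_k\le 1$, i.e.\ every optimal solution takes only two consecutive values $m$ and $m+1$ for some integer $m\ge 0$.

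The second step is pure bookkeeping. If exactly $q$ of the entries equal $m+1$ (so $L_d-q$ equal $m$), feasibility gives $mL_d+q=\Omega$ with $0\le q\le L_d$; after the harmless relabeling that turns the case $q=L_d$ into $q=0$ (replace $m$ by $m+1$), we may assume $0\le q<L_d$, which pins down $m=\lfloor\Omega/L_d\rfloor=u$ and $q=\Omega-uL_d$, whence the number of entries equal to $u$ is $L_d-q=L_d(u+1)-\Omega=t$. Up to a permutation of coordinates this is exactly the claimed $\{n_k^*\}$, and its value $tu^2+(L_d-t)(u+1)^2$ is the optimum. An equivalent one-line route is to note that $\sum_k n_k^2$ is Schur-convex and the balanced vector $\{n_k^*\}$ is majorized by every nonnegative integer vector with sum $\Omega$, hence attains the minimum.

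I do not expect a genuine obstacle here; the argument is routine, and the only points needing a little care are the boundary cases ($\Omega$ divisible by $L_d$, giving $t=L_d$ and all $n_k^*=u$, or $\Omega=0$), the verification that the interchange never produces a negative entry (handled above via $n_i\ge n_j+2$), and---if it is used downstream---checking that the recovered solution also respects the box constraint $n_k\le M$ present in the original formulation but dropped in \eqref{eq:ZP_Optim}. The latter holds because the largest entry is $u+1=\lfloor\Omega/L_d\rfloor+1\le M$ on the relevant range $0\le\Omega\le ML_d$, with the extreme case $\Omega=ML_d$ producing all entries equal to $M$.
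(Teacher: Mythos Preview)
Your proof is correct but follows a different route from the paper. The paper establishes a global lower bound algebraically: writing $\Omega=m L_d+K$ with $0\le K<L_d$ and expanding $\sum_k n_k^2=\sum_j (m+(j-m))^2 S_j$ (where $S_j$ counts how many $n_k$ equal $j$), it obtains $\sum_k n_k^2=L_d m^2+2mK+\sum_j (j-m)^2 S_j\ge L_d m^2+2mK+K$, using that $(j-m)^2\ge (j-m)$ for all integers $j$; it then checks that the two-valued vector $\{m,m+1\}$ attains this bound. Your argument instead fixes a minimizer and applies a local interchange: whenever two coordinates differ by at least $2$, transferring a unit strictly decreases the objective, so optimal vectors must be two-valued, after which counting recovers $u$ and $t$. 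Your approach is arguably more transparent and immediately generalizes to any strictly convex separable objective (or, as you note, via Schur-convexity), while the paper's approach yields the optimal value $L_d m^2+2mK+K$ in closed form as a by-product. Your handling of the boundary cases and of the dropped box constraint $n_k\le M$ is also fine and slightly more explicit than the paper's.
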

 
Using Lemma~\ref{Lemma:CP_Optimization}, we can now evaluate the outage upper bound given by~\eqref{eq:CP_max} as 
\begin{equation}
P_{out} \hspace{4pt} \dot{\leqslant} \hspace{4pt}\rho^{-d_{cp}} \label{eq:CPDivLenTaps}
\end{equation}
where $d_{cp}=\Omega (2u+1)-uL_d(u+1)+|N-M|\Omega$ and $u=\lfloor \frac{\Omega}{L_d} \rfloor$.

\subsubsection{Outage lower bound}
The bound is obtained using the same steps to obtain the lower bound in
Section~\ref{sec:ZPLB}. It can be shown that 
\begin{align}
P_{out}&=\prob \bigg( \frac{1}{L_d} \sum_{k=1}^{ML_d} \log
((\bI+\rho\bHeq^H\bHeq)^{-1}_{kk}) > -R \bigg)
\label{eq:CPMMSEDiv}\\
&\dot{\geqslant}\hspace{5pt}   \prob \bigg(
\sum_{i=1}^{L_d} \sum_{k=1}^{M}\frac{1}{(1+\rho \lambda_{k,i})} > ML_d2^{-\frac{R}{M}}
\bigg) \label{eq:CPEigSpechtLower}
\end{align}
The bound
in~\eqref{eq:CPEigSpechtLower} is the same as the upper bound in~\eqref{eq:CPEig}, thus the bound is tight and the diversity is given
by~\eqref{eq:CPDivLenTaps}.  The PEP analysis follows in a
manner similar to Section~\ref{sec:PEPAnalysis}.

Recall that so far we have considered data block length $L_d=\nu+1$.
It can be shown that the diversity for any $L_d>\nu+1$ is upper
bounded by the computed diversity for the case $L_d=\nu+1$. This
bounding is derived from~\eqref{eq:CPDiag} via FFT arguments
similar to those used in~\cite{Tajer:WCOM10}, which we omit for
brevity. A tight diversity {\em lower} bound for data block lengths
$L_d>\nu+1$ remains an open problem, except for the SIMO system as
discussed in the next section.

\subsubsection{Diversity of CP Transmission in the SIMO Channel}
\label{sec:SpecialI}

 \begin{theorem}
\label{th:CPSIMO}
In a SIMO quasi-static frequency-selective channel with memory
$\nu$, $N$ receive antennas and data-block length $L_d$, the MMSE
receiver diversity is $d^{CP}_{MMSE}=N \min(\nu+1,\lfloor 2^{-R}L_d \rfloor+1)$ under
joint spatial encoding and cyclic prefix transmission.
\end{theorem}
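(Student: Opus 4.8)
The plan is to specialize the cyclic-prefix analysis of Section~\ref{sec:CP} to $M=1$ and to exploit the fact that for SIMO the frequency-domain blocks $\mathbf{B}_k=\sum_{i=0}^{\nu}\bH_i e^{-j\frac{2\pi i(k-1)}{L_d}}$ of~\eqref{eq:LambdaEq} are $N\times 1$ vectors, so that ${\bf \Lambda}^H{\bf \Lambda}=\mathrm{diag}\big(\|\mathbf{B}_1\|^2,\dots,\|\mathbf{B}_{L_d}\|^2\big)$ is genuinely diagonal. The first step is to observe that, since $\bHeq=\bQ_{Rx}^H{\bf \Lambda}\bQ_{Tx}$ with $\bQ_{Tx}=\bQ$ the $L_d$-point DFT matrix (because $M=1$), we have $(\bI+\rho\bHeq^H\bHeq)^{-1}=\bQ^H(\bI+\rho{\bf \Lambda}^H{\bf \Lambda})^{-1}\bQ$, and since every entry of $\bQ$ has modulus $L_d^{-1/2}$, \emph{all} diagonal entries of this inverse equal $L_d^{-1}\sum_{\ell=1}^{L_d}(1+\rho\|\mathbf{B}_\ell\|^2)^{-1}$. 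Hence all the SINRs $\gamma_k$ in~\eqref{eq:SINR} coincide, Jensen's inequality in~\eqref{eq:CPJen} holds with equality, and the outage probability is \emph{exactly}
\begin{equation*}
P_{out}=\prob\!\left(\sum_{\ell=1}^{L_d}\frac{1}{1+\rho\|\mathbf{B}_\ell\|^2}>L_d\,2^{-R}\right),
\end{equation*}
with no gap between the outage upper and lower bounds. This collapse of the two bounds is precisely what makes SIMO tractable where the general MIMO case, with matrix-valued $\mathbf{B}_k$, is not.

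Next I would pass to exponential orders exactly as in Lemmas~\ref{lemma:outageUB}--\ref{lemma:outageLB}: writing $\alpha_\ell=-\log\|\mathbf{B}_\ell\|^2/\log\rho$ and using $(1+\rho\|\mathbf{B}_\ell\|^2)^{-1}\doteq\mathbf{1}[\alpha_\ell>1]$, the outage event becomes, up to the usual care at the isolated rates where $L_d 2^{-R}\in\mathbb{N}$ (handled by the $a=1-\epsilon$ device and the left/right-continuity argument of Lemma~\ref{lemma:outageLB}), the event $\{\#\{\ell:\alpha_\ell>1\}\ge\Omega\}$ with $\Omega=\lceil L_d 2^{-R}\rceil=\lfloor L_d 2^{-R}\rfloor+1$. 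A union bound over the $\binom{L_d}{\Omega}$ candidate index sets, together with the matching reverse bound from fixing a single such set, gives
\begin{equation*}
P_{out}\;\doteq\;\max_{\mathcal{S}\subseteq\{1,\dots,L_d\},\;|\mathcal{S}|=\Omega}\;\prob\!\left(\|\mathbf{B}_\ell\|^2\le\rho^{-1}\ \ \forall\,\ell\in\mathcal{S}\right),
\end{equation*}
where for $R>0$ one has $\Omega\le L_d$ so such sets exist (for $R=0$, take $\mathcal{S}=\{1,\dots,L_d\}$, which yields the full diversity directly).

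The crux is the evaluation of each term. Collect the taps into $\mathbf{h}=\mathrm{vec}(\bH_0,\dots,\bH_\nu)\sim\mathcal{CN}(\mathbf{0},\bI_{N(\nu+1)})$; then $(\mathbf{B}_\ell)_{\ell\in\mathcal{S}}=(\mathbf{V}_{\mathcal{S}}\otimes\bI_N)\,\mathbf{h}$, where $\mathbf{V}$ is the $L_d\times(\nu+1)$ Vandermonde matrix with $(k,i)$ entry $e^{-j\frac{2\pi i(k-1)}{L_d}}$ and $\mathbf{V}_{\mathcal{S}}$ is its $\Omega\times(\nu+1)$ row-submatrix indexed by $\mathcal{S}$. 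Since the nodes $\{e^{-j\frac{2\pi(k-1)}{L_d}}\}_{k=1}^{L_d}$ are distinct, any $q\le\nu+1$ rows of $\mathbf{V}$ are linearly independent (the minor on the first $q$ columns is a nonsingular Vandermonde determinant), so $\mathrm{rank}(\mathbf{V}_{\mathcal{S}})=\min(\Omega,\nu+1)$ for \emph{every} $\mathcal{S}$. Consequently $(\mathbf{B}_\ell)_{\ell\in\mathcal{S}}$ is a zero-mean complex Gaussian vector whose covariance has rank $r=N\min(\Omega,\nu+1)$, and the standard small-ball estimate for (possibly degenerate) Gaussian vectors --- the vector is supported on an $r$-dimensional subspace on which it is non-degenerate --- gives $\prob\big(\sum_{\ell\in\mathcal{S}}\|\mathbf{B}_\ell\|^2\le\rho^{-1}\big)\doteq\rho^{-r}$. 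As this exponent does not depend on $\mathcal{S}$, the combinatorial maximum is flat and $P_{out}\doteq\rho^{-N\min(\Omega,\nu+1)}$, i.e. $d_{out}=N\min\!\big(\nu+1,\,\lfloor 2^{-R}L_d\rfloor+1\big)$; for $L_d=\nu+1$ this recovers the quadratic-program value of~\eqref{eq:CP_prob} and Lemma~\ref{Lemma:CP_Optimization} with $u=0$ (resp.\ $u=1$ at very low rates).

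Finally, the PEP is pinned to this exponent by the argument of Section~\ref{sec:PEPAnalysis}: the conditional bound~\eqref{eq:PEPBound} depends on the channel only through $\sum_k\rho\lambda_k/(1+\rho\lambda_k)^2\doteq\rho^{-\min_k|1-\alpha_k|}$ with $\lambda_k=\|\mathbf{B}_k\|^2$, so the two-region split in the proof of Lemma~\ref{lemma:PEPLB} carries over and gives $\prob(E,\bar O)\;\dot{\leqslant}\;\rho^{-N(\nu+1)}\leqslant\rho^{-d_{out}}$, whence $\prob_{\mathrm{err}}\doteq P_{out}$ and $d^{CP}_{MMSE}=d_{out}$, which is the claim. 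The main obstacle I anticipate is the rank bookkeeping of the third step --- verifying that \emph{all} size-$\Omega$ row-subsets of $\mathbf{V}$ have rank exactly $\min(\Omega,\nu+1)$, so that no fading pattern is asymptotically cheaper than another and the union bound is tight, and then turning this rank uniformly into the $\rho^{-r}$ small-ball exponent. Everything else is either a direct specialization of the CP derivation already in the paper or a reuse of the flat-fading PEP argument.
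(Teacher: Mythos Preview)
Your argument is correct, and it is genuinely different from the paper's. The paper proceeds in two stages: first it treats the special case $L_d=\nu+1$, where the $\mathbf{B}_k$ are i.i.d.\ and hence $\lambda_k=\|\mathbf{B}_k\|^2$ are independent $\Gamma(N,1)$ variables, so $M(\boldsymbol\alpha)$ is binomial with parameter $\prob(\alpha_k>1)\doteq\rho^{-N}$ and the exponent follows; then, in a separate lemma, it extends to $L_d>\nu+1$ by a DFT resampling argument relating the outage at block length $L_{d_2}$ to that at any divisor $L_{d_1}$. Your route bypasses both the independence step and the resampling lemma: the Vandermonde observation $\mathrm{rank}\,\mathbf{V}_{\mathcal S}=\min(|\mathcal S|,\nu+1)$ holds uniformly for every $\mathcal S$ and every $L_d$, so the small-ball exponent $N\min(\Omega,\nu+1)$ is obtained in one shot and is automatically the same for all subsets, making the union bound tight. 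What you gain is a unified, structurally transparent proof that covers all $L_d$ at once; what the paper's approach offers is a more explicit distributional computation in the base case and a direct link to the SISO result of Tajer and Wang via the same resampling machinery. The only point to execute carefully is the one you flagged: in the degenerate regime $\Omega>\nu+1$, the Gaussian $(\mathbf{B}_\ell)_{\ell\in\mathcal S}$ lives on a proper subspace, and you should state the small-ball estimate in the form $\prob\big(\sum_{\ell\in\mathcal S}\|\mathbf{B}_\ell\|^2\le\rho^{-1}\big)\doteq\rho^{-r}$ with $r=N\,\mathrm{rank}\,\mathbf{V}_{\mathcal S}$, noting that the nonzero eigenvalues of $(\mathbf{V}_{\mathcal S}\mathbf{V}_{\mathcal S}^H)\otimes\bI_N$ are bounded away from zero uniformly in $\mathcal S$ (there are finitely many subsets), so the implicit constants do not affect the exponent.
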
 

In order to prove Theorem~\ref{th:CPSIMO}, we first analyze the case
of $L_d=\nu +1$ and then generalize the result for $L_d>\nu+1$.
The system model is given by~\eqref{eq:Modifmodel} where the $NL_d\times
L_d$ equivalent channel matrix is given by
\begin{equation}
\bHeq=
\begin{bmatrix} \bh_0 &\bh_1 &\cdots &\bh_{\nu} &0 &\cdots &0 \\
                0  &\bh_0 &\bh_1 &\cdots &\bh_{\nu} &\cdots &0 \\
                \vdots  &\ddots &\ddots &\ddots &\ddots  &\ddots &\vdots \\
                \bh_{1} &\bh_{2}  &\cdots &\bh_{\nu}  &0  &\cdots &\bh_{0}
\end{bmatrix}.
\end{equation}
where $\bh_i$ (for $i=0,1,\dots,\nu$) is $N\times 1$ SIMO channel.
Note that the diagonal elements of ($\bHeq^H\bHeq$) are identical and
equal to $\sum_{i=0}^{\nu}\bh_i^H\bh_i$. Thus the MMSE SINR for each
output information stream is
\begin{equation}
\gamma_k=\frac{1}{(\bI+\bHeq^H\bHeq)_{kk}}-1= \frac{1}{\frac{1}{L_d}\text{tr}(\bI+\bHeq^H\bHeq)_{kk}}-1 \label{eq:CPMMSESINR}
\end{equation}

Evaluating the outage probability as in~\eqref{eq:CPMMSEOuEq}
\begin{align}
P_{out}&=\prob \bigg( \frac{1}{L_d} \sum_{k=1}^{L_d} \log
(\frac{1}{(\bI+\rho\bHeq^H\bHeq)^{-1}_{kk}}) < R \bigg)
\nonumber\\ &= \prob \bigg( \log
\frac{1}{L_d} \sum_{k=1}^{L_d}\frac{1}{(\bI+\rho\bHeq^H\bHeq)^{-1}_{kk}} < R
\bigg) \label{eq:CPJenSIMO}\\
&=  \prob \bigg(
\sum_{k=1}^{L_d}\frac{1}{(1+\rho \lambda_{k})} > L_d2^{-R}
\bigg) \label{eq:CPEigSIMO}
\end{align}
where~\eqref{eq:CPJenSIMO} follows from~\eqref{eq:CPMMSESINR}
and~\eqref{eq:CPEigSIMO} follows similarly to~\eqref{eq:CPEig}.

In a manner similar to~\eqref{eq:CPEig} we have $\lambda_k={\bf
  B}_k^H{\bf B}_k$ because now $\mathbf B$ is simply a $N\times 1$
vector. For the case $L_d=\nu+1$, the eigenvalues $\{\lambda_k\}$ are
distributed according to Gamma distribution with shape parameter $N$
and scale parameter $1$, i.e. $\lambda_k \sim {\Gamma}(N,1)$. For
$L_d>\nu+1$ the Gaussian variables in ${\mathbf B}_k$ are no longer
independent and thus analyzing this case requires the unknown
distribution $\{\lambda_k\}$. Instead, we indirectly show that the
diversity of $L_d=\nu+1$ also holds for $L_d>\nu+1$.


 \begin{lemma}
\label{Lemma:CPSIMOSpecialCase}
In a SIMO quasi-static frequency-selective channel with memory
$\nu$, $N$ receive antennas and data-block length $L_d=\nu+1$, the MMSE
receiver diversity is $d^{CP}_{MMSE}={N(\lfloor L_d2^{-R} \rfloor+1)}$ under
joint spatial encoding and cyclic prefix transmission.
\end{lemma}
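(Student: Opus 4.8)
The plan is to start from the SIMO-CP outage expression already obtained in~\eqref{eq:CPEigSIMO}, namely $P_{out}\doteq\prob\big(\sum_{k=1}^{L_d}\tfrac{1}{1+\rho\lambda_k}>L_d2^{-R}\big)$, and to exploit the distributional fact noted just above the lemma: for $L_d=\nu+1$ the scalars $\lambda_k=\mathbf{B}_k^H\mathbf{B}_k$ are \emph{independent}, each $\Gamma(N,1)$-distributed with density $\tfrac{1}{\Gamma(N)}\lambda^{N-1}e^{-\lambda}$. Consequently the joint density of $(\lambda_1,\dots,\lambda_{L_d})$ is a plain product, with no eigenvalue-repulsion (Vandermonde) factor as in~\eqref{eq:WishartJointDist}. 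Writing $\alpha_k\triangleq-\log\lambda_k/\log\rho$ as in~\eqref{eq:alpha}, the near-origin factor $\lambda^{N-1}$ gives $\prob(\alpha_k\geqslant\beta)\doteq\rho^{-N\beta}$ for $\beta\geqslant0$, the $\alpha_k<0$ region being suppressed by $e^{-\lambda}$; this is precisely the $M=1$ specialization of~\eqref{eq:CP_DRV}, with the per-subcarrier count degenerating to a $\{0,1\}$-valued variable $\theta_k$ equal to $1$ with probability $\doteq\rho^{-N}$.

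Next I would identify the dominant outage region. As in Lemma~\ref{lemma:outageUB}, $\tfrac{1}{1+\rho\lambda_k}\doteq\rho^{\alpha_k-1}$ for $\alpha_k<1$ and $\doteq1$ for $\alpha_k>1$, so, with $M(\balpha)$ denoting the number of $k$ with $\alpha_k>1$, the outage event reduces at high SNR to $\{M(\balpha)\geqslant\lfloor L_d2^{-R}\rfloor+1\}$. The lower bound on $P_{out}$ is then immediate: this event forces the sum above $L_d2^{-R}$, and by independence $\prob(M(\balpha)\geqslant m)\doteq(\rho^{-N})^m=\rho^{-Nm}$ for $1\leqslant m\leqslant L_d$, so $P_{out}\dot{\geqslant}\rho^{-N(\lfloor L_d2^{-R}\rfloor+1)}$. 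For the matching upper bound I would mimic the $\mathcal{A}'$-argument of Lemma~\ref{lemma:outageUB} together with the $a=1-\epsilon$ perturbation of Lemma~\ref{lemma:outageLB}: the relevant integration region is the set where at least $\lfloor L_d2^{-R}\rfloor+1$ of the $\alpha_k$ exceed $1$ while the rest are positive, and since the product density is $\doteq\rho^{-N\sum_k\alpha_k}$ with no repulsion factor, its Laplace-type evaluation gives exponent $N\cdot\inf\{\sum_k\alpha_k\}=N(\lfloor L_d2^{-R}\rfloor+1)$, the infimum being attained by placing $\lfloor L_d2^{-R}\rfloor+1$ of the $\alpha_k$ at $1$ and the rest at $0$ (equivalently, $P_{out}\doteq\prob(\sum_k\theta_k\geqslant\lfloor L_d2^{-R}\rfloor+1)$, the combinatorial computation of Section~\ref{sec:CP} specialized to $M=1$ but with the left-continuous threshold $\lfloor L_d2^{-R}\rfloor+1$ in place of $\lceil L_d2^{-R}\rceil$). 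Hence $P_{out}\doteq\rho^{-N(\lfloor L_d2^{-R}\rfloor+1)}$, which for $R>0$ never exceeds the full diversity $N(\nu+1)$ since $\lfloor L_d2^{-R}\rfloor\leqslant L_d-1=\nu$.

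Finally, the outage exponent is promoted to the diversity exponent exactly as in Section~\ref{sec:PEPAnalysis}: the Fano-type lower bound on $\prob_{\text{err}}$ passes through $d_{out}(R-\delta)$ and uses left-continuity of $d_{out}$ in $R$, while the conditional-Chernoff union bound yields $\prob(E,\bar O)\dot{\leqslant}\rho^{-N(\nu+1)}\leqslant\rho^{-d_{out}}$; thus $d^{CP}_{MMSE}=d_{out}=N(\lfloor L_d2^{-R}\rfloor+1)$ for $L_d=\nu+1$. I expect the main obstacle to be the boundary bookkeeping, specifically at the isolated rates where $L_d2^{-R}$ is an integer: there one must verify that the marginal configurations with $M(\balpha)=\lfloor L_d2^{-R}\rfloor$ contribute at exponential order no larger than $\rho^{-N(\lfloor L_d2^{-R}\rfloor+1)}$ — which is where the extra factor $\rho^{-N}$, the cost of pushing one further $\lambda_k$ down to order $\rho^{-1}$, appears — so that the left-continuous value $\lfloor L_d2^{-R}\rfloor+1$ (not $\lceil L_d2^{-R}\rceil$) is the correct diversity; all remaining steps are direct specializations of Sections~\ref{sec:OutageAnalysis}, \ref{sec:PEPAnalysis}, and~\ref{sec:CP}.
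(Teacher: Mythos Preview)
Your proposal is correct and follows essentially the same route as the paper: both start from the outage expression~\eqref{eq:CPEigSIMO}, use the fact that for $L_d=\nu+1$ the $\lambda_k$ are independent $\Gamma(N,1)$ variables to obtain $\prob(\alpha_k>1)\doteq\rho^{-N}$, and then count via independence (the paper phrases this as $M(\balpha)$ being binomial with parameter $\rho^{-N}$) to arrive at $P_{out}\doteq\rho^{-N(\lfloor L_d2^{-R}\rfloor+1)}$. Your outline is somewhat more explicit about separating upper and lower outage bounds and about invoking the PEP machinery of Section~\ref{sec:PEPAnalysis}, but the substance is the same.
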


\begin{proof}

The outage probability can be written as
\begin{align}
P_{out}&=\prob \big(\sum_{k=1}^{L_d}\frac{1}{(1+\rho \lambda_{k})} > L_d2^{-R}
\big) \nonumber\\
&\dot{=} \hspace{6pt}\prob \big(M({\mathbf\alpha}) > L_d2^{-R}
\big) \label{eq:OutSimoCP}
\end{align}
where we use $M({\mathbf\alpha}) = \sum_{\alpha_k>1}1 \label{eq:SIMOCPOnes}
$ from~\eqref{eq:CP_MIMO}.
We thus need to evaluate $\prob(\alpha >1)$.  The probability density function of
$\lambda_k$ is
\begin{equation}
f_{\lambda_k}(x)=\frac{1}{\Gamma(N)}\, x^{N-1}\, e^{-x}\end{equation}
The distribution of $\alpha_k$ is thus given by
\begin{equation}
f_{\alpha_k}(x)=\frac{1}{\Gamma(N)} \, \rho^{-Nx} \, e^{-x} \, \ln{\frac{1}{\rho}}
\end{equation}
The cumulative distribution function of $\alpha_k$ is
\begin{align}
F_{\alpha_k}(x)&=\int_{-\infty}^{x} f_{\alpha_k}(y) \; dy\nonumber\\
&=\frac{1}{\Gamma(N)}\int_{\rho^{-x}}^{\infty} r^{N-1} e^{-r} dr\label{eq:CPSIMOGamma}\\
&=\frac{1}{\Gamma(N)} \bigg (  \int_{0}^{\infty}r^{N-1} e^{-r} dr - \int_0^{\rho^{-x}} r^{N-1} e^{-r} dr \bigg )\label{eq:CPSIMOGamma2}\\
&= e^{-\rho^{-\alpha}}\sum_{k=0}^{N-1}\frac{\rho^{-x k}}{k!} \label{eq:CPSIMOGamma3}
\end{align}
where we have made a change of variables $r=\rho^{-x}$
in~\eqref{eq:CPSIMOGamma}, and evaluate the integral according
to~\cite[P.334 and P.336]{Gradshtey:book}.
Thus we have 
\begin{align}
P(\alpha_k>1)&=1-e^{-\rho}\sum_{k=0}^{N-1}\frac{\rho^{-k}}{k!}\nonumber\\
&\doteq 1-\big ( 1-\frac{1}{N!}\rho^{-N} \big)\label{eq:CDFSIMO}\\
&\doteq \rho^{-N} \label{eq:CDFSIMO2}
\end{align}
where~\eqref{eq:CDFSIMO} follows from the Taylor expansion
for~\eqref{eq:CPSIMOGamma3}.

From the independence of $\{\lambda_k\}$, and subsequently the
independence of $\{\alpha_k\}$, we conclude that $M({\mathbf{\alpha}})$
in~\eqref{eq:OutSimoCP} is binomially distributed with parameter $\rho^{-N}$. Hence, similar to~\cite{Tajer:WCOM10}, we have
\begin{align}
&\prob \bigg(  \sum_{k=1}^{L_d} \frac{1}{1+\rho\lambda_k}> L_d2^{-R}  \bigg) \doteq  \prob(M(\alpha)>L_d2^{-R})\nonumber\\
&=\sum_{i=\lfloor L_d2^{-R} \rfloor+1}^{L_d} \prob(M(\alpha)=i)\nonumber\\
&\doteq  \sum_{i=\lfloor L_d2^{-R} \rfloor+1}^{L_d} 
\begin{pmatrix}
{L_d}\\ 
{i}
\end{pmatrix}
\rho^{-Ni} \underbrace{(1-\rho^{-N})^{n-i}}_{\doteq 1}\nonumber \\
&\doteq \rho^{-N{(\lfloor L_d2^{-R} \rfloor+1)}}\nonumber.
\end{align}
which concludes the proof for $L_d=\nu+1$
\end{proof}


For $L_d>\nu+1$ we follow steps similar to~\cite{Tajer:WCOM10}. 

 \begin{lemma}
\label{Lemma:CPSIMOPadding}

Consider two SIMO systems both operating under quasi-static
frequency-selective channels with memory $\nu$.  One system has
data block length $L_{d_1}>\nu+1$ and the other $L_{d_2}\geq L_{d_1}$,
we have the following property
\begin{equation}
 \prob \bigg(
\sum_{k=1}^{L_{d_1}}\frac{1}{(1+\rho \lambda_{k})} > m
\bigg) \doteq \prob \bigg(\sum_{k=1}^{L_{d_2}}\frac{1}{(1+\rho \lambda_{k})} > m
\bigg)  \nonumber
\end{equation}
for any $m\in \mathbb{R}$.
\end{lemma}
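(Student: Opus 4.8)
The plan is to sidestep the (unknown) joint law of the eigenvalues $\{\lambda_k\}$ when $L_d>\nu+1$ by computing, for an \emph{arbitrary} block length, the exponential order of $\prob\!\big(\sum_{k=1}^{L_d}\tfrac{1}{1+\rho\lambda_k}>m\big)$ and showing that, for $m$ in the relevant range $0\le m<L_{d_1}$, this order depends only on $m$ and $\nu$; the claimed exponential equality then follows by applying the same formula once with $L_d=L_{d_1}$ and once with $L_d=L_{d_2}$.

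As in the derivation leading to Lemma~\ref{Lemma:CPSIMOSpecialCase}, I would first pass to a counting event. With $\alpha_k\triangleq-\log\lambda_k/\log\rho$ and $M(\balpha)\triangleq\#\{k:\alpha_k>1\}$ (so that, for $\rho>1$, $\{\alpha_k>1\}=\{\lambda_k<\rho^{-1}\}$ exactly), relation~\eqref{eq:CP_MIMO} gives $\sum_{k=1}^{L_d}\tfrac{1}{1+\rho\lambda_k}\doteq M(\balpha)$, so for noninteger $m$ the event $\{\sum_k\tfrac{1}{1+\rho\lambda_k}>m\}$ is exponentially equivalent to $\{M(\balpha)\ge r\}$ with $r=\lfloor m\rfloor+1$ (the isolated integer values of $m$ are absorbed by the left/right-continuity bracketing used in Lemmas~\ref{lemma:outageUB} and~\ref{lemma:outageLB}). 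Since $\{M(\balpha)\ge r\}=\bigcup_{|S|=r}\{\lambda_k<\rho^{-1}\ \forall k\in S\}$ is a union of $\binom{L_d}{r}$ events (a $\rho$-independent number of them), it suffices to compute $\prob(\lambda_k<\rho^{-1}\ \forall k\in S)$ for a single size-$r$ index set $S$.

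In the SIMO case $\lambda_k=\|{\bf B}_k\|^2=\sum_{n=1}^{N}\big|\sum_{i=0}^{\nu}h_{n,i}\,\omega_k^{\,i}\big|^2$ with $\omega_k=e^{-j2\pi(k-1)/L_d}$ distinct and $\{h_{n,i}\}$ i.i.d.\ $\mathcal{CN}(0,1)$. By the same reasoning used throughout the outage analysis (a sum of $N$ nonnegative terms is $\dot{\leqslant}\rho^{-1}$ iff each summand is, up to the constant $N$), and by independence across antennas, $\prob(\lambda_k<\rho^{-1}\ \forall k\in S)\doteq\prod_{n=1}^{N}\prob\big(\big|\sum_i h_{n,i}\omega_k^{\,i}\big|^2\,\dot{\leqslant}\,\rho^{-1}\ \forall k\in S\big)$. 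For a fixed antenna $n$, the vector $\big(\sum_i h_{n,i}\omega_k^{\,i}\big)_{k\in S}$ is the image of the standard Gaussian $(h_{n,0},\dots,h_{n,\nu})\in\mathbb{C}^{\nu+1}$ under the $|S|\times(\nu+1)$ Vandermonde matrix with rows $[1,\omega_k,\dots,\omega_k^{\,\nu}]$, whose rank is $q\triangleq\min(r,\nu+1)$; a Gaussian small-ball estimate then gives $\prob(\cdot)\doteq\rho^{-q}$ — when $r\le\nu+1$ the $r$ evaluations are linearly independent and contribute $r$ independent factors $\rho^{-1}$, while for $r>\nu+1$ the evaluation map is injective, so the event forces the whole tap vector to have norm $O(\rho^{-1/2})$ and the cost saturates at $\rho^{-(\nu+1)}$. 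Multiplying over the $N$ antennas, and noting the estimate is the same for every $S$,
\[
\prob\Big(\sum_{k=1}^{L_d}\tfrac{1}{1+\rho\lambda_k}>m\Big)\ \doteq\ \rho^{-N\min(\lfloor m\rfloor+1,\ \nu+1)},
\]
an expression free of $L_d$; evaluating it at $L_{d_1}$ and $L_{d_2}$ (both $>\nu+1$, hence large enough for $r$ small samples to be attainable whenever $m<L_{d_1}$) proves the lemma.

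The step I expect to be the main obstacle is the small-ball estimate: one must invoke that every Vandermonde matrix built from $r$ distinct unit-circle nodes has full rank $\min(r,\nu+1)$, and, for $r>\nu+1$, argue carefully that the now linearly dependent constraints make the per-antenna cost saturate at $\rho^{-(\nu+1)}$ rather than keep growing with $r$. I would also need to dispose of the boundary cases $m\in\mathbb{Z}$ and $m\ge L_{d_1}$ by the bracketing used in the earlier lemmas. An alternative route, closer to~\cite{Tajer:WCOM10}, is a sandwich argument: embedding the $L_{d_1}$- and $L_{d_2}$-point DFT frequency grids into a common refinement of length $\mathrm{lcm}(L_{d_1},L_{d_2})$ yields one exponential inequality for free (dropping summands only decreases the sum), and the reverse inequality again reduces to the same Vandermonde rank bound.
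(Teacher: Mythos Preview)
Your approach is correct and genuinely different from the paper's. The paper proves the lemma by a DFT--resampling (embedding) argument along the lines of \cite{Tajer:WCOM10}: it first takes $L_{d_1}=\nu+1$ and $L_{d_2}=TL_{d_1}$, writes each $L_{d_2}$-point sample $b^{(2)}_{q,l}$ as an interpolation of the $L_{d_1}$-point samples $b^{(1)}_{i,l}$, and shows (via a pointwise exponent inequality, Lemma~\ref{lemma:Tajer}, and conditioning on the event $\mathcal{D}$ that $\min_i\alpha^{(1)}_{i,l}<1$ for all $l$) that the extra frequency bins in $\mathcal{B}$ do not change the exponential order of the sum; arbitrary $L_{d_1},L_{d_2}$ are then handled through a common multiple. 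Your route instead computes the exponent of $\prob\big(\sum_k\tfrac{1}{1+\rho\lambda_k}>m\big)$ outright for \emph{every} block length: you reduce to the small-ball event $\{\lambda_k<\rho^{-1}\;\forall k\in S\}$, factor across receive antennas, and read off the per-antenna exponent as the rank $\min(r,\nu+1)$ of the $r\times(\nu+1)$ Vandermonde matrix built from $r$ distinct roots of unity. This yields $\prob(\cdot)\doteq\rho^{-N\min(\lfloor m\rfloor+1,\nu+1)}$, an expression independent of $L_d$, so the exponential equality is immediate.

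What your approach buys is directness and strength: it collapses Lemma~\ref{Lemma:CPSIMOSpecialCase} and Lemma~\ref{Lemma:CPSIMOPadding} into a single computation and delivers the exponent of Theorem~\ref{th:CPSIMO} with no separate ``$L_d=\nu+1$ base case''. What the paper's approach buys is that it recycles the independent-$\lambda_k$ analysis of the $\nu+1$ case as a black box and avoids any explicit small-ball estimate; its machinery is also the one used in \cite{Tajer:WCOM10}, so the SIMO extension reads as a modest delta over the SISO result. Your identified obstacle---the saturation at $\nu+1$ when $r>\nu+1$---is exactly where the two arguments intersect: the injectivity of the tall Vandermonde map is the clean replacement for the paper's interpolation identity and conditioning on $\mathcal{D}$. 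The $\epsilon$-threshold needed to pass from $\{\sum_k(1+\rho\lambda_k)^{-1}>m\}$ to the counting event (handling $\alpha_k$ near $1$) is a genuine technicality you should spell out, but it is the same bracketing already used in Lemmas~\ref{lemma:outageUB}--\ref{lemma:outageLB}, so no new idea is required.
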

\begin{proof}
The proof has similarities with the SISO case developed in~\cite[Lemma
  2]{Tajer:WCOM10}, but is not a trivial extension (see Appendix~\ref{appendix:CPPAdding}).
\end{proof}

Using Lemma~\ref{Lemma:CPSIMOPadding} and the results in~\cite[Theorem
  2]{Tajer:WCOM10}, Theorem~\ref{th:CPSIMO} is established.


\section{Simulation Results} 
\label{sec:SimRes}

Simulations generate Monte Carlo random channel realizations and
calculate outage probability by checking the appropriate linear MIMO
receiver mutual information for the quasi-static flat fading
model. Figure~\ref{Fig1} shows the case $M=N=3$.
According to
Theorem~\ref{theorem:MMSE_MIMO}, $d_{out}=1$ for $R\geqslant 4.755$, $
d_{out}=4$ for $4.755 > R\geqslant 1.755$, and $ d_{out}=9$ for $ R <
1.7549$.  Figure~\ref{Fig1} shows the diversity step between $R=4.5$
and $4.8$bps/Hz. The slope of diversity 9 is difficult to measure
precisely with simulations, but it is approximately
observed. Figure~\ref{Fig2} shows the outage probability for $R=1, 4
\text{ and } 10$ with the Jensen bound, with a diversity
transition at $R=2$. Figure~\ref{Fig3} shows the case of $M=2, \text{
  and }N=3$ again with transition at $R=2$.
In Figure~\ref{Fig4}, simulations results for $N=2$ and $M=3$ are given and compared with $N=3$ and $M=2$. Theorem~\ref{theorem:MMSE_MIMO} gives the diversity for both systems. It is observed that when $N>M$ the break point of the slopes occurs before its counterparts in $M>N$ case. Lower rates were difficult to simulate precisely.

\begin{figure}
\centering
\includegraphics[width=3.5in]{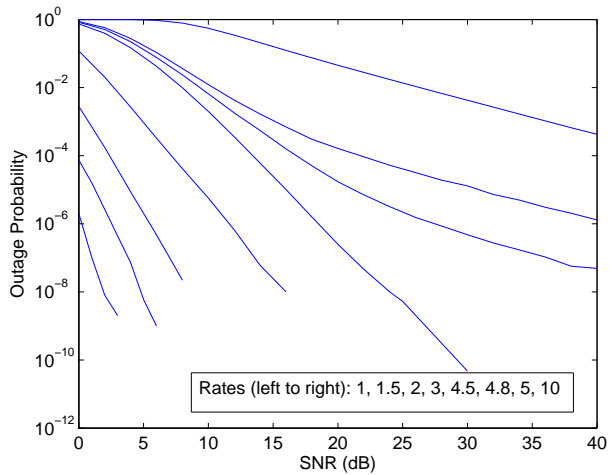}
\caption{Outage probability of MMSE Receiver, $M=N=3$ for R=1, 1.5, 2, 3, 4.5, 4.8, 5, 10 bps/Hz}
\label{Fig1}
\end{figure}

\begin{figure}
\centering
\includegraphics[width=3.5in]{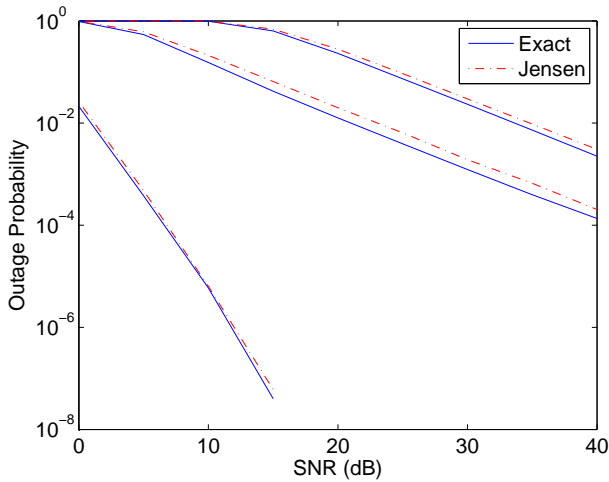}
\caption{Outage probability of MMSE Receiver, $M=N=2$ for R (left to
  right)= 1, 4, 10 bps/Hz}
\label{Fig2}
\end{figure}

\begin{figure}
\centering
\includegraphics[width=3.5in]{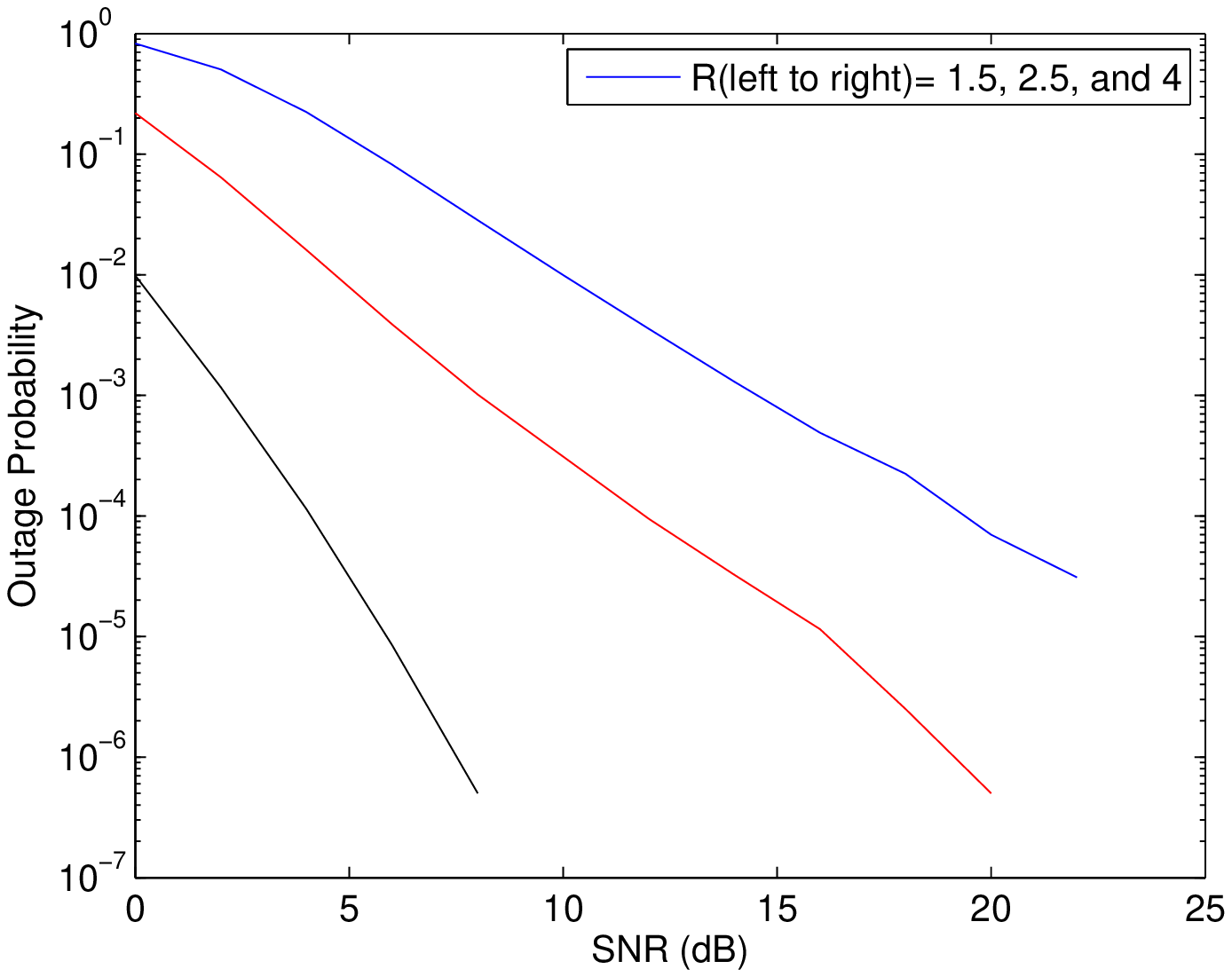}
\caption{Outage probability of MMSE Receiver, $M=2, \text{ and }N=3$
  for R (left to right)= 1.5, 2.5, 4 bps/Hz}
\label{Fig3}
\end{figure}

\begin{figure}
\centering
\includegraphics[width=3.5in]{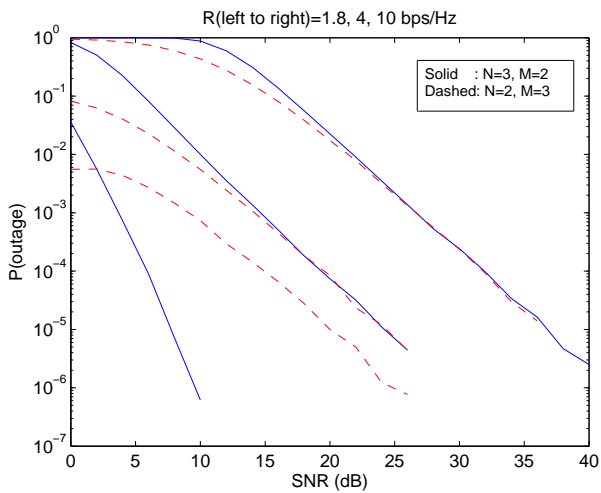}
\caption{Outage probability of MMSE Receiver for both cases $N>M$
  (solid) and $M>N$ (dashed). The spectral efficiency $R$ (left to right)= 1.8, 4, and 10 bps/Hz}
\label{Fig4}
\end{figure}

\section{Conclusion} 
\label{sec:Conclusion}

This paper settles the long standing problem of the diversity of the
MMSE MIMO receivers under all fixed rates for any number of transmit
($M$) and receive ($N$) antennas, giving the result as $d=\lceil M
2^{-\frac{R}{M}}-(M-N)^+\rceil^{2}+ |N-M|\lceil M
2^{-\frac{R}{M}}-(M-N)^+\rceil$. The analysis confirms the earlier
approximate results~\cite{Hedayat:JSP07,Kumar:JIT09} showing that the
system diversity can be as high as $MN$ for low spectral efficiency
and as low as $N-M+1$ for high spectral efficiency. The result is
extended to the multiple access channel (MAC). We also analyze the
case of frequency-selective MIMO channel under cyclic-prefix and
zero-padding transmission, and obtain the explicit tradeoff between
rate and diversity.

\appendix

\subsection{Proof of Lemma~\ref{lemma:Div_CP_ZP_ISI}}\label{appendix:CP_ZP_ISI_APP}

Consider a single-antenna ISI channel $\mathbf{h} =[h_0, \dots ,
  h_{\nu}]$, where $\nu$ is channel memory. The transmitter sends a
block of $L_d+\nu$ symbols (i.e. the extension $L_e=\nu$), the last
$\nu$ symbols of which are zeros to remove the inter-block
interference. 
 The system model is given by
\begin{equation}
\by=\mathbf{\bHeq}\bx+\bn \label{eq:ISIModel}
\end{equation}
where $\bx$ is the transmitted length-$(L_d+\nu)$ vector. We consider
the case where the padding length is equal to the memory of the
channel. The results are also valid for $L_e>\nu$ as a direct result
of~\cite[Theorem 2]{Tajer:WCOM10}.

  The outage probability
of MMSE receiver under ZP transmission is given by~\cite{Tajer:WCOM10}
\begin{align}
P_{out}&=\prob \big( \frac{1}{L_d} \sum_{k=1}^{L_d} \log
(\frac{1}{(\bI+\rho\bHeq^H\bHeq)^{-1}_{kk}}) < R \big)
\nonumber\\
&\leqslant\prob \big( \frac{1}{L_d} \sum_{k=1}^{L_d} \log
(1+\frac{\rho}{(\bHeq^H\bHeq)^{-1}_{kk}}) < R \big)
\label{eq:ZFB}\\ 
&\leqslant \prob \big( \log \frac{1}{L_d}\sum_{k=1}^{L_d}
\frac{1}{\rho}(\bHeq^H\bHeq)^{-1}_{kk} > -R
\big) \label{eq:Jen}\\
&= \prob \big( \frac{L_d \hspace{2pt} 2^{-R}}{\text{tr}
(\bHeq^H\bHeq)^{-1}} < \rho^{-1}
\big) \label{eq:ZFBound}
\end{align}
where~\eqref{eq:ZFB} represents the outage probability of zero-forcing
equalizer which upper bounds that of the MMSE. The bound
in~\eqref{eq:Jen} follows from Jensen's inequality.

We want to show that $ {\text{ tr}} (\bHeq^H\bHeq)^{-1}$
in~\eqref{eq:ZFBound} is proportional to $||{\bf h}||^{-2}$. Thus it is
straightforward to obtain full-diversity at any $R$
since~\cite{Zheng:JIT03}
\begin{equation}
 \prob \big ( c \hspace{2pt}||{\bf h}||^2  < \rho^{-\alpha} \big)\dot{=} \rho^{-L\alpha} \label{eq:asymtot}
\end{equation}
where $c$ is a constant that is independent of ${\bf h}$.

To show that this is indeed the case, we use the result of
Tepedelenlioglu~\cite{Tepedel:ICASSP03,Tepedel:JIT04} which provides a
family of linear zero-forcing equalizers that is capable of achieving
full multipath diversity in zero-padded systems under certain
constraints. We paraphrase the result  for convenience.

\begin{lemma}[\cite{Tepedel:ICASSP03,Tepedel:JIT04}]\label{lemma:Tepedel}
Under zero-padded transmission, there exists a family of left-inverses
of $\bHeq$, denoted by ${\bf G}$, such that $||{\bf G}||^{-2}
\geqslant C ||\bh||^2$ for some constant $C$ independent of the
channel vector $\bh$. Moreover, we have $||\bW_{ZF}|| \leqslant ||{\bf
  G}||$, for any ${\bf G}$ satisfying ${\bf G} \bHeq=\bI$, and
$\bW_{ZF}$ is given by
\begin{equation}
 \bW_{ZF} = (\bHeq^{H}\bHeq)^{-1}\bHeq^{H}. \label{eq:ZFEqMod}
\end{equation}
\end{lemma}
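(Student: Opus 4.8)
The plan is to prove Lemma~\ref{lemma:Tepedel} by separating its two assertions. The ``moreover'' part is pure linear algebra. First I would note that, with zero padding, the $(L_d+\nu)\times L_d$ matrix $\bHeq$ has full column rank for \emph{every} nonzero $\bh$: its columns are successive shifts of $\bh$, so a nontrivial dependence among them would force $c(z)h(z)\equiv 0$ for a nonzero polynomial $c(z)$, impossible since $h(z)=\sum_i h_i z^i\not\equiv 0$. Hence $\bHeq^{H}\bHeq$ is invertible and $\bW_{ZF}=(\bHeq^{H}\bHeq)^{-1}\bHeq^{H}$ is the Moore--Penrose pseudoinverse $\bHeq^{\dagger}$. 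For any other left inverse ${\bf G}$, write ${\bf G}=\bHeq^{\dagger}+\mathbf{K}$ with $\mathbf{K}\bHeq=\b0$; then $\mathrm{tr}(\bHeq^{\dagger}\mathbf{K}^{H})=\mathrm{tr}\big((\bHeq^{H}\bHeq)^{-1}(\mathbf{K}\bHeq)^{H}\big)=0$, so the cross term vanishes and $\|{\bf G}\|_F^{2}=\|\bHeq^{\dagger}\|_F^{2}+\|\mathbf{K}\|_F^{2}\ge\|\bHeq^{\dagger}\|_F^{2}$; the analogous minimality holds for the spectral norm since the pseudoinverse is the minimum-spectral-norm left inverse. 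This is exactly $\|\bW_{ZF}\|\le\|{\bf G}\|$.

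For the existence of a left inverse with $\|{\bf G}\|^{-2}\ge C\|\bh\|^{2}$, it is enough to take ${\bf G}=\bW_{ZF}$ and invoke homogeneity. Replacing $\bh$ by $t\bh$ with real $t>0$ replaces $\bHeq$ by $t\bHeq$ and hence $\bHeq^{\dagger}$ by $t^{-1}\bHeq^{\dagger}$; writing $\bh=\|\bh\|\hat\bh$ with $\|\hat\bh\|=1$ gives $\|\bHeq^{\dagger}(\bh)\|=\|\bh\|^{-1}\|\bHeq^{\dagger}(\hat\bh)\|$. By the full-column-rank fact above, the map $\hat\bh\mapsto\|\bHeq^{\dagger}(\hat\bh)\|=1/\sigma_{\min}(\bHeq(\hat\bh))$ is continuous and strictly positive on the compact unit sphere, hence bounded above by a finite $C'$ depending only on $L_d$ and $\nu$, not on $\bh$. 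Therefore $\|\bW_{ZF}\|^{2}\le (C')^{2}\|\bh\|^{-2}$, i.e.\ $\|\bW_{ZF}\|^{-2}\ge (C')^{-2}\|\bh\|^{2}$, which is the claim with $C=(C')^{-2}$; any gap between spectral and Frobenius norm only introduces a further $L_d,\nu$-dependent factor absorbed into $C$.

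It is worth noting this also recovers the \emph{constructive} reading (an explicit ``family'' ${\bf G}$): taking $d^{\star}=\arg\max_d|h_d|$ gives $|h_{d^{\star}}|\ge\|\bh\|/\sqrt{\nu+1}$, and the windowed one-tap zero-forcing equalizer anchored at tap $d^{\star}$ is an explicit left inverse of $\bHeq$ whose norm is $\le 1/|h_{d^{\star}}|\le\sqrt{\nu+1}/\|\bh\|$, giving an explicit constant. For context, the surrounding argument then uses $\mathrm{tr}(\bHeq^{H}\bHeq)^{-1}=\|\bW_{ZF}\|_F^{2}$ (because $\bW_{ZF}\bW_{ZF}^{H}=(\bHeq^{H}\bHeq)^{-1}$), so $\mathrm{tr}(\bHeq^{H}\bHeq)^{-1}\le c\,\|\bh\|^{-2}$ for a constant $c$; substituting into~\eqref{eq:ZFBound} and applying~\eqref{eq:asymtot} with $L=\nu+1$ yields $P_{out}\;\dot{\leqslant}\;\rho^{-(\nu+1)}$, and the matching converse ($\nu+1$ channel degrees of freedom) gives $d^{ZP}_{MMSE}=\nu+1$.

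The step I expect to need the most care is not any calculation but the uniformity bookkeeping: confirming that ``$\bHeq$ has full column rank for every nonzero $\bh$'' genuinely makes $1/\sigma_{\min}(\bHeq(\cdot))$ continuous, hence bounded, on the unit sphere --- so that $C$ is truly channel-independent, even though it depends on $L_d$ and $\nu$ --- and that passing between spectral and Frobenius norms costs only a dimension-dependent constant. Everything else is routine.
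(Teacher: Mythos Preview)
The paper does not actually prove this lemma: it is quoted from \cite{Tepedel:ICASSP03,Tepedel:JIT04} and used as a black box in the proof of Lemma~\ref{lemma:Div_CP_ZP_ISI}. So there is no ``paper's proof'' to compare against, and your self-contained argument is a genuine addition.

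Your main line of proof is correct and clean. The full-column-rank claim for the tall ZP convolution matrix follows exactly from the polynomial identity you wrote (a nontrivial null combination would force $c(z)h(z)\equiv 0$), and once that holds for every nonzero $\bh$, the map $\hat\bh\mapsto 1/\sigma_{\min}(\bHeq(\hat\bh))$ is continuous and finite on the compact unit sphere, hence bounded; homogeneity then gives $\|\bW_{ZF}\|\le C'/\|\bh\|$ with $C'$ depending only on $(L_d,\nu)$. This in fact proves slightly more than the lemma asks: you show the \emph{canonical} left inverse $\bW_{ZF}$ already satisfies the bound, whereas the cited references build an explicit (channel-dependent) ${\bf G}$ and then invoke the minimality of $\bW_{ZF}$ among left inverses. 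Your Frobenius-norm minimality argument for the ``moreover'' clause is the standard one and is fine; the spectral-norm version also holds for the reason you indicate.

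One caveat: the parenthetical ``constructive'' remark is not right as stated. A one-tap equalizer anchored at $d^{\star}$---i.e.\ selecting rows $d^{\star}+1,\ldots,d^{\star}+L_d$ of the received vector and dividing by $h_{d^{\star}}$---is \emph{not} a left inverse of $\bHeq$ unless the channel happens to be a single tap; the residual ISI from the other taps does not vanish, so ${\bf G}\bHeq\neq \bI$. The explicit constructions in \cite{Tepedel:ICASSP03,Tepedel:JIT04} are more elaborate than a one-tap scaling. Since your compactness argument already delivers the bound with ${\bf G}=\bW_{ZF}$, this aside is unnecessary and can simply be dropped; it does not affect the validity of your proof.
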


Applying the ZF equalizer $\bW_{ZF}$ on the channel output given by~\eqref{eq:ISIModel} we get the equalized signal $\tilde{\by}=\bx+\bz$, where $\bz=\bW_{ZF}\bn$. 
The filtered noise power $P_z$ can be evaluated as
\begin{align}
 P_z &=\expect \hspace{2pt} \text{tr} [\bz\bz^H] \nonumber \\
     &=  \text{tr} \big[\expect ( (\bHeq^H\bHeq)^{-1}\bHeq^H \bn\bn^H \bHeq(\bHeq^H\bHeq)^{-1})  \big] \nonumber\\
&=\text{tr} [ (\bHeq^H\bHeq)^{-1}] \label{eq:noiseVar1}
\end{align}
where we assume the noise is uncorrelated and has variance equal to
one.

Using the properties of the Frobenius norm, $P_z$ can be bounded as
\begin{align}
 P_z &=\expect (||\bW_{zf}\bn||^2) \nonumber \\
     &\leqslant \expect(||\bW_{zf}||^2 ||\bn||^2 ) = L_d  ||\bW_{zf}||^2.  \label{eq:noiseVar2}
\end{align}

Using~\eqref{eq:noiseVar1}, \eqref{eq:noiseVar2} and Lemma~\ref{lemma:Tepedel},  the trace in~\eqref{eq:ZFBound} can be bounded by
\begin{equation}
  \text{tr} [ (\bHeq^H\bHeq)^{-1}] 
\leqslant L_d  ||\bW_{zf}||^2 \leqslant \frac{L_d} { C \hspace{2pt} ||{\bf h}||^2}. \label{eq:NoiseBound}
\end{equation}

Thus from~\eqref{eq:ZFBound} we have
\begin{align}
P_{out} &\dot{\leqslant} \prob \big( C_2||{\bf h}||^2 < \rho^{-1}
\big)\nonumber\\ 
&\dot{=} \quad \rho^{-(\nu+1)}. \label{eq:ZFZPB}
\end{align}
where $C_2=C\;2^{-R}$ is a constant independent of $\bh$ and $\rho$.

Note that the constraints and construction methods
in~\cite{Tepedel:ICASSP03,Tepedel:JIT04} for the zero-forcing
equalizers to achieve full multipath diversity in ZP
systems do not apply in CP systems. That is, Lemma~\ref{lemma:Tepedel}
is not true for CP transmission. This is because the equivalent
channel in CP systems does not have the same properties that were used
in~\cite{Tepedel:ICASSP03,Tepedel:JIT04}.


\subsection{Proof of Lemma~\ref{Lemma:CP_Optimization}:(QIP Problem)}\label{appendix:QIP}

Consider the following Quadratic Integer Programming (QIP) problem
\begin{align}
\min_{n_1,n_2,\dots,n_{\ell}} \quad& \sum_{k=1}^{\ell} n_k^2 \label{eq:ZP_Optim2}\\
\text{subject to }  &\sum_{k=1}^{\ell} n_k =\Omega \nonumber\\
&n_k\ge 0. \nonumber
\end{align}
where  $\Omega$ and $\ell$ are integers. 

 Consider a candidate solution vector $[n_1, \ldots,n_k, \ldots,
   n_{\ell}]$. We partition the variables in this vector according to
 their values into $\Omega+1$ sets $\mathcal{N}_j = \{ n_k : n_k =
 j\}$ for $0\le j \le \Omega$; clearly some of these sets may be
 empty. Denote the membership of each set $S_j = |\mathcal{N}_j|$.
 Furthermore, let $\Omega = m \ell + K$ where $m$ is the divisor and
 $K$ is the remainder of the division of $\Omega$ by $\ell$. From the constraint
in~\eqref{eq:ZP_Optim2} we have
\begin{equation}
\sum_{k=1}^{\ell}n_k= \sum_{j=0}^\Omega j S_j = m\ell+\sum_{j=0}^{\Omega} (j-m)S_j=m\ell+K\label{eq:AppenQIP}.
\end{equation} 
Evaluating the objective function:
\begin{align}
\sum_{k=1}^{\ell}n_k^2&=\sum_{j=0}^\Omega(m+j-m)^2S_j\nonumber\\
&=\ell m^2+2m\sum_{j=0}^\Omega (j-m)S_j+\sum_{j=0}^\Omega (j-m)^2S_j\nonumber\\
&=\ell m^2+2mK+\sum_{j=0}^\Omega (j-m)^2S_j\label{eq:QIPEq}\\
&\geqslant \ell m^2+2mK+\sum_{j=0}^\Omega (j-m)S_j\label{eq:QIPBound}\\
&=\ell m^2+2mK+K\label{eq:QIPBound2}
\end{align}
where \eqref{eq:QIPEq} and \eqref{eq:QIPBound2} use
$\sum_{j=0}^{\Omega} (j-m)S_j=K$, which follows from \eqref{eq:AppenQIP}.

We now propose that one may achieve optimality when all variables take
values either $m$ or $m+1$. In that case,
\begin{align}
\sum_kn_k&=m S_m +(m+1)(\ell-S_m)=m \ell+(\ell-S_m)\nonumber\\
\sum_kn_k^2&=m^2 S_m +(m+1)^2(\ell-S_m)=\ell m^2+2mK+K\nonumber.
\end{align}
where we substituted the value of $\ell-S_m$ from the first equation
into the second equation above. This shows that the variables taking
values $m$ or $m+1$ achieves the lower bound in~\eqref{eq:QIPBound2}. At optimality $S_m = (m+1)\ell -\Omega$.


\subsection{Proof of Lemma~\ref{Lemma:CPSIMOPadding}}\label{appendix:CPPAdding}

We begin by showing that for any integer multiplier of $L_{d_1}=\nu+1$
denoted by $L_{d_2}=TL_{d_1}$ ($T\in \mathbb{N}$) and any real-valued
$m\in(0,L_{d_1})$, we have
\begin{equation}
 \prob \bigg(
\sum_{q=1}^{L_{d_1}}\frac{1}{(1+\rho \lambda_{q})} > m
\bigg) \doteq \prob \bigg(\sum_{q=1}^{L_{d_2}}\frac{1}{(1+\rho \lambda_{q})} > m
\bigg)  \label{eq:EqProof}
\end{equation}

Note that for SIMO-CP system, $\lambda_q={\mathbf b}_q^H{\mathbf b}_q$, where
${\mathbf b}_q$ is the $N\times 1$ vector given by
\begin{equation}
{\mathbf b}^{(i)}_q=\sum_{n=0}^{\nu}\bh_n \, e^{-j\frac{2\pi (q-1)}{L_{d_i}}}\quad {\text{for }} q=1,\dots,L_{d_i} \label{eq:LambdaEq2}
\end{equation}
where $\bh_n$ is the channel gain as a function of the tap
index $n$, and the superscript $i=1,2$ is used to distinguish the variables
in two systems with data block lengths $L_{d_1}$ and $L_{d_2}$.

Recall that we can take a $L_{d_1}$-point signal and apply a
$L_{d_2}$-point DFT on it (after zero-padding), which will result in a
resampling in the Fourier domain at $L_{d_2}$
points. Following~\cite{Tajer:WCOM10} we can write the explicit
relationship between entries of ${\mathbf b}^{(1)}$ and ${\mathbf b}^{(2)}$ as
\begin{equation}
{b}^{(1)}_{q,l}=\sum_{i=1}^{L_{d_1}} {b}^{(2)}_{i,l} \psi_i \quad  q=1,2\dots,L_{d_2}\text{ and } l=1,2,\dots,N\label{eq:DFTRelation}.
\end{equation}
where
\begin{equation}
 \psi_i=\frac{1}{L_{d_1}}\frac{1-e^{-j\frac{(q-1)2\pi L_{d_1}}{L_{d_2}}}}{1-e^{-j\big( \frac{2\pi(q-1)}{L_{d_2}} - \frac{2\pi(i-1)}{L_{d_1}}  \big)}} \nonumber.
\end{equation}

Define
$\alpha^{(i)}_{q,l}=\-\frac{\log{ |b^{(i)}_{q,l}|^2}}{\log{\rho}}$.
 Note that
$b^{(1)}_{T(q-1),l}=b^{(2)}_{q,l}$ and
$\alpha^{(2)}_{T(q-1),l}=\alpha^{(1)}_{q,l}$ for $q=1,2\dots,L_{d_1}$
since $L_{d_2}=TL_{d_1}$ .
From~\eqref{eq:DFTRelation}, we have
\begin{equation}
|{b}^{(1)}_{q,l}|^2=\sum_{i=1}^{L_{d_1}}|\psi_i|^2 |{b}^{(2)}_{i,l}|^2 + \underbrace{
\sum_{i=1}^{L_{d_1}} \sum_{s=1}^{L_{d_1}} \psi_i \psi_s {b}^{(2)}_{i,l} {b}^{*(2)}_{s,l}
}_{\triangleq \,\,\eta}.
\label{eq:DFTComp}
\end{equation}

We now analyze each part of the sum in~\eqref{eq:DFTComp}. For the set
of indices $\mathcal{A}\triangleq\{i:i=T(k-1)+1,k=1,\dots,L_{d_1}\}$,
the coefficients $\{ \psi_i \}$ are non-zero constants, then
$|\psi_i|^2 |b^{(2)}_{i,l}|^2 \doteq |b^{(2)}_{i,l}|^2 \quad\forall
l$.  Noting that $\eta$ must be real-valued, and defining
$\alpha_{\eta}\triangleq -\frac{\log{|\eta|}}{\log \rho}$,
Eq.~\eqref{eq:DFTComp} can be written as
\begin{align}
\rho^{-\alpha^{(2)}_{q,l}} &\dot{=} \sum_{i=1}^{L_{d_1}} \rho^{-\alpha^{(1)}_{i,l}}+\frac{\eta}{|{\eta}|}\rho^{-\alpha_{\eta}} \nonumber\\
 &\dot{=} \rho^{-\min_{i} \alpha^{(1)}_{i,l}}+\frac{\eta}{|{\eta}|}\rho^{-\alpha_{\eta}} \label{eq:minalpha}.
\end{align}

Note that if $\eta < 0$ the second term in~\eqref{eq:minalpha} should
be smaller than the first term since otherwise the right-hand side
of~\eqref{eq:minalpha} will be negative while the left-hand side is
positive. Thus for $\eta <0$ we have $\alpha_\eta \geqslant \min_i
\alpha^{(1)}_{i.l}$. Also, for $a\geqslant 0$ we have $\rho^{-\min_{i}
  \alpha^{(1)}_{i,l}}+\frac{\eta}{|{\eta}|}\rho^{-\alpha_{\eta}}
\dot{\geqslant} \rho^{-\min_i \alpha^{(1)}_{i,l}}$. Thus we always have  $\rho^{-\min_{i}
  \alpha^{(1)}_{i,l}}+\frac{\eta}{|{\eta}|}\rho^{-\alpha_{\eta}}
\dot{\geqslant} \rho^{-\min_i \alpha^{(1)}_{i,l}}$, leading to the following lemma.
\begin{lemma} \label{lemma:Tajer}
For $\alpha^{(1)}_{q,l}$ and $\alpha^{(2)}_{q,l}$ defined above we have:
$\rho^{-\alpha^{(2)}_{q,l}}\dot{\geqslant} \rho^{-\min_i{\alpha^{(1)}_{i,l}}} \Rightarrow \alpha^{(2)}_{q,l} \leqslant {\min_i{\alpha^{(1)}_{i,l}}} $ for $q\in \mathcal{A}$.
\end{lemma}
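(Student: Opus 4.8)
The claim will follow by reading off the exponential order of $\rho^{-\alpha^{(2)}_{q,l}}=|b^{(2)}_{q,l}|^2$ from the expansion~\eqref{eq:minalpha}, whose right-hand side is a dominant ``diagonal'' term of order $-\min_i\alpha^{(1)}_{i,l}$ plus a real cross term $\tfrac{\eta}{|\eta|}\rho^{-\alpha_\eta}$ which, as I will argue, cannot drag the overall order below $-\min_i\alpha^{(1)}_{i,l}$. Since for two powers of $\rho$ the relation $\dot{\geqslant}$ is by definition the reverse of the inequality between exponents, establishing $\rho^{-\alpha^{(2)}_{q,l}}\dot{\geqslant}\rho^{-\min_i\alpha^{(1)}_{i,l}}$ is the same as establishing $\alpha^{(2)}_{q,l}\leqslant\min_i\alpha^{(1)}_{i,l}$, so the entire argument reduces to controlling $\eta$ for $q\in\mathcal{A}$.

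First I would record that, because $q\in\mathcal{A}$, the weights $\psi_i$ in~\eqref{eq:DFTRelation} are nonzero constants with respect to $\rho$, so the diagonal part of~\eqref{eq:DFTComp} obeys $\sum_i|\psi_i|^2|b^{(1)}_{i,l}|^2\doteq\sum_i\rho^{-\alpha^{(1)}_{i,l}}\doteq\rho^{-\min_i\alpha^{(1)}_{i,l}}$; the last step holds because a finite sum of nonnegative powers of $\rho$ inherits the order of its largest summand and no cancellation is possible among nonnegative terms. This is the term written $\rho^{-\min_i\alpha^{(1)}_{i,l}}$ in~\eqref{eq:minalpha}.

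Next I split on the sign of $\eta$. If $\eta\geqslant 0$, then $|b^{(2)}_{q,l}|^2\geqslant\sum_i|\psi_i|^2|b^{(1)}_{i,l}|^2$, so $\rho^{-\alpha^{(2)}_{q,l}}\dot{\geqslant}\rho^{-\min_i\alpha^{(1)}_{i,l}}$ at once. If $\eta<0$, I use the nonnegativity $|b^{(2)}_{q,l}|^2\geqslant 0$, which forces $|\eta|\leqslant\sum_i|\psi_i|^2|b^{(1)}_{i,l}|^2$, i.e. $\alpha_\eta\geqslant\min_i\alpha^{(1)}_{i,l}$; whenever this holds strictly the subtracted term is negligible and $|b^{(2)}_{q,l}|^2\doteq\rho^{-\min_i\alpha^{(1)}_{i,l}}$, once more giving the desired $\dot{\geqslant}$. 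Applying $\log(\cdot)/\log\rho$ and letting $\rho\to\infty$ then turns $\rho^{-\alpha^{(2)}_{q,l}}\dot{\geqslant}\rho^{-\min_i\alpha^{(1)}_{i,l}}$ into $\alpha^{(2)}_{q,l}\leqslant\min_i\alpha^{(1)}_{i,l}$, which is the lemma.

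The delicate case, and the one I would treat most carefully, is $\eta<0$ with $\alpha_\eta=\min_i\alpha^{(1)}_{i,l}$, where a priori the cross term could cancel the leading contribution of the diagonal term and lower the order of $|b^{(2)}_{q,l}|^2$. This requires $\min_i\alpha^{(1)}_{i,l}$ to be attained at two or more indices: $|\eta|$ is dominated by a product of the two largest among the $|b^{(1)}_{i,l}|$ and hence has order at most $-\tfrac{1}{2}\big(\alpha^{(1)}_{i^*,l}+\alpha^{(1)}_{i^{**},l}\big)$, which is strictly below $-\min_i\alpha^{(1)}_{i,l}$ unless two of the $\alpha^{(1)}_{i,l}$ coincide at the minimum. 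Such coincidences form a lower-dimensional subset of the region of exponents over which the outage integral is computed and therefore do not change its exponential order, so for the diversity computation one may simply take $|\eta|=o\big(\sum_i|\psi_i|^2|b^{(1)}_{i,l}|^2\big)$ and conclude $|b^{(2)}_{q,l}|^2\doteq\rho^{-\min_i\alpha^{(1)}_{i,l}}$ directly --- which, incidentally, makes the sign split unnecessary.
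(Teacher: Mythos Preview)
Your argument is essentially the paper's own: both split on the sign of $\eta$, use nonnegativity of $|b^{(2)}_{q,l}|^2$ to force $\alpha_\eta\geqslant\min_i\alpha^{(1)}_{i,l}$ in the $\eta<0$ case, and then read the exponent inequality off from the $\dot{\geqslant}$ relation. Your final paragraph on the borderline $\alpha_\eta=\min_i\alpha^{(1)}_{i,l}$ goes beyond the paper, which simply asserts the $\dot{\geqslant}$ without explicitly addressing possible cancellation; the measure-zero argument you give is a reasonable way to dispose of it.
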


We now partition the DFT points into two sets
$\mathcal{A}=\{T(i-1)+1,i=1,\dots,L_{d_1}\}$ and
$\mathcal{B}=\{1,\dots,L_{d_2} \} \backslash \{
T(i-1)+1,i=1,\dots,L_{d_1} \}$ We now define the event:
\[
\mathcal{D} \defeq \{ \min_i\alpha^{(1)}_{i,1} <1 \; , \; \min_i\alpha^{(1)}_{i,2} <1\; , \; \dots \; ,\;
\min_i\alpha^{(1)}_{i,N} <1 \}
\] 
and proceed to evaluate the probability
\begin{align}
 &\prob \bigg( \sum_{q=1}^{L_{d_2}}\frac{1}{(1+\rho \lambda_{q})} > m
 \bigg) = \prob \bigg( \sum_{q=1}^{L_{d_2}}  \frac{1}{1+\rho \sum_{l=1}^{N} |b^{(1)}_{q,l}|^2} > m
 \bigg)\label{eq:Appen0}\\
&=  \prob \bigg( \sum_{q\in{\mathcal{A}}}  \frac{1}{1+\rho \sum_{l=1}^{N} |b^{(1)}_{q,l}|^2} + \sum_{q\in\mathcal{B}}  \frac{1}{1+\rho \sum_{l=1}^{N} |b^{(1)}_{q,l}|^2} > m
 \bigg) \nonumber\\
&\doteq  \prob \bigg( S_1 + S_2 > m
 \bigg) \label{eq:VariStep}
\end{align}
where~\eqref{eq:Appen0} follows since $\lambda_q={\mathbf
  b}_q^H{\mathbf b}_q$ and $S_1$ and $S_2$ are given by
\begin{align}
S_1 &\triangleq\sum_{q=1}^{L_{d_1}}  \frac{1}{1+ \sum_{l=1}^{N} \rho^{1-\alpha^{(1)}_{q,l}}}\nonumber\\
S_2 &\triangleq  \sum_{q\in\mathcal{B}}  \frac{1}{1+ \sum_{l=1}^{N} \rho^{1-\alpha^{(2)}_{q,l}}}\nonumber
\end{align}

We now evaluate~\eqref{eq:VariStep}
\begin{align}
 &\prob \bigg( S_1 + S_2 > m
 \bigg)=  \prob \bigg(  S_1 + S_2  > m
 \; \bigg |\;  \mathcal{D} \bigg) \times \prob(\mathcal{D})\; + \nonumber\\
 &\hspace{100pt} \prob \bigg(  S_1 + S_2  > m
 \; \bigg | \; \bar{\mathcal{D}} \bigg)  \times \prob(\bar{\mathcal{D}}) \label{eq:Appen3} 
\end{align}

Note that subject to the event $\mathcal{D}$, we have
\[
S_2=\sum_{q\in\mathcal{B}}  \frac{1}{1+ \sum_{l=1}^{N} \rho^{1-\alpha^{(2)}_{q,l}}} \doteq 0
\]
Therefore this term can be asymptotically ignored. Also subject to $\bar{\mathcal{D}}$, we have
\[
S_1=\sum_{q=1}^{L_{d_1}}  \frac{1}{1+ \sum_{l=1}^{N} \rho^{1-\alpha^{(1)}_{q,l}} }\doteq L_{d_1}
\]
and since with probability one, $L_{d_1}\ge m$, the other (non-negative) term can be asymptotically ignored. Thus, both the terms involving the set $\mathcal{B}$ can be altogether ignored and we have:
\begin{align}
 \prob \bigg( \sum_{q=1}^{L_{d_2}}\frac{1}{(1+\rho \lambda_{q})} > m
  \bigg) &\doteq\prob \bigg( S_1 > m | \mathcal{D} \bigg) 
  \prob(\mathcal{D}) + \nonumber\\ &\qquad \prob \bigg( S_1 > m |
  \bar{\mathcal{D}} \bigg )  \prob(\bar{\mathcal{D}})\nonumber\\
&\doteq\prob \bigg( \sum_{q=1}^{L_{d_1}}\frac{1}{(1+\rho \lambda_{q})} > m
  \bigg)\nonumber
\end{align}

We have thus established~ \eqref{eq:EqProof} when $L_{d_1}|L_{d_2}$. We must now show that the same result holds for any $T'$ when $L_{d_1} \nmid T'$. To do so, let $L_{d_2}=T{'}L_{d_1}$, then we have
\begin{equation}
 \prob \bigg(
\sum_{q=1}^{L_{d_2}}\frac{1}{(1+\rho \lambda_{q})} > m
\bigg) \doteq \prob \bigg(\sum_{q=1}^{T{'}}\frac{1}{(1+\rho \lambda_{q})} > m
\bigg).  \label{eq:EqProof2}
\end{equation}
 
Using~\eqref{eq:EqProof} when $L_{d_1}|L_{d_2}$ and~\eqref{eq:EqProof2} when $T{'}|L_{d_2}$ together establishes~\eqref{eq:EqProof} for any two positive integers.


\bibliographystyle{IEEEtran}
\bibliography{IEEEabrv,AHeshamProp}




 \end{document}